\newcommand{\bmat}[1]{\begin{bmatrix}#1\end{bmatrix}}
\newcommand{\smat}[1]{\left[\begin{smallmatrix}#1\end{smallmatrix}\right]}
\newcommand{\smatNoB}[1]{\begin{smallmatrix}#1\end{smallmatrix}}
\newcommand{\As}{{A^\star}}
\newcommand{\Bs}{{B^\star}}
\newcommand{\Cs}{{C^\star}}
\newcommand{\ells}{{\ell^\star}}
\newcommand{\tu}[1]{\textup{#1}}
\newcommand{\real}{\mathbb{R}}
\newcommand{\dotsS}{\rotatebox{0}{\tiny\dots}}
\newcommand{\vdotsS}{\rotatebox{90}{\tiny\dots}}
\newcommand{\ddotsS}{\rotatebox{135}{\tiny\dots}}
\newcommand{\aug}{\rm{aug}}
\newenvironment{smallarray}[1]
 {\null\,\vcenter\bgroup\scriptsize
  \arraycolsep=.13885em
  \hbox\bgroup$\array{@{}#1@{}}}
 {\endarray$\egroup\egroup\,\null}
\DeclareMathOperator{\rank}{rank}
\DeclareMathOperator{\im}{im}
\DeclareMathOperator{\eig}{eig}
\newtheorem{assumption}{Assumption}
\newtheorem{problem}{Problem}
\newtheorem{theorem}{Theorem}
\newtheorem{proposition}{Proposition}
\newtheorem{lemma}{Lemma}
\newenvironment{proof}{{\itshape Proof.}}{\hspace*{0pt}\hfill$\qed$}
\newcounter{remarkcounter}
\newenvironment{remark}{\refstepcounter{remarkcounter}{\bfseries Remark~\arabic{remarkcounter}.}}{\hspace*{0pt}\hfill$\blacksquare$}
\edef\endfrontmatter{%
  \unexpanded\expandafter{\endfrontmatter}
  \noexpand\endNoHyper 
}
\begin{document}

\begin{frontmatter}

\title{Controller Synthesis from Noisy-Input Noisy-Output Data \thanksref{footnoteinfo}}

\thanks[footnoteinfo]{Corresponding author Nima Monshizadeh. The work of Lidong Li was supported by the Chinese Scholarship Council.}

\author[Gro]{Lidong Li}\ead{l.li@rug.nl},
\author[Mil]{Andrea Bisoffi}\ead{andrea.bisoffi@polimi.it},
\author[Gro]{Claudio De Persis}\ead{c.de.persis@rug.nl}, 
\author[Gro]{Nima Monshizadeh}\ead{n.monshizadeh@rug.nl}

\address[Gro]{Engineering and Technology Institute, University of Groningen, 9747AG, The Netherlands}
\address[Mil]{Department of Electronics, Information, and Bioengineering, Politecnico di Milano, 20133, Italy}

\begin{keyword}                           
linear time-invariant system, data-based control, analysis of systems with uncertainties, robust control, output feedback control, dynamic controller
\end{keyword}
\begin{abstract}
We consider the problem of synthesizing a dynamic output-feedback controller for a linear system, using solely input-output data corrupted by measurement noise. 
To handle input-output data, an auxiliary representation of the original system is introduced. 
By exploiting the structure of the auxiliary system, we design a controller that robustly stabilizes all possible systems consistent with data. 
Notably, we also provide a novel solution to extend the results to generic multi-input multi-output systems.
The findings are illustrated by numerical examples.
\end{abstract}
\end{frontmatter}

\section{Introduction}

Facilitated by recent technological advancements in sensor and measurement devices, along with the widespread availability of data, data-driven control has become an increasingly popular research avenue in recent years. 
Typically, data has been used to identify a mathematical model describing the dynamics of the system. 
Such a model can, in turn, be used for control and monitoring purposes. 
An alternative approach, often referred to as direct data-driven control, is to bypass the model and directly use the data to synthesize the control algorithm \cite{CAMPI2002Virtual, TANASKOVIC2017, formulas2020, Florian2019DeePC}.

One of the most classical problems in control theory is stabilizing a linear time-invariant system. In the realm of data-driven control, this raises the question of how to design stabilizing controllers directly from the data obtained from the system.
In practice, such data come from multiple actuators and sensors and are not completely accurate. This leads to the problem of stabilizing a {multi-input multi-output} (MIMO) linear system using \textit{noisy-input noisy-output} data. As stated below, there have been numerous valuable efforts to tackle this challenging problem by approaching it from different angles. Nevertheless, to the best of our knowledge, a complete answer to this problem is still absent from the literature. The main motivation of this work is to fill this gap.

\noindent\textit{Related works:}\\
The central problem posed above has two key challenges: (i) the use of \textit{input-output} data rather than state measurements and (ii) the presence of noise on both input and output channels. To cope with the first challenge, \cite{formulas2020} proposes to work with an \emph{auxiliary representation} of the system which is built using
{time-shifted} inputs and outputs of the system \cite{GrahamAdaptive2009}.  
This auxiliary representation enables the designer to take advantage of data-driven control techniques that are applicable to input-state data; see e.g., \cite{berberich2021combining, Jong2023output, makdah2023output, Dai2023IO, William2023output, Guanru2023output, jo2022IOpredictive}.
However, most of the aforementioned works efficiently address only single-input single-output or multiple-input single-output cases. 
This limitation primarily arises from the fact that the auxiliary representation of a {minimal} {MIMO} linear system is not necessarily reachable \cite[p. 566]{LinearSystems2005}.
To tackle the challenge posed by the absence of reachability, \cite{mohammad2023MIMO} and \cite{Tomonori2022} aim at extracting the reachable mode from the auxiliary representation. 
Nevertheless, the aforementioned developments are not readily applicable to the practically notable scenarios where noise is present on \textit{both} input and output channels, bringing us to the second challenge of the problem.

Besides the concern of unreachability of the auxiliary representation, the presence of noise in input-output data further complicates the task of control design. 
For the case of process noise, which acts as an additive disturbance to the system equations, we refer to \cite{Henk2023IO} and \cite{FrankDissipativity2022}. 
Measurement noise substantially increases the complexity since the system dynamics further intertwine the useful signal and detrimental noise in the collected data.
The case of noisy state measurements is treated in \cite{formulas2020, Miller2023ErrorVariable}. In fact, by deriving a matrix elimination result and suitably leveraging the Petersen's lemma \cite{petersen1987stabilization}, one can establish a necessary and sufficient condition for robust stabilization of systems consistent with noisy-input noisy-\emph{state} data \cite{stateinputerrors}. 
A treatment of the noisy-input noisy-\emph{output} case has only been recently reported in \cite{miller2023superstabilizing} for single-input single-output systems, using superstability, which is a stronger notion than asymptotic stability.

\textit{Contribution and proposed approach:} \\
As reviewed above, the main contribution of this work is synthesizing a stabilizing controller using noisy-input noisy-output data acquired from a generic MIMO system.

Intuitively, when the magnitude of the noise is large, the valuable information is sunk within the data, making it practically impossible to design a controller using such data. 
Motivated by this, we provide conditions under which the noisy input-output data remain useful for control design.

Given a linear system of minimal order $n$, we distinguish between two cases. 
The first case is $n=p\ell$ where $p$ is the number of outputs and $\ell$ is the observability index of the system. In this case, the auxiliary system can be reachable and, after separating the unknown blocks of the auxiliary representation from the known ones, we can make use of Petersen's lemma in the context of data-driven control \cite{AndreaPetersen2022} to design a data-driven controller through a convenient linear matrix inequality. 
This controller can stabilize the auxiliary representation and, by a painstaking investigation of the relationship between the input-output evolution of the original and the auxiliary system, we can conclude that the controller works for the original system as well.

The second case is $n \ne p\ell$, where the auxiliary system becomes unreachable and it turns out that the data-based conditions that are exploited in the first case do not hold. 
To address this challenge, we introduce a novel technique where we form an augmented system by a parallel interconnection of the original system with another linear system. By imposing suitable conditions on the added dynamics, we bring the setup back to the first case, thereby extending the results to general multi-input multi-output linear systems. 
In particular, we show that the data-driven controller designed for the augmented system together with the artificially added linear system serve as a stabilizing dynamic output-feedback controller for the original system.

\noindent \textit{Outline:} \\
Section~\ref{sec:Preliminaries} includes the required notions and definitions. The relationship between the input-output evolution of the original system and its auxiliary representation is also discussed in this section as preliminaries. Then we formulate the problem of interest in Section~\ref{sec:problem}. Section~\ref{sec:results_pells=n} presents our main results {for both cases of $n = p\ell$ and $n \neq p \ell$}. The results are verified by numerical examples\footnote{The code of the numerical examples is available at https://github.com/BG5CPU/DataDrivenNoisyIO} in Section~\ref{sec:NumericalExamples}.

\section{Preliminaries}\label{sec:Preliminaries}

\subsection{Notation}\label{sec:notation}

For column vectors $v_1$, \dots, $v_n$, $(v_1,\dots, v_n)$ denotes the stacked vector $[v_1^\top \ \dots \ v_n^\top]^\top$.
The identity matrix of size $n$ and the zero matrix of size $m \times n$ are $I_n$ and $0_{m \times n}$: the indices are dropped when no confusion arises.
The largest and smallest eigenvalues of a symmetric matrix $M$ are $\lambda_{\max}(M)$ and $\lambda_{\min}(M)$.
The largest and smallest singular values of a matrix $M$ are $\sigma_{\max}(M)$ and $\sigma_{\min}(M)$.
The set of eigenvalues of a matrix $M$ is $\eig(M)$.
The 2-norm of a vector $v$ is $| v |$.
The induced 2-norm of a matrix $M$ is $\|M\|$ and is given by $\sigma_{\max}(M)$. 
For a symmetric matrix $\left[\begin{smallmatrix} M & N \\ N^{\top} & O \end{smallmatrix}\right]$, we may use the shorthand writing $\left[\begin{smallmatrix} M & N \\ \star & O \end{smallmatrix}\right]$.
For a positive semidefinite matrix $M$, $M^{1/2}$ is the unique positive semidefinite square root of $M$.
For a discrete-time signal $h \colon \mathbb{Z} \to \real^n$, $\{ h(k) \}_{k=k_0}^{k_1}$ is the sequence of values $h(k_0)$, $h(k_0+1)$, \dots, $h(k_1)$, where $k_0 \le k_1$ and $k_1$ possibly $\infty$.
For the signal $\{ h(k) \}_{k=0}^{\infty}$, its $\mathcal{L}_{\infty}$ norm is $\| h \|_{\mathcal{L}_{\infty}} := \sup_{k \geq 0} | h(k) | $.
Given an $n$-dimensional state-space model $x^{+} = Ax + Bu$, its reachability subspace is as $R(A,B):=\im [ A^{n-1}B \ \cdots \ AB \ B]$.

\subsection{The auxiliary system and its properties}

Consider a generic discrete-time LTI system
\begin{subequations}
\label{sys_x}
\begin{align}
x^+ & = A x + B u\\
y & = C x
\end{align}
\end{subequations}
where $x \in \real^{n}$, $u \in \real^{m}$ and $y \in \real^{p}$.
If the pair $(A, C)$ is observable, there exist $l$ such that
\begin{align*}
\rank \smat{C \\ C A  \\ \vdotsS \\ C A^{l-1}} = n;
\end{align*}
the minimum $l$ for which this rank condition holds is denoted by $\ell$, which is called the observability index (of pair $(A,C)$) \cite[p.~356-357]{kailath1980linear}.
Then, if the pair $(A, C)$ is observable, we can define the matrices
\begin{subequations}
\label{O_ell_T_ell_R_ell}
\begin{align}
\mathcal{O}_\ell & := 
\bmat{
C\\
CA\\
\vdots\\
CA^{\ell-1}
} \in \mathbb{R}^{p\ell\times n}, \label{O_ell}\\
\mathcal{T}_\ell & :=
\bmat{
0_{p\times m} & 0 & \dots & 0 & 0\\
CB                     & 0 & \dots & 0 & 0\\
CAB                    & CB                     & \dots & 0 & 0\\
\vdots                 & \vdots                 & \ddots &  \vdots & \vdots\\
CA^{\ell -2} B         & CA^{\ell -3} B         & \dots & CB &0_{p\times m}
} \in \mathbb{R}^{p\ell\times m\ell}, \label{T_ell}\\
\mathcal{R}_\ell & := \bmat{ A^{\ell-1} B & \ldots & A B & B } \in \real^{n \times m \ell}. \label{R_ell}
\end{align}
\end{subequations}
Since $\rank \mathcal{O}_\ell =n$ by construction, $\mathcal{O}_\ell$ possesses a left inverse $\mathcal{O}_\ell^{\tu{L}}$ that satisfies $\mathcal{O}_\ell^{\tu{L}} \mathcal{O}_\ell = I_n$.
Throughout the paper, we are also interested in the \emph{auxiliary system}
\begin{equation}
\label{sys_xi}
\xi^+ = \mathbf{A}_\ell \xi + \mathbf{B}_\ell v := (\mathbf{F}_\ell + \mathbf{L}_\ell Z_\ell) \xi + \mathbf{B}_\ell v
\end{equation}
where we define, for $\ell$ as above,
\begin{subequations}
\label{sys_xi:A0_L_B_Z}
\begin{align}
& \bmat{
\mathbf{F}_\ell & \mathbf{L}_\ell & \mathbf{B}_\ell
} := \label{sys_xi:A0_L_B} \\
& \bmat{
\left[ \begin{array}{c|c}
\begin{smallmatrix} 
0 & I_p & 0 & \cdots & 0 \\ 
0 & 0 & I_p & \cdots & 0 \\ 
\vdotsS & \vdotsS & \vdotsS & \ddotsS & \vdotsS \\ 
0 & 0 & 0 & \cdots & I_p \\ 
0 & 0 & 0 & \cdots & 0
\end{smallmatrix} 
& 0_{p \ell \times m \ell} \\ 
\hline 
0_{m \ell \times p \ell} & 
\begin{smallmatrix} 
0 & I_m & 0 & \cdots & 0 \\ 
0 & 0 & I_m & \cdots & 0 \\ 
\vdotsS & \vdotsS & \vdotsS & \ddotsS & \vdotsS \\ 
0 & 0 & 0 & \dotsS & I_m \\ 
0 & 0 & 0 & \dotsS & 0
\end{smallmatrix} 
\end{array} \right]
&
\left[%
\begin{array}{c}
\begin{smallmatrix} 
0 \\[2pt] 
0 \\[1pt] 
\vdotsS \\[1pt] 
0 \\[1pt]
I_p 
\end{smallmatrix} \\[1pt] 
\hline  
\begin{smallmatrix} 
0 \\[1pt]  
0 \\[1pt] 
\vdotsS \\[1pt] 
0 \\[1pt] 
0
\end{smallmatrix} 
\end{array} \right]
&
\left[ \begin{array}{c}
\begin{smallmatrix} 
0 \\[2pt] 
0 \\[1pt] 
\vdotsS \\[1pt] 
0 \\[1pt]
0\rotatebox{180}{\rule{0pt}{2.5pt}}
\end{smallmatrix} \\ 
\hline  
\begin{smallmatrix} 
0 \\[1pt] 
0 \\[1pt] 
\vdotsS \\[1pt] 
0 \\[1pt] 
I_m
\end{smallmatrix} 
\end{array} \right]
} \notag 
\end{align}
and, for $(A,B,C)$ in~\eqref{sys_x} and $\mathcal{O}_\ell$, $\mathcal{T}_\ell$, $\mathcal{R}_\ell$ in~\eqref{O_ell_T_ell_R_ell},%
\begin{align}
\label{sys_xi:Z}
& 
Z_\ell \! := \!
\bmat{Z_{1\ell} & Z_{2\ell}} \! := \!
\bmat{C A^\ell \mathcal{O}_\ell^{\tu{L}} & & C \mathcal{R}_\ell- C A^\ell \mathcal{O}_\ell^{\tu{L}} \mathcal{T}_\ell}.
\end{align}
\end{subequations}
Note that from~\eqref{sys_xi} and \eqref{sys_xi:A0_L_B_Z},
\begin{align}
\label{sys_xi:Aell}
\mathbf{A}_\ell  & = 
\left[
\begin{array}{c|c}
\smatNoB{
0 & I_p & 0 & \dots & 0\\
0 & 0  & I_p & \dots & 0\\
\vdotsS & \vdotsS & \vdotsS & \ddotsS & \vdotsS \\
0 & 0  & 0 & \dots & I_p
}
& 
\smatNoB{
0 & 0& 0& \ldots & 0\\
0 & 0& 0& \ldots & 0\\
\vdotsS & \vdotsS & \vdotsS & \ddotsS & \vdotsS \\
0 & 0& 0& \ldots & 0
}\\
\smatNoB{Z_{1\ell}}
&
\smatNoB{Z_{2\ell}} \\
\hline
0
& 
\smatNoB{
0 & I_m & 0& \ldots & 0\\
0 & 0 & I_m& \ldots & 0\\
\vdotsS & \vdotsS & \vdotsS & \ddotsS & \vdotsS \\
0 & 0 & 0& \ldots & I_m\\
0 & 0 & 0& \ldots & 0
}
\end{array}
\right], \notag \\
& = 
\left[
\begin{array}{c|c}
\smatNoB{
0 & I_p & 0 & \dots & 0\\
0 & 0  & I_p & \dots & 0\\
\vdotsS & \vdotsS & \vdotsS & \ddotsS & \vdotsS \\
0 & 0  & 0 & \dots & I_p
}
& 
\smatNoB{
0 & 0& 0& \ldots & 0\\
0 & 0& 0& \ldots & 0\\
\vdotsS & \vdotsS & \vdotsS & \ddotsS & \vdotsS \\
0 & 0& 0& \ldots & 0
}\\
\smatNoB{
C A^\ell \mathcal{O}_\ell^{\tu{L}}
}
&
\smatNoB{
C \mathcal{R}_\ell - C A^\ell \mathcal{O}_\ell^{\tu{L}} \mathcal{T}_\ell 
} \\
\hline
0
& 
\smatNoB{
0 & I_m & 0& \ldots & 0\\
0 & 0 & I_m& \ldots & 0\\
\vdotsS & \vdotsS & \vdotsS & \ddotsS & \vdotsS \\
0 & 0 & 0& \ldots & I_m\\
0 & 0 & 0& \ldots & 0
}
\end{array}
\right].
\end{align}
In Appendix~\ref{app:results_lin_sys}, we report some straightforward results for the linear system in~\eqref{sys_x} and the auxiliary linear system in~\eqref{sys_xi}, which are invoked in the proofs of the subsequent Lemmas~\ref{lemma:imH=R(A,B)}, \ref{lemma:AbfBbf_reach_pl=n}, \ref{lemma:io_sys_x_is_io_sys_xi}.
Define the matrix $H_\ell$ as
\begin{align}
\label{Hell}
H_\ell:=
\left[
\begin{array}{c|c}
\mathcal{O}_\ell & \mathcal{T}_\ell\\
\hline
0 & I_{m \ell}
\end{array}
\right] \in \mathbb{R}^{(p\ell+m\ell) \times (n+m\ell)}.
\end{align} 
We have the next result, which was claimed within \cite[Proof of Lemma~4]{Tomonori2022}, but not proven therein; the reachability subspace is defined in Section~\ref{sec:notation}.

\begin{lemma}
\label{lemma:imH=R(A,B)}
For $(A,C)$ observable and $(A,B)$ reachable, let $R(\mathbf{A}_\ell, \mathbf{B}_\ell)$ be the reachability subspace of the pair  $(\mathbf{A}_\ell, \mathbf{B}_\ell)$. Then, $\im H_\ell = R(\mathbf{A}_\ell, \mathbf{B}_\ell)$.
\end{lemma}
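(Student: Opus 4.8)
The plan is to prove the two inclusions $R(\mathbf{A}_\ell,\mathbf{B}_\ell)\subseteq\im H_\ell$ and $\im H_\ell\subseteq R(\mathbf{A}_\ell,\mathbf{B}_\ell)$ separately, using throughout the correspondence between trajectories of \eqref{sys_x} and \eqref{sys_xi} collected in Appendix~\ref{app:results_lin_sys}: if $\{(x(k),u(k),y(k))\}$ is a trajectory of \eqref{sys_x}, then $\xi(k):=(y(k),\dots,y(k{+}\ell{-}1),u(k),\dots,u(k{+}\ell{-}1))$ with $v(k):=u(k{+}\ell)$ is a trajectory of \eqref{sys_xi}, and $\xi(k)=H_\ell\,(x(k),u(k),\dots,u(k{+}\ell{-}1))$. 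This last identity exhibits $\im H_\ell$ as the set of auxiliary states that are ``consistent'' with \eqref{sys_x}; it also shows $\dim\im H_\ell=n+m\ell$, since $\mathcal{O}_\ell$ in \eqref{O_ell} has full column rank $n$ by observability. The content of the lemma is that this consistent set coincides with the reachable subspace of \eqref{sys_xi}.

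For $R(\mathbf{A}_\ell,\mathbf{B}_\ell)\subseteq\im H_\ell$, I would show that $\im H_\ell$ is $\mathbf{A}_\ell$-invariant and contains $\im\mathbf{B}_\ell$; since $R(\mathbf{A}_\ell,\mathbf{B}_\ell)$ is the smallest subspace with both properties, the inclusion follows. Containment of $\im\mathbf{B}_\ell$ is immediate from \eqref{Hell} and \eqref{T_ell}: taking $x=0$ and $u_0=\dots=u_{\ell-2}=0$ kills the output block $\mathcal{O}_\ell x+\mathcal{T}_\ell(u_0,\dots,u_{\ell-1})$ (the last block-column of $\mathcal{T}_\ell$ vanishes), whence $H_\ell(0,\dots,0,u_{\ell-1})=\mathbf{B}_\ell u_{\ell-1}$. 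For invariance, fix $w=(x,u_0,\dots,u_{\ell-1})$ and take any trajectory of \eqref{sys_x} with $x(0)=x$, $u(j)=u_j$ for $j=0,\dots,\ell-1$ and $u(\ell)=0$; by the correspondence, $\mathbf{A}_\ell H_\ell w=\mathbf{A}_\ell\xi(0)=\xi(1)-\mathbf{B}_\ell v(0)=\xi(1)=H_\ell(Ax+Bu_0,u_1,\dots,u_{\ell-1},0)\in\im H_\ell$, using $v(0)=u(\ell)=0$. (Equivalently, one can verify the algebraic identity $\mathbf{A}_\ell H_\ell=H_\ell\Pi$ for the obvious ``propagate-the-state, shift-the-inputs'' matrix $\Pi$; the only block needing observability is $\mathcal{O}_\ell A$, whose last block-row $CA^\ell$ equals $CA^\ell\mathcal{O}_\ell^{\tu L}\mathcal{O}_\ell=Z_{1\ell}\mathcal{O}_\ell$, matching the $Z_{1\ell}$ entry of \eqref{sys_xi:Aell}.)

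For $\im H_\ell\subseteq R(\mathbf{A}_\ell,\mathbf{B}_\ell)$, I would show that every $H_\ell w$, $w=(x_0,u_0,\dots,u_{\ell-1})$, is reachable from the origin of \eqref{sys_xi}. Build a trajectory of \eqref{sys_x} on $\{0,\dots,n{+}2\ell{-}1\}$ with $x(0)=0$ and $u(0)=\dots=u(\ell-1)=0$, so that $x(\ell)=0$ and $\xi(0)=0$; then, using reachability of $(A,B)$ and surjectivity of $[A^{n-1}B\ \cdots\ AB\ B]$, choose $u(\ell),\dots,u(n{+}\ell{-}1)$ to steer the state to $x(n{+}\ell)=x_0$; finally set $u(n{+}\ell{+}j)=u_j$ for $j=0,\dots,\ell-1$. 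By the correspondence this is a trajectory of \eqref{sys_xi} with $\xi(0)=0$ and $\xi(n{+}\ell)=H_\ell(x_0,u_0,\dots,u_{\ell-1})=H_\ell w$, so $H_\ell w$ is reached from the origin in finitely many steps and therefore lies in $R(\mathbf{A}_\ell,\mathbf{B}_\ell)$. The two inclusions give $\im H_\ell=R(\mathbf{A}_\ell,\mathbf{B}_\ell)$.

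I expect the main difficulty to be bookkeeping rather than anything conceptual: aligning the time-index conventions of the windowed input/output stacks with the shift structure of $\mathbf{A}_\ell$ in \eqref{sys_xi:Aell}, and, in the second inclusion, noticing that the origin of \eqref{sys_xi} is attained only when \emph{both} the state and a full $\ell$-sample window of inputs vanish --- which is why the constructed trajectory carries an extra $\ell$ samples of zero input ahead of the reachability maneuver. Note that observability of $(A,C)$ enters only through well-posedness of $\mathcal{O}_\ell^{\tu L}$ (hence of $\mathbf{A}_\ell$ and of the correspondence), while reachability of $(A,B)$ is used precisely, and only, in the second inclusion.
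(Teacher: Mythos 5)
Your proof is correct. For the inclusion $\im H_\ell \subseteq R(\mathbf{A}_\ell,\mathbf{B}_\ell)$ you do essentially what the paper does: start \eqref{sys_x} at the origin, use reachability of $(A,B)$ to steer the state to $x_0$, then feed the window $u_0,\dots,u_{\ell-1}$; the paper transfers this to \eqref{sys_xi} via Claim~\ref{claim:sol_x_is_sol_xi}, while you prepend $\ell$ zero inputs so that the windowed auxiliary state starts at the origin --- a cosmetic difference in bookkeeping. For the converse inclusion your route is genuinely different: the paper uses the explicit forced-response formula of Claim~\ref{claim:forced_response_xi} (whose verification is the long computation \eqref{forced_response_xi:long_eq}) and then a case analysis on the reaching time $\tau$, whereas you note that $\im H_\ell$ contains $\im \mathbf{B}_\ell$ (the last block-column of $\mathcal{T}_\ell$ is zero) and is $\mathbf{A}_\ell$-invariant --- either through the trajectory correspondence or through the algebraic identity $\mathbf{A}_\ell H_\ell = H_\ell \Pi$ with your shift-and-propagate map $\Pi$, which reduces to $Z_{1\ell}\mathcal{O}_\ell = C A^\ell$ and $Z_{1\ell}\mathcal{T}_\ell + Z_{2\ell} = C\mathcal{R}_\ell$ --- and then invoke that the reachability subspace is the smallest $\mathbf{A}_\ell$-invariant subspace containing $\im\mathbf{B}_\ell$. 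This buys a shorter, essentially computation-free argument for that direction and makes transparent where observability enters (only through $\mathcal{O}_\ell^{\tu{L}}\mathcal{O}_\ell = I_n$); what the paper's heavier route buys is the closed-form expression \eqref{all_steps:xi_k_alt} for the forced response, which it has anyway from Claim~\ref{claim:forced_response_xi} and reuses directly. Your reliance on the trajectory correspondence (in the paper, Claim~\ref{claim:sol_sys_x} and Lemma~\ref{lemma:io_sys_x_is_io_sys_xi}) creates no circularity, since those results are proved independently of this lemma, and your observation that reachability of $(A,B)$ is needed only in the second inclusion matches the paper's proof.
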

\begin{proof}
See Appendix~\ref{app:proof_lemma_imH=R(A,B)}.
\end{proof}

A consequence of Lemma~\ref{lemma:imH=R(A,B)} is the next result, which is relevant in the sequel.

\begin{lemma}
\label{lemma:AbfBbf_reach_pl=n}
For $(A,C)$ observable and $(A,B)$ reachable, the pair $(\mathbf{A}_\ell, \mathbf{B}_\ell)$ is reachable if and only if $p\ell =n$. 
\end{lemma}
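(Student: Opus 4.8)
The plan is to read off everything from Lemma~\ref{lemma:imH=R(A,B)}, which identifies the reachability subspace of the pair $(\mathbf{A}_\ell, \mathbf{B}_\ell)$ with $\im H_\ell$. The auxiliary system~\eqref{sys_xi} has state $\xi$ of dimension $p\ell + m\ell$, so $(\mathbf{A}_\ell, \mathbf{B}_\ell)$ is reachable if and only if $R(\mathbf{A}_\ell, \mathbf{B}_\ell) = \real^{p\ell + m\ell}$, i.e., by Lemma~\ref{lemma:imH=R(A,B)}, if and only if $\im H_\ell = \real^{p\ell + m\ell}$, equivalently $\rank H_\ell = p\ell + m\ell$. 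Thus the statement reduces entirely to a rank computation for $H_\ell$.

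Next I would compute $\rank H_\ell$ directly from the block-triangular form in~\eqref{Hell}. The bottom block row $\bmat{0 & I_{m\ell}}$ has rank $m\ell$, and subtracting suitable combinations of the last $m\ell$ columns of $H_\ell$ from its first $n$ columns clears the block $\mathcal{T}_\ell$; these column operations preserve rank and yield $\bmat{\mathcal{O}_\ell & 0 \\ 0 & I_{m\ell}}$. Hence $\rank H_\ell = \rank \mathcal{O}_\ell + m\ell$. Since $\ell$ is the observability index of $(A,C)$, we have $\rank \mathcal{O}_\ell = n$ by construction, so $\rank H_\ell = n + m\ell$.

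Finally, because $\mathcal{O}_\ell \in \real^{p\ell \times n}$ has rank $n$, it always holds that $n \le p\ell$, so $\rank H_\ell = n + m\ell \le p\ell + m\ell$, with equality if and only if $n = p\ell$. Combining this with the first step, $(\mathbf{A}_\ell, \mathbf{B}_\ell)$ is reachable if and only if $p\ell = n$, which is the claim.

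I do not expect a genuine obstacle here: the argument is a short consequence of Lemma~\ref{lemma:imH=R(A,B)}. The only point worth stating carefully is the rank identity $\rank H_\ell = \rank \mathcal{O}_\ell + m\ell$, which rests on the block-triangular structure of $H_\ell$ and the invertibility of its $(2,2)$ block $I_{m\ell}$; once this is in place, observability ($\rank\mathcal{O}_\ell = n$) and the inequality $n \le p\ell$ do the rest.
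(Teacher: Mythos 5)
Your proof is correct and follows essentially the same route as the paper: both reduce reachability to $\rank H_\ell = (p+m)\ell$ via Lemma~\ref{lemma:imH=R(A,B)} and then use the block-triangular structure of $H_\ell$ (the paper via the factorization $H_\ell = \smat{I & \mathcal{T}_\ell \\ 0 & I}\smat{\mathcal{O}_\ell & 0 \\ 0 & I_{m\ell}}$, you via equivalent column operations) together with $\rank\mathcal{O}_\ell = n$ to get $\rank H_\ell = n + m\ell$. Your explicit remark that $n \le p\ell$ with equality characterizing full rank is just a compact way of packaging the two directions the paper treats separately.
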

\begin{proof}
See Appendix~\ref{app:proof_lemma_AbfBbf_reach_pl=n}.
\end{proof}

Finally, the next result illustrates how the input-output evolution of solutions to~\eqref{sys_x} can be captured by solutions to~\eqref{sys_xi}.

\begin{lemma}
\label{lemma:io_sys_x_is_io_sys_xi}
Given the matrices $(A,B,C)$ of \eqref{sys_x}, let $(A,C)$ be observable and $\mathbf{A}_\ell$, $\mathbf{B}_\ell$, $Z_\ell$ be as in \eqref{sys_xi}.
For each $\hat{x}$ and sequence $\{ u(k)\}_{k = 0}^\infty$, there exists $\hat{\xi}$ such that:
\begin{itemize}[nosep,noitemsep,left=0pt]
\item the solution $x(\cdot)$ to~\eqref{sys_x} with initial condition $x(0) = \hat x$ and input $\{u(k) \}_{k = 0}^\infty$,
\item the corresponding output response $y(\cdot) = C x(\cdot)$, and
\item the solution $\xi(\cdot)$ to \eqref{sys_xi} with initial condition $\xi(\ell) = \hat \xi$ and input $\{v(k)\}_{k = \ell}^\infty = \{ u(k) \}_{k = \ell}^\infty$
\end{itemize}
satisfy
\begin{subequations}
 \begin{align}
\label{io_sys_x_is_io_sys_xi:xi_k}
 \left[
\begin{array}{c}
\begin{smallmatrix}
y(k-\ell) \\
\vdotsS \\ 
y(k-1)\\
\end{smallmatrix}\\
\hline
\begin{smallmatrix}
u(k-\ell)\\
\vdotsS \\
u(k-1)
\end{smallmatrix}
\end{array}
\right]
= \xi(k) \quad & \forall k \ge \ell\\
y(k)= Z_{\ell} \xi(k) \quad & \forall k \ge \ell.
\label{io_sys_x_is_io_sys_xi:y_k}
\end{align}   
\end{subequations}
\end{lemma}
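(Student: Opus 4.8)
The plan is to take $\hat\xi$ to be the stack of the first $\ell$ output and input samples produced by~\eqref{sys_x}, i.e., $\hat\xi := (y(0),\dots,y(\ell-1),u(0),\dots,u(\ell-1))$, where $x(\cdot)$ and $y(\cdot)$ denote the response of~\eqref{sys_x} to $x(0)=\hat x$ and the given input $\{u(k)\}_{k=0}^{\infty}$. With this choice, identity~\eqref{io_sys_x_is_io_sys_xi:xi_k} holds at $k=\ell$ by construction of $\hat\xi=\xi(\ell)$, and I would establish it for all $k\ge\ell$ by induction on $k$, obtaining~\eqref{io_sys_x_is_io_sys_xi:y_k} along the way.

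Before the induction I would record two standard consequences of iterating~\eqref{sys_x} over $\ell$ steps (of the kind collected in Appendix~\ref{app:results_lin_sys}): for every $k\ge\ell$, $x(k)=A^{\ell}x(k-\ell)+\mathcal{R}_{\ell}\,(u(k-\ell),\dots,u(k-1))$ and $(y(k-\ell),\dots,y(k-1))=\mathcal{O}_{\ell}\,x(k-\ell)+\mathcal{T}_{\ell}\,(u(k-\ell),\dots,u(k-1))$, with $\mathcal{O}_\ell,\mathcal{T}_\ell,\mathcal{R}_\ell$ as in~\eqref{O_ell_T_ell_R_ell}. Assuming~\eqref{io_sys_x_is_io_sys_xi:xi_k} at some $k\ge\ell$, I would left-multiply the second identity by $\mathcal{O}_\ell^{\tu{L}}$ (recall $\mathcal{O}_\ell^{\tu{L}}\mathcal{O}_\ell=I_n$) and insert the result into $Z_\ell\xi(k)=Z_{1\ell}(y(k-\ell),\dots,y(k-1))+Z_{2\ell}(u(k-\ell),\dots,u(k-1))$, using the definition of $Z_{1\ell},Z_{2\ell}$ in~\eqref{sys_xi:Z}: the terms $CA^{\ell}\mathcal{O}_\ell^{\tu{L}}\mathcal{T}_\ell(u(k-\ell),\dots,u(k-1))$ cancel, leaving $Z_\ell\xi(k)=CA^{\ell}x(k-\ell)+C\mathcal{R}_\ell(u(k-\ell),\dots,u(k-1))=Cx(k)=y(k)$, which is exactly~\eqref{io_sys_x_is_io_sys_xi:y_k} at time $k$.

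For the inductive step for~\eqref{io_sys_x_is_io_sys_xi:xi_k}, I would evaluate $\xi(k+1)=\mathbf{A}_\ell\xi(k)+\mathbf{B}_\ell v(k)$ with $v(k)=u(k)$, reading off the explicit block pattern of $\mathbf{A}_\ell$ from~\eqref{sys_xi:Aell} and of $\mathbf{B}_\ell$ from~\eqref{sys_xi:A0_L_B}. Split $\xi(k+1)$ into the output part (first $p\ell$ entries) and the input part (last $m\ell$ entries). In the output part, the upper-shift block of $\mathbf{A}_\ell$ turns the output part of $\xi(k)$ into $(y(k-\ell+1),\dots,y(k-1))$, the row block $[\,Z_{1\ell}\ Z_{2\ell}\,]=Z_\ell$ contributes $Z_\ell\xi(k)=y(k)$ in the last $p$ entries by the previous paragraph, and $\mathbf{B}_\ell$ contributes nothing there; in the input part, the shift block of $\mathbf{A}_\ell$ turns the input part of $\xi(k)$ into $(u(k-\ell+1),\dots,u(k-1),0)$ and $\mathbf{B}_\ell v(k)$ writes $u(k)$ into the last $m$ entries. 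Hence $\xi(k+1)=(y(k-\ell+1),\dots,y(k),u(k-\ell+1),\dots,u(k))$, which is~\eqref{io_sys_x_is_io_sys_xi:xi_k} at $k+1$; this closes the induction, and then~\eqref{io_sys_x_is_io_sys_xi:y_k} holds for every $k\ge\ell$ by the argument of the previous paragraph.

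I do not anticipate any conceptual obstacle: the statement is essentially a bookkeeping identity. The two places requiring care are (i) correctly deriving the $\ell$-step relations for~\eqref{sys_x} and then seeing that the particular combination of $\mathcal{O}_\ell^{\tu{L}}$ in~\eqref{sys_xi:Z} makes the (unknown) lifted state $x(k-\ell)$ reappear precisely as the next output $y(k)$, and (ii) matching the shift-and-append pattern encoded by $\mathbf{F}_\ell$, $\mathbf{L}_\ell$, $\mathbf{B}_\ell$ with the operation ``discard the oldest I/O sample, append the newest'' on the length-$\ell$ window of inputs and outputs.
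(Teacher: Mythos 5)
Your proposal is correct and follows essentially the same route as the paper's proof: it chooses $\hat\xi$ as the stack of the first $\ell$ I/O samples (which coincides with the paper's choice $\mathcal{O}_\ell\hat x+\mathcal{T}_\ell(u(0),\dots,u(\ell-1))$ via the $\ell$-step relation) and proceeds by induction using the shift structure of $\mathbf{A}_\ell$, $\mathbf{B}_\ell$ together with the $\ell$-step input--output identities of Claim~\ref{claim:sol_sys_x}. The only cosmetic difference is that you derive $y(k)=Z_\ell\xi(k)$ inline via the cancellation of the $CA^{\ell}\mathcal{O}_\ell^{\tu{L}}\mathcal{T}_\ell$ terms, whereas the paper packages this as \eqref{sol_sys_x:o_o_i}.
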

\begin{proof}
See Appendix~\ref{app:proof_lemma_io_sys_x_is_io_sys_xi}.
\end{proof}

We note that Lemma~\ref{lemma:io_sys_x_is_io_sys_xi} compares the solution to~\eqref{sys_x} for a generic initial condition $\hat{x}$ with the solution to~\eqref{sys_xi} with $\xi(\ell) = \hat{\xi}$ for a suitable $\hat{\xi}$ from $k = \ell$ onward.

\section{Problem formulation}\label{sec:problem}

Consider the discrete-time linear-time-invariant system
\begin{subequations}
\label{sys_x_star}
\begin{align}
x^+ & = \As x + \Bs u \label{sys_x_star:x}\\
y & = \Cs x \label{sys_x_star:y}
\end{align}
\end{subequations}
with state $x \in \real^n$, input $u \in \real^m$ and output $y \in \real^p$.
The matrices $\As$, $\Bs$, $\Cs$ are unknown to us and, instead of their knowledge, we rely on collecting noisy-input noisy-output measurements to design a controller for~\eqref{sys_x_star}, as we explain below.
Our prior knowledge on~\eqref{sys_x_star} is summarized in the next assumption.
\begin{assumption}
\label{ass:observ}
The pair $(\As, \Cs)$ is observable and the observability index $\ells$ of $(\As, \Cs)$ is known.
\end{assumption}

Knowing the observability index $\ells$ allows us to construct some of the quantities used in the sequel.
Based on Assumption~\ref{ass:observ}, we can define the unknown matrices $\mathcal{O}_\ells$, $\mathcal{T}_{\ells}$, $\mathcal{R}_{\ells}$ as in~\eqref{O_ell}, \eqref{T_ell}, \eqref{R_ell}, where, since $(\As,\Cs)$ is observable, $\rank \mathcal{O}_\ells = n$ and $\mathcal{O}_\ells$ possesses a left inverse $\mathcal{O}_\ells^{\tu{L}}$ that satisfies $\mathcal{O}_\ells^{\tu{L}} \mathcal{O}_\ells = I_n$.

\begin{figure}[H]
\centerline{\includegraphics[scale=0.9]{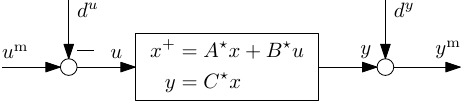}}
\caption{Scheme of data collection experiment.}
\label{fig:data_collection_exp}
\end{figure}

Input-output data are collected by performing an experiment on~\eqref{sys_x_star}.
Consider
\begin{align}
\label{u_m,y_m}
u^{\tu{m}} := u + d^u \text{ and } y^{\tu{m}} := y + d^y
\end{align}
where the measured input $u^{\tu{m}}$ differs from the actual input $u$ to~\eqref{sys_x_star} by unknown noise $d^u$, and the measured output $y^{\tu{m}}$ differs from the actual output $y$ of~\eqref{sys_x_star} by unknown noise $d^y$.
The data collection experiment is depicted in Figure~\ref{fig:data_collection_exp} and is as follows: for $k=0, \dots, T$, apply the signal $u^{\tu{m}}(k)$; along with noise $d^u(k)$, this results in (unknown) input $u(k) = u^{\tu{m}}(k) - d^{u}(k)$, (unknown) state $x(k)$ and (unknown) output $y(k)$ for some initial condition $x(0)$; measure the signal $y^{\tu{m}}(k) = y(k) + d^y(k)$.
The available data, on which our control design is based, are then $\{ u^{\tu{m}}(k), y^{\tu{m}}(k) \}_{k=0}^T$ and are gathered from
\begin{subequations}
\label{sys_x_star_ym_um}
\begin{align}
x^+ & = \As x + \Bs (u^{\tu{m}} - d^u ) \\
y^{\tu{m}} & = \Cs x + d^y.
\end{align}
\end{subequations}
For the sequel we define
\begin{subequations}
\label{Psi1_Psi0_U0_Delta10}
\begin{align}
\Psi_1 & :=
\left[
\begin{array}{c}
\begin{smallmatrix}
y^{\tu{m}}(1) & y^{\tu{m}}(2) & \dots & y^{\tu{m}}(T-\ells+1)\\
\vdotsS & \vdotsS &   & \vdotsS \\
y^{\tu{m}}(\ells) & y^{\tu{m}}(\ells+1) & \dots & y^{\tu{m}}(T) \\
\end{smallmatrix}\\[8pt]
\hline
\begin{smallmatrix}
u^{\tu{m}}(1) & u^{\tu{m}}(2) & \dots & u^{\tu{m}}(T-\ells+1) \rule{0pt}{8pt}\\
\vdotsS & \vdotsS &  & \vdotsS\\
u^{\tu{m}}(\ells) & u^{\tu{m}}(\ells+1) & \dots & u^{\tu{m}}(T)
\end{smallmatrix}
\end{array}
\right]\\
\Psi_0 & :=
\left[
\begin{array}{c}
\begin{smallmatrix}
y^{\tu{m}}(0) & y^{\tu{m}}(1) & \dots & y^{\tu{m}}(T-\ells)\\
\vdotsS & \vdotsS &   & \vdotsS \\
y^{\tu{m}}(\ells-1) & y^{\tu{m}}(\ells) & \dots & y^{\tu{m}}(T-1) \\
\end{smallmatrix}\\[8pt]
\hline
\begin{smallmatrix}
u^{\tu{m}}(0) & u^{\tu{m}}(1) & \dots & u^{\tu{m}}(T-\ells) \rule{0pt}{8pt}\\
\vdotsS & \vdotsS &  & \vdotsS\\
u^{\tu{m}}(\ells-1) & u^{\tu{m}}(\ells) & \dots & u^{\tu{m}}(T-1)
\end{smallmatrix}
\end{array}
\right]\\
U_1 & := 
\smat{u^{\tu{m}}(\ells) & u^{\tu{m}}(\ells+1) & \dots & u^{\tu{m}}(T)}
\end{align}
based on the available data $\{ u^{\tu{m}}(k), y^{\tu{m}}(k) \}_{k=0}^T$ and also the unknown
\begin{align}
\Delta_{10} & :=
\left[
\begin{array}{c}
\begin{smallmatrix}
d^y(\ells) & d^y(\ells+1) & \dots & d^y(T) \\
\end{smallmatrix}\\
\hline
\begin{smallmatrix}
d^y(0) & d^y(1) & \dots & d^y(T-\ells) \rule{0pt}{8pt}\\
\vdotsS & \vdotsS &   & \vdotsS \\
d^y(\ells-1) & d^y(\ells) & \dots & d^y(T-1) \\
\end{smallmatrix}\\[8pt]
\hline
\begin{smallmatrix}
d^u(0) & d^u(1) & \dots & d^u(T-\ells) \rule{0pt}{8pt}\\
\vdotsS & \vdotsS &  & \vdotsS\\
d^u(\ells-1) & d^u(\ells) & \dots & d^u(T-1)
\end{smallmatrix}
\end{array}
\right]. \label{Delta10}
\end{align}
\end{subequations}

For $\mathbf{F}_\ells$, $\mathbf{L}_\ells$ and $\mathbf{B}_\ells$ as in~\eqref{sys_xi:A0_L_B}, define
\begin{align}
Z_\ells & := 
\bmat{Z_{1\ells} & Z_{2\ells}} \notag\\
& :=
\bmat{C^\star {A^\star}^\ells \mathcal{O}_\ells^{\tu{L}} & &  C^\star \mathcal{R}_\ells - C^\star {A^\star}^\ells \mathcal{O}_\ells^{\tu{L}} \mathcal{T}_\ells} \label{Zstar} \\
\mathbf{A}_\ells & :=
\mathbf{F}_\ells + \mathbf{L}_\ells \bmat{Z_{1\ells} & Z_{2\ells}} \label{Astar} \\
& = 
\left[
\begin{array}{c|c}
\smatNoB{
0 & I_p & 0 & \dots & 0\\
0 & 0  & I_p & \dots & 0\\
\vdotsS & \vdotsS & \vdotsS & \ddotsS & \vdotsS \\
0 & 0  & 0 & \dots & I_p
}
& 
\smatNoB{
0 & 0& 0& \ldots & 0\\
0 & 0& 0& \ldots & 0\\
\vdotsS & \vdotsS & \vdotsS & \ddotsS & \vdotsS \\
0 & 0& 0& \ldots & 0
}\\
Z_{1\ells} & Z_{2\ells} \\
\hline
0
&
\smatNoB{
0 & I_m & 0& \ldots & 0\\
0 & 0 & I_m& \ldots & 0\\
\vdotsS & \vdotsS & \vdotsS & \ddotsS & \vdotsS \\
0 & 0 & 0& \ldots & I_m\\
0 & 0 & 0& \ldots & 0
}
\end{array}
\right] \notag \\
\mathbf{B}^{\tu{d}}_\ells & := \mathbf{L}_\ells \bmat{I_p & -Z_{1\ells} & -Z_{2\ells}} \label{Bstar_d} \\
& =
\smat{
0 & 0& 0\\
\vdotsS & \vdotsS & \vdotsS\\
0 & 0& 0\\
I_p & -Z_{1\ells} & -Z_{2\ells}\\
\hline
0 & 0 & 0\\
\vdotsS & \vdotsS & \vdotsS\\
0 & 0 & 0
}
\notag 
\end{align}
where $Z_\ells$ is as in~\eqref{sys_xi:Z} and $\mathbf{A}_\ells$ is as in~\eqref{sys_xi:Aell}.
With these definitions, we summarize the relations satisfied by data in the next result.
\begin{lemma}
\label{lemma:data_relation}
Data satisfy for $k=\ells, \dots, T$,
\begin{align}\label{exp_data_ymk}
& y^{\tu{m}}(k)   \! = \!  Z_{1\ells} \!
\smat{
y^{\tu{m}}(k - \ells)\\
\vdotsS \\
y^{\tu{m}}(k - 1)
}
\! + \!  Z_{2\ells} \!
\smat{
u^{\tu{m}}(k - \ells)\\
\vdotsS \\
u^{\tu{m}}(k - 1)
}  \\   
& \hspace{26pt}- \! Z_{1\ells} \!
\smat{
 d^y(k - \ells)\\
\vdotsS \\
d^y (k - 1)
}
\! - \! Z_{2\ells} \!
\smat{
d^u(k - \ells)\\
\vdotsS \\
d^u(k - 1) 
} \! + \! d^y(k), \notag \\
& \!\!\smat{y^{\tu{m}}(k-\ells+1)\\
\vdotsS\\
y^{\tu{m}}(k) \\
\hline
u^{\tu{m}}(k-\ells+1)\\
\vdotsS\\
u^{\tu{m}}(k)
} = (\mathbf{F}_\ells + \mathbf{L}_\ells [Z_{1\ells} \ Z_{2\ells}]) \smat{y^{\tu{m}}(k-\ells)\\
\vdotsS\\
y^{\tu{m}}(k-1)\\
\hline
u^{\tu{m}}(k-\ells)\\
\vdotsS\\
u^{\tu{m}}(k-1)
} \notag \\
& +  \mathbf{B}_\ells u^{\tu{m}}(k) 
+ \mathbf{L}_\ells  [I_p \ - \! Z_{1\ells} \ - \! Z_{2\ells}] \!
\smat{
d^y(k) \\
\hline d^y(k-\ells) \\
\vdotsS \\
d^y(k-1) \\
\hline d^u(k-\ells)\\
\vdotsS\\
d^u(k-1)
} \label{exp_data_veck}
\end{align}
and
\begin{align}
\Psi_1 & = (\mathbf{F}_\ells + \mathbf{L}_\ells Z_\ells ) \Psi_0 + \mathbf{B}_\ells U_1 + \mathbf{L}_\ells [I_p \ -Z_\ells] \Delta_{10}\notag \\
&  = \mathbf{A}_\ells \Psi_0 + \mathbf{B}_\ells U_1 + \mathbf{B}^{\tu{d}}_\ells \Delta_{10}. \label{exp_data_mat}
\end{align}
\end{lemma}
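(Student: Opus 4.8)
The plan is to establish the three displayed relations by specializing \eqref{io_sys_x_is_io_sys_xi:xi_k}--\eqref{io_sys_x_is_io_sys_xi:y_k} (via Lemma~\ref{lemma:io_sys_x_is_io_sys_xi}) to the data-generating trajectory and then book-keeping the measurement-noise substitution $u = u^{\tu{m}} - d^u$, $y = y^{\tu{m}} - d^y$. Concretely, apply Lemma~\ref{lemma:io_sys_x_is_io_sys_xi} to the system $(\As,\Bs,\Cs)$ with initial condition $x(0)$ and the \emph{actual} input sequence $\{u(k)\}$: this yields, for each $k \ge \ells$, the identity $y(k) = Z_{1\ells}\,(y(k-\ells),\dots,y(k-1)) + Z_{2\ells}\,(u(k-\ells),\dots,u(k-1))$, which is just \eqref{io_sys_x_is_io_sys_xi:y_k} written block-wise using the partition $\xi(k) = \big((y(k-\ells),\dots,y(k-1)),(u(k-\ells),\dots,u(k-1))\big)$ from \eqref{io_sys_x_is_io_sys_xi:xi_k}. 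Substituting $y(j) = y^{\tu{m}}(j) - d^y(j)$ and $u(j) = u^{\tu{m}}(j) - d^u(j)$ for every index appearing and moving the noise terms to the right-hand side produces exactly \eqref{exp_data_ymk}; this is a routine rearrangement and I would present it in one or two lines.

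For \eqref{exp_data_veck}, I would start from the state equation of the auxiliary system \eqref{sys_xi} evaluated along the actual trajectory, i.e. $\xi(k+1) = (\mathbf{F}_\ells + \mathbf{L}_\ells Z_\ells)\xi(k) + \mathbf{B}_\ells u(k)$, again using \eqref{io_sys_x_is_io_sys_xi:xi_k} to identify $\xi(k)$ and $\xi(k+1)$ with the stacked windows of actual $y$'s and $u$'s. Replacing every $y$ and $u$ by $y^{\tu{m}}-d^y$ and $u^{\tu{m}}-d^u$, the left-hand window becomes the stacked measured window minus a stacked noise window, and on the right the term $(\mathbf{F}_\ells + \mathbf{L}_\ells Z_\ells)$ acting on the noise window plus $\mathbf{B}_\ells d^u(k)$ appears; the key algebraic observation is that $\mathbf{F}_\ells$ applied to the stacked noise window together with $\mathbf{B}_\ells d^u(k)$ and the top-block noise $d^y(k)$ reorganizes precisely into the $\mathbf{L}_\ells[I_p\ -Z_{1\ells}\ -Z_{2\ells}]$ term plus the noise window one step advanced. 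Here I would exploit the shift structure of $\mathbf{F}_\ells$ (it shifts each block up by one and kills the last) so that the noise contribution on the left cancels against $\mathbf{F}_\ells$ times the noise window on the right except for the newly-entering components $d^y(k)$ and $d^u(k)$, which are exactly what $\mathbf{L}_\ells[I_p\ -Z_{1\ells}\ -Z_{2\ells}]\,(d^y(k),d^y(k-\ells),\dots,d^y(k-1),d^u(k-\ells),\dots,d^u(k-1))$ and $\mathbf{B}_\ells d^u(k)$ capture; this bookkeeping is where care is needed, though it is still elementary given the explicit forms in \eqref{sys_xi:A0_L_B} and \eqref{Bstar_d}.

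Finally, \eqref{exp_data_mat} is obtained by stacking \eqref{exp_data_veck} column-wise over $k = \ells, \dots, T$: the columns of $\Psi_1$, $\Psi_0$, $U_1$, $\Delta_{10}$ are by definition \eqref{Psi1_Psi0_U0_Delta10}--\eqref{Delta10} exactly the stacked windows appearing in \eqref{exp_data_veck}, so the matrix identity follows immediately, and the second line of \eqref{exp_data_mat} is just the substitution of the definitions \eqref{Astar} of $\mathbf{A}_\ells$ and \eqref{Bstar_d} of $\mathbf{B}^{\tu{d}}_\ells$. I expect the main obstacle to be purely notational: verifying that the index ranges in the Hankel-like matrices \eqref{Psi1_Psi0_U0_Delta10} line up with the windows $k-\ells,\dots,k-1$ and $k-\ells+1,\dots,k$ for $k=\ells,\dots,T$, and in particular that the top block of $\Delta_{10}$ (namely $d^y(\ells),\dots,d^y(T)$) matches the "newly entering" $d^y(k)$ term with the correct sign after the $\mathbf{L}_\ells$ multiplication. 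No deep argument is required beyond Lemma~\ref{lemma:io_sys_x_is_io_sys_xi} and the explicit block structures; the whole proof is an organized substitution, and I would keep it short by pointing to the shift structure of $\mathbf{F}_\ells$ rather than writing out all blocks.
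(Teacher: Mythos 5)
Your proposal is correct and follows essentially the same route as the paper's proof: the paper derives \eqref{exp_data_ymk} from the relation $y(k)=Z_{1\ells}(\cdot)+Z_{2\ells}(\cdot)$ of Claim~\ref{claim:sol_sys_x} (the same fact you access through Lemma~\ref{lemma:io_sys_x_is_io_sys_xi}), then notes \eqref{exp_data_veck} is just \eqref{exp_data_ymk} stacked with tautological shift rows, and obtains \eqref{exp_data_mat} by column-wise stacking. Your derivation of \eqref{exp_data_veck} from the auxiliary dynamics, with the shift structure of $\mathbf{F}_\ells$ cancelling the noise window except for the newly entering $d^y(k)$ and $d^u(k)$ terms, is only a cosmetic variation of the same bookkeeping and checks out.
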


\begin{proof}
See Appendix~\ref{app:proof_lemma_data_relation}.
\end{proof}

In~\eqref{exp_data_mat}, $\mathbf{B}_\ells$ as in~\eqref{sys_xi:A0_L_B} is known whereas $\mathbf{A}_\ells$ in~\eqref{Astar} and $\mathbf{B}^{\tu{d}}_\ells$ in~\eqref{Bstar_d} contain the unknown $Z_\ells$.
Without any prior knowledge on $\Delta_{10}$, \eqref{exp_data_mat} would not provide sufficient information to design a stabilizing controller for~\eqref{sys_x_star}. Thus, we introduce  prior knowledge in the form of an energy bound on $\Delta_{10}$.
For some $\Theta = \Theta^\top \succeq 0$, define
\begin{equation}
\label{setD}
\mathcal{D} :=  \{ \Delta  \in  \real^{(p + p \ells +  m \ells) \times (T - \ells + 1)} : \Delta \Delta^\top \preceq \Theta \}
\end{equation}
and, for the sequel, partition $\Theta$ as
\begin{equation}
\label{Theta_partitioned}
\Theta =: \bmat{\Theta_{11} & \Theta_{12}\\ \Theta_{12}^\top & \Theta_{22}}
\end{equation}
with 
\[ \Theta_{11} = \Theta_{11}^\top \in \real^{p \times p} \text{, } \Theta_{22} = \Theta_{22}^\top \in \real^{ (p \ells + m \ells) \times  (p \ells + m \ells) }. \]
We assume then that the noise sequence acting during data collection belongs to $\mathcal{D}$, i.e., $\Delta_{10}\in\mathcal{D}$.
The interpretation of~\eqref{setD} as an energy bound is justified in the next remark.

\begin{remark}
Consider the permutation matrix
\begin{align*}
    \Pi := \smat{0 & I_{\ells p} & 0\\ I_p & 0 & 0\\ 0 & 0 & I_{\ells m}},
\end{align*}
which ensures that
\begin{align*}
\Pi \Delta_{10} = 
\left[
\begin{array}{cccc}
\begin{smallmatrix}
d^y(0) & d^y(1) & \dots & d^y(T-\ells)\\
\vdotsS & \vdotsS &   & \vdotsS \\
d^y(\ells) & d^y(\ells+1) & \dots & d^y(T)
\end{smallmatrix}
\\
\hline
\begin{smallmatrix}
d^u(0) & d^u(1) & \dots & d^u(T-\ells)\\
\vdotsS & \vdotsS &  & \vdotsS\\
d^u(\ells-1) & d^u(\ells) & \dots & d^u(T-1)
\end{smallmatrix}
\end{array}
\right].
\end{align*}
Then, $\Delta_{10} \in \mathcal{D}$ if and only if $\Pi \Delta_{10} \Delta_{10}^\top \Pi^\top \preceq \Pi \Theta \Pi^\top$, i.e.,
\begin{align*}
\sum_{j=0}^{T-\ells}
\left[
\begin{array}{cccc}
\begin{smallmatrix}
d^y(j)\\
\vdotsS \\
d^y(j+\ells)
\end{smallmatrix}
\\
\hline
\begin{smallmatrix}
d^u(j)\\
\vdotsS\\
d^u(j+\ells-1)
\end{smallmatrix}
\end{array}
\right]
\left[
\begin{array}{cccc}
\begin{smallmatrix}
d^y(j)\\
\vdotsS \\
d^y(j+\ells)
\end{smallmatrix}
\\
\hline
\begin{smallmatrix}
d^u(j)\\
\vdotsS\\
d^u(j+\ells-1)
\end{smallmatrix}
\end{array}
\right]^\top
\preceq \Pi \Theta \Pi^\top.
\end{align*}
This condition highlights that $\Theta$, modulo row permutations, expresses an energy bound on the sequence of vectors $\big(d^y(0), \dots, d^y(\ells), d^u(0), \dots, d^u(\ells-1)\big)$, \dots, $\big(d^y(T-\ells), \dots, d^y(T), d^u(T-\ells), \dots, d^u(T-1)\big)$.
\end{remark}

A typical way to obtain the energy bound given by $\Theta$ is presented in the next remark.

\begin{remark}
\label{rem:conversion_to_energy_bound}
Suppose we know that for some $\bar{d}^y \ge 0$ and $\bar{d}^u \ge 0$,
\begin{align*}
\| d^y \|_{\mathcal{L}_\infty}
\le \bar{d}^y
\text{ and }
\| d^u \|_{\mathcal{L}_\infty}
\le \bar{d}^u.
\end{align*}
Then, for each $k = \ell^\star \! - 1$, \dots, $T-1$,
\begin{align*}
& | \delta(k) |^2 := 
\left|
\smat{
d^y(k+1)\\
d^y(k-\ells+1)\\
\vdotsS\\
d^y(k)\\
d^u(k-\ells+1)\\
\vdotsS\\
d^u(k)\\
}
\right|^2 = \sum_{j=k-\ells+1}^{k+1} | d^y(j) |^2 \\
& + \sum_{j=k-\ells+1}^{k} | d^u(j) |^2  \ \le \ (\ells +1) (\bar{d}^y)^2 + \ells (\bar{d}^u)^2
\end{align*}
and, from~\eqref{Delta10},
\begin{align*}
\Delta_{10} \Delta_{10}^\top
& =
\sum_{k=\ells - 1}^{T-1}
\delta(k)
\delta(k)^\top
\preceq 
\sum_{k=\ells - 1}^{T-1}
\left| \delta(k) \right|^2 I \\
& \preceq (T - \ells +1) \big( (\ells +1) (\bar{d}^y)^2 + \ells (\bar{d}^u)^2 \big) I.
\end{align*}
In this way, we can take $\Theta$ as $(T - \ells +1) \big( (\ells +1) (\bar{d}^y)^2 + \ells (\bar{d}^u)^2 \big) I$.
\end{remark}

Under this setting, where the actual parameters $Z_\ells$ arising from~\eqref{sys_x} are unknown, we introduce the set of parameters $Z$ consistent with data in~\eqref{exp_data_mat} and with the noise bound in~\eqref{setD} as
\begin{align}\label{setC}
& \mathcal{C} := \Big\{ Z \in \real^{p \times (p\ells + m\ells)} \colon  \Psi_1 = (\mathbf{F}_\ells + \mathbf{L}_\ells Z ) \Psi_0 \notag \\
& \hspace*{15mm}+ \mathbf{B}_\ells U_1 + \mathbf{L}_\ells [I_p \ -Z]  \Delta, \ \ \Delta \in \mathcal{D} \Big\}. 
\end{align}
Particularly, \eqref{exp_data_mat} and $\Delta_{10}\in\mathcal{D}$ imply that $Z_\ells \in \mathcal{C}$.
In the sequel, we will then need to work with the set \eqref{setC}, instead of with the unknown $Z_\ells$.
Moreover, we make the next assumption on data.
\begin{assumption}
\label{ass:Psi0}
$\Psi_0 \Psi_0^\top \succ \Theta_{22}$.
\end{assumption}
The interpretation and implications of this assumption are discussed in Section~\ref{sec:disc_asmpt_pers_exc}.

As in the data collection experiment, we rely on current and past outputs and inputs to control~\eqref{sys_x_star} and render the origin asymptotically stable.
In other words, we use the feedback law
\begin{equation}
\label{u(k)=K_stack}
u(k) = \mathbf{K}
\left[
\begin{array}{c}
\begin{smallmatrix}
y(k-\ells)\\
\vdotsS\\
y(k-1)
\end{smallmatrix}\\
\hline
\begin{smallmatrix}
u(k-\ells)\\
\vdotsS\\
u(k-1)
\end{smallmatrix}
\end{array}
\right], \forall k \ge \ells
\end{equation}
for some matrix $\mathbf{K}$ to be designed.
More precisely, such feedback law corresponds to a dynamic controller
\begin{subequations}
\label{dyn_contr_star}
\begin{align}
\chi^+ & = \mathbf{F}_\ells \chi + \mathbf{L}_\ells y + \mathbf{B}_\ells u \label{dyn_contr_star:chi}\\
u & = \mathbf{K} \chi \label{dyn_contr_star:u=Kchi}
\end{align}
\end{subequations}
where the matrices $\mathbf{F}_\ells$, $\mathbf{L}_\ells$, $\mathbf{B}_\ells$ are completely known, see~\eqref{sys_xi:A0_L_B}.
Indeed, \eqref{dyn_contr_star:chi}  yields
\begin{align*}
& \chi_1^+ = \chi_2, \dots,\, \chi_{\ells-1}^+ = \chi_\ells, \, \chi_\ells^+ = y, \\
& \chi_{\ells+1}^+ = \chi_{\ells+2}, \dots,\, \chi_{\ells+\ells-1}^+ = \chi_{\ells+\ells}, \, \chi_{\ells+\ells}^+ = u
\end{align*}
and, when fed with input sequences $\{u(k)\}_{k = 0}^\infty$ and $\{y(k)\}_{k = 0}^\infty$, \eqref{dyn_contr_star:chi} ensures that the solution $\chi(\cdot)$ satisfies
\begin{align}
\label{chi_k_stack_past_io_star}
\left[
\begin{array}{c}
\begin{smallmatrix}
\chi_1(k)\\
\vdotsS\\
\chi_\ells(k)
\end{smallmatrix}\\
\hline
\begin{smallmatrix}
\chi_{\ells+1}(k)\\
\vdotsS\\
\chi_{\ells+\ells}(k)
\end{smallmatrix}
\end{array}
\right]
= 
\left[
\begin{array}{c}
\begin{smallmatrix}
y(k-\ells)\\
\vdotsS\\
y(k-1)
\end{smallmatrix}\\
\hline
\begin{smallmatrix}
u(k-\ells)\\
\vdotsS\\
u(k-1)
\end{smallmatrix}
\end{array}
\right], \forall k \ge \ells.
\end{align}
This shows that \eqref{dyn_contr_star:chi} creates the stack of the past $\ells$ values of output and input that are needed in~\eqref{u(k)=K_stack}.

With all the ingredients in place, we can give our problem statement.
\begin{problem}
\label{probl}
With collected data $\{ u^{\tu{m}}(k), y^{\tu{m}}(k) \}_{k=0}^T$ and under Assumptions~\ref{ass:observ}-\ref{ass:Psi0}, design a matrix $\mathbf{K}$ for the dynamic controller in~\eqref{dyn_contr_star} such that the feedback interconnection of~\eqref{sys_x_star} and \eqref{dyn_contr_star} ensures that $(x, \chi) = 0$ is globally asymptotically stable.
\end{problem}
As will be discussed later, the assumptions made in Problem~\ref{probl} impose a constraint on the number of outputs, as discussed in Section~\ref{sec:disc_asmpt_pers_exc}.
While we conclude this section by explaining our route to the results solving Problem~\ref{probl}, we reconsider this problem in Section~\ref{sec:results_pells_neq_n} to demonstrate how these results can be extended to a general MIMO system.

In the sequel we consider the auxiliary system
\begin{align}
\label{sys_xi_star}
\xi^+ = \mathbf{A}_\ells \xi + \mathbf{B}_\ells v = (\mathbf{F}_\ells + \mathbf{L}_\ells Z_\ells) \xi + \mathbf{B}_\ells v,
\end{align}
see \eqref{sys_xi}.
We do so because this system can capture the input-output evolution of solutions to~\eqref{sys_x_star} in the sense of Lemma~\ref{lemma:io_sys_x_is_io_sys_xi}, namely, under the assumptions of Lemma~\ref{lemma:io_sys_x_is_io_sys_xi} (suitable initial condition $\xi(\ells)$ and same input sequence $\{ v(k) \}_{k = \ells}^\infty = \{ u(k) \}_{k = \ells}^\infty$) we have
\begin{align}
\label{xi_k_stack_past_io_star}
\xi(k)
=
\left[
\begin{array}{c}
\begin{smallmatrix}
y(k-\ells)\\
\vdotsS\\
y(k-1)
\end{smallmatrix}\\
\hline
\begin{smallmatrix}
u(k-\ells)\\
\vdotsS\\
u(k-1)
\end{smallmatrix}
\end{array}
\right], \quad \forall k \ge \ells.
\end{align}
By~\eqref{chi_k_stack_past_io_star} and \eqref{xi_k_stack_past_io_star},
\begin{align*}
u(k) = \mathbf{K} \chi(k) = \mathbf{K} \xi(k), \quad \forall k \ge \ells.
\end{align*}
If we set $v(k) = u(k) = \mathbf{K} \xi(k)$ for all $k \ge \ells$, this corresponds, loosely speaking, to interconnecting the auxiliary system~\eqref{sys_xi_star} with the feedback $v = u = \mathbf{K} \xi$ as
\begin{align}
\label{sys_xi_star_closed_loop}
\xi^+ = \mathbf{A}_\ells \xi + \mathbf{B}_\ells \mathbf{K} \xi = (\mathbf{F}_\ells + \mathbf{L}_\ells Z_\ells + \mathbf{B}_\ells \mathbf{K} ) \xi
\end{align}
and motivates the relevance of~\eqref{sys_xi_star} for Problem~\ref{probl}.
In principle, if $Z_\ells$ were known in~\eqref{sys_xi_star_closed_loop}, we would like to render $\mathbf{F}_\ells + \mathbf{L}_\ells Z_\ells + \mathbf{B}_\ells \mathbf{K}$ Schur.
In lieu of the knowledge of $Z_\ells$, we need to exploit the information available from data and embedded in the set $\mathcal{C}$, and find $\mathbf{K}$ such that $\mathbf{F}_\ells + \mathbf{L}_\ells Z + \mathbf{B}_\ells \mathbf{K}$ is Schur for all $Z\in \mathcal{C}$.
This is equivalent to the robust control problem
\begin{subequations}
\label{rob_contr_probl_sys_xi}
\begin{align}
& \text{find} & & \mathbf{K}, P = P^\top \succ 0 \\
& \text{s.t.} & & (\mathbf{F}_\ells + \mathbf{L}_\ells Z + \mathbf{B}_\ells \mathbf{K}) P (\mathbf{F}_\ells + \mathbf{L}_\ells Z + \mathbf{B}_\ells \mathbf{K})^\top   \notag \\
& & & - P  \prec 0 \qquad \forall Z \in \mathcal{C}. \label{rob_contr_probl_sys_xi:ineq}
\end{align}
\end{subequations}

We show in Section~\ref{sec:reform_setC} (see Lemma~\ref{lemma:cons_asmpt_for_set_C} below) that, under Assumption~\ref{ass:Psi0}, we can rewrite the set $\mathcal{C}$ in a form instrumental to render $\mathbf{F}_\ells + \mathbf{L}_\ells Z + \mathbf{B}_\ells \mathbf{K}$ Schur for all $Z \in \mathcal{C}$. Moreover, the set $\mathcal{C}$ is bounded, which is beneficial in solving~\eqref{rob_contr_probl_sys_xi} (as opposed to an unbounded set $\mathcal{C}$).
Section~\ref{sec:stab_aux_sys} aims at finding a semidefinite program equivalent to~\eqref{rob_contr_probl_sys_xi} in terms of feasibility.
Section~\ref{sec:from_aux_sys_to_actual_sys} shows rigorously how to transfer the stabilization of~\eqref{sys_xi_star}, by a $\mathbf{K}$ that satifies~\eqref{rob_contr_probl_sys_xi}, to the stabilization of~\eqref{sys_x_star}, as required by Problem~\ref{probl}.
Moreover, Section~\ref{sec:disc_asmpt_pers_exc} discusses the implications of Assumption~\ref{ass:Psi0} and shows, as a key result (see Lemma~\ref{lemma:implicationSNRasmpt} below), that observability of $(\As,\Cs)$ and Assumption \ref{ass:Psi0} imply $p \ells = n$.
Based on this discussion, we conclude that our assumptions imply $p \ells = n$;
so, we first give a solution for the case $p \ells = n$; in Section~\ref{sec:results_pells_neq_n} we revisit Problem~\ref{probl} and show how to extend the results to the case $p \ells \neq n$.

\section{Results}
\label{sec:results_pells=n}

\subsection{Reformulation of set $\mathcal{C}$ of parameters consistent with data}
\label{sec:reform_setC}

The set $\mathcal{C}$ in~\eqref{setC} can be equivalently rewritten as
\begin{align*}
\mathcal{C} = \{ Z  \colon  \Psi_1 & - \mathbf{F}_\ells \Psi_0 - \mathbf{B}_\ells U_1 
= \mathbf{L}_\ells Z \Psi_0  \\
& + \mathbf{L}_\ells [I_p \ - \! Z]  \Delta, \ \ \Delta \Delta^\top \preceq \Theta \}.
\end{align*}
From~\eqref{exp_data_mat}, we observe that
\begin{align*}
\Psi_1 - \mathbf{F}_\ells \Psi_0 - \mathbf{B}_\ells U_1 
=
\mathbf{L}_\ells Z_\ells \Psi_0 + \mathbf{L}_\ells [I_p \ - \! Z_\ells] \Delta_{10}
\end{align*}
and, hence, all block rows of $\Psi_1 - \mathbf{F}_\ells \Psi_0 - \mathbf{B}_\ells U_1$ are zero except for block row $\ells$.
Block row $\ells$ can be obtained as
\begin{align*}
\mathbf{L}_\ells^\top (\Psi_1 - \mathbf{F}_\ells \Psi_0 - \mathbf{B}_\ells U_1) = \mathbf{L}_\ells^\top \Psi_1
\end{align*}
by the definition of $\mathbf{F}_\ells$ and $\mathbf{B}_\ells$ as in~\eqref{sys_xi:A0_L_B}.
Thanks to these observations, $\mathcal{C}$ is equivalently rewritten as
\begin{align*}
& \mathcal{C} = \{ Z  \colon \mathbf{L}_\ells^\top \Psi_1 -  Z \Psi_0
 = [I_p \ - \! Z]  \Delta, \ \Delta \Delta^\top \preceq \Theta \}
\end{align*}
where we use $\mathbf{L}_\ells^\top \mathbf{L}_\ells = I_p$.
To further arrange the set $\mathcal{C}$ in a convenient form, we use the next key result from~\cite{stateinputerrors}.

\begin{proposition}[\hspace*{-.8ex}{\cite[Prop.~1]{stateinputerrors}}]
\label{proposition:matrix_elim}
Consider matrices $E \in \real^{n_1 \times n_2}$, $F\in\real^{n_1 \times n_3}$, $G \in \real^{n_3 \times n_3}$ 
with $G = G^\top \succeq 0$. Then,
\begin{subequations}
\begin{align}
\label{matrix_elim_lemma_ineq}
E E^\top \preceq F G F^\top
\end{align}
if and only if there exists $D \in \real^{n_3 \times n_2}$ such that
\begin{align}
\label{matrix_elim_lemma_eq_ineq}
E = F D,\quad D D^\top \preceq G.
\end{align}
\end{subequations}
\end{proposition}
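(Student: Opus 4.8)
The plan is to establish the two implications of the equivalence separately. The direction \eqref{matrix_elim_lemma_eq_ineq} $\Rightarrow$ \eqref{matrix_elim_lemma_ineq} is immediate: if $E = FD$ with $DD^\top \preceq G$, then congruence by $F$ preserves the ordering, so $F(DD^\top)F^\top \preceq FGF^\top$, and since $EE^\top = F(DD^\top)F^\top$ we obtain $EE^\top \preceq FGF^\top$. The converse \eqref{matrix_elim_lemma_ineq} $\Rightarrow$ \eqref{matrix_elim_lemma_eq_ineq} is the substantive part and I would prove it by a Douglas-type factorization argument carried out by hand with the Moore--Penrose pseudoinverse.

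For necessity, I would first absorb $G$ into a single matrix. Since $G = G^\top \succeq 0$, the positive semidefinite square root $G^{1/2}$ exists, and setting $N := F G^{1/2}$ one has $FGF^\top = NN^\top$, so the hypothesis reads $EE^\top \preceq NN^\top$. I claim it then suffices to produce $C$ with $E = NC$ and $CC^\top \preceq I$: setting $D := G^{1/2} C$ gives $FD = FG^{1/2}C = NC = E$ and $DD^\top = G^{1/2}(CC^\top)G^{1/2} \preceq G^{1/2} G^{1/2} = G$, which is exactly \eqref{matrix_elim_lemma_eq_ineq}.

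To build such a $C$, take $C := N^{+} E$ with $N^{+}$ the Moore--Penrose pseudoinverse of $N$, and check two facts. First, $NC = E$: from $EE^\top \preceq NN^\top$, any $x$ with $N^\top x = 0$ satisfies $x^\top EE^\top x \le x^\top NN^\top x = 0$, hence $E^\top x = 0$; thus $\ker N^\top \subseteq \ker E^\top$, equivalently $\im E \subseteq \im N$, and since $NN^{+}$ is the orthogonal projector onto $\im N$ we get $NC = NN^{+}E = E$. Second, $CC^\top \preceq I$: congruence by $N^{+}$ gives $CC^\top = N^{+} EE^\top (N^{+})^\top \preceq N^{+} NN^\top (N^{+})^\top$, and using $(N^{+})^\top = (N^\top)^{+}$, the fact that $N^\top (N^\top)^{+}$ is the orthogonal projector onto $\im N^\top$, and $N$ times that projector equals $N$ (because $(\im N^\top)^\perp = \ker N$), the right-hand side collapses to $N^{+} N$, the orthogonal projector onto $\im N^\top$, which is $\preceq I$. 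This produces the required $D$ and closes the argument.

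The pseudoinverse bookkeeping in the last step is routine; the one genuinely important point — and the only place the full strength of $EE^\top \preceq NN^\top$ is used, rather than the mere range inclusion $\im E \subseteq \im N$ — is the bound $CC^\top \preceq I$, so that is where I would be most careful, in particular with the possibly rank-deficient $N$. An equivalent route that avoids pseudoinverses is to take an SVD $N = U\Sigma V^\top$, discard the zero rows of $\Sigma$ (harmless thanks to $\im E \subseteq \im N$) to reduce to the case where $N$ has full row rank, and invert the nonsingular block directly; the pseudoinverse formulation merely packages this reduction compactly.
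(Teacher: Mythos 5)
Your proof is correct. Both directions check out: the sufficiency part is the standard congruence argument, and for necessity the construction $N := FG^{1/2}$, $C := N^{+}E$, $D := G^{1/2}C$ works, with the two essential facts verified properly — $NN^{+}E = E$ follows from $\ker N^\top \subseteq \ker E^\top$ (equivalently $\im E \subseteq \im N$), which is exactly what $EE^\top \preceq NN^\top$ gives on $\ker N^\top$, and $CC^\top = N^{+}EE^\top (N^{+})^\top \preceq N^{+}NN^\top (N^{+})^\top = N^{+}N \preceq I$ is a legitimate use of congruence plus the Moore--Penrose projector identities, valid also when $N$ is rank deficient. One point of comparison, though: the paper itself contains no proof of this statement — Proposition~\ref{proposition:matrix_elim} is imported verbatim from \cite[Prop.~1]{stateinputerrors} — so there is no in-paper argument to measure yours against. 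Your pseudoinverse (Douglas-factorization) derivation, or equivalently the SVD reduction you sketch, is the standard way such matrix-elimination results are established and would serve as a complete self-contained substitute for the citation.
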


Since going from \eqref{matrix_elim_lemma_eq_ineq} to \eqref{matrix_elim_lemma_ineq} dispenses with matrix $D$, Proposition~\ref{proposition:matrix_elim} can be interpreted as a matrix elimination result.
By Proposition~\ref{proposition:matrix_elim},
\begin{align*}
\mathcal{C} = \Big\{ Z  \colon & (\mathbf{L}_\ells^\top \Psi_1 -  Z \Psi_0)(\mathbf{L}_\ells^\top \Psi_1 -  Z \Psi_0)^\top \\
& \preceq [I_p \ - \! Z] \Theta [I_p \ - \! Z]^\top  \\
& \overset{\eqref{Theta_partitioned}}{=}  [I_p \ - \! Z] \smat{\Theta_{11} & \Theta_{12}\\ \Theta_{12}^\top & \Theta_{22}} [I_p \ - \! Z]^\top 
\Big\}.
\end{align*}
By expanding the products and defining $\mathscr{A}$, $\mathscr{B}$, $\mathscr{C}$, we have
\begin{subequations}\label{eq:setC}
\begin{align}
& \mathcal{C} = \big\{ Z \colon Z \mathscr{A} Z^{\top} + Z \mathscr{B}^{\top} + \mathscr{B} Z^{\top} + \mathscr{C} \preceq 0 \big\}, \label{setC_ABC}\\
& \mathscr{A} := \Psi_0 \Psi_0^\top - \Theta_{22},\mathscr{B} :=  -  \mathbf{L}_\ells^\top \Psi_1 \Psi_0^{\top} + \Theta_{12}, \label{setC:AB}\\
& \mathscr{C} := \mathbf{L}_\ells^\top \Psi_1 \Psi_1^\top \mathbf{L}_\ells - \Theta_{11}. \label{setC:C}
\end{align}
\end{subequations}
Assumption~\ref{ass:Psi0} amounts to asking $\mathscr{A} \succ 0$, hence, $\mathscr{A}^{-1}$ exists.
We can then define
\begin{align}
& \mathscr{Z} := - \mathscr{B} \mathscr{A}^{-1}, \quad \mathscr{Q} := \mathscr{B} \mathscr{A}^{-1} \mathscr{B}^\top - \mathscr{C}. \label{setC:ZQ}
\end{align}
Assumption~\ref{ass:Psi0} guarantees the next result.
\begin{lemma}
\label{lemma:cons_asmpt_for_set_C}
Under Assumption~\ref{ass:Psi0}, i.e., $\mathscr{A} \succ 0$, we have that: 
\begin{align}
\mathcal{C} & = \big\{ Z\colon (Z - \mathscr{Z}) \mathscr{A} (Z - \mathscr{Z})^\top \preceq \mathscr{Q} \big\}; \label{setC_ZAQ} \\
\mathscr{Q}  & \succeq 0 \, ; \notag \\
\mathcal{C} & = \big\{ \mathscr{Z} + \mathscr{Q}^{1/2} \Upsilon \mathscr{A}^{-1/2} \colon \Upsilon \Upsilon^\top \preceq  I_p \big\}; \label{setC_unitBall}
\end{align}  
$\mathcal{C}$ is bounded with respect to any matrix norm.
\end{lemma}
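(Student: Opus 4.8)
The plan is to establish the four assertions of the lemma in turn, using only Assumption~\ref{ass:Psi0} --- which says precisely that $\mathscr{A} = \Psi_0\Psi_0^\top - \Theta_{22}\succ 0$, so that $\mathscr{A}^{-1}$, the symmetric square root $\mathscr{A}^{1/2}$, and $\mathscr{A}^{-1/2}:=(\mathscr{A}^{1/2})^{-1}$ all exist --- together with the matrix-elimination Proposition~\ref{proposition:matrix_elim}. I would first obtain~\eqref{setC_ZAQ} by completing the square in the quadratic-in-$Z$ description~\eqref{setC_ABC}: expanding $(Z-\mathscr{Z})\mathscr{A}(Z-\mathscr{Z})^\top$ with $\mathscr{Z} = -\mathscr{B}\mathscr{A}^{-1}$ and $\mathscr{A}=\mathscr{A}^\top$, the cross terms reproduce $Z\mathscr{B}^\top + \mathscr{B}Z^\top$ (using $\mathscr{Z}\mathscr{A} = -\mathscr{B}$), and the constant term equals $\mathscr{Z}\mathscr{A}\mathscr{Z}^\top = \mathscr{B}\mathscr{A}^{-1}\mathscr{B}^\top$; hence $Z\mathscr{A}Z^\top + Z\mathscr{B}^\top + \mathscr{B}Z^\top + \mathscr{C} = (Z-\mathscr{Z})\mathscr{A}(Z-\mathscr{Z})^\top - \mathscr{Q}$, and the inequality $\preceq 0$ defining $\mathcal{C}$ becomes $(Z-\mathscr{Z})\mathscr{A}(Z-\mathscr{Z})^\top\preceq\mathscr{Q}$.

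Next, for $\mathscr{Q}\succeq 0$ I would invoke the fact, already recorded after~\eqref{setC}, that the true parameter $Z_\ells$ lies in $\mathcal{C}$ (since the actual noise $\Delta_{10}$ belongs to $\mathcal{D}$ and satisfies~\eqref{exp_data_mat}); thus $\mathcal{C}\neq\emptyset$, and evaluating~\eqref{setC_ZAQ} at any $Z\in\mathcal{C}$ gives $0\preceq (Z-\mathscr{Z})\mathscr{A}(Z-\mathscr{Z})^\top\preceq\mathscr{Q}$ because $\mathscr{A}\succ 0$. With $\mathscr{Q}^{1/2}$ now well defined, the unit-ball form~\eqref{setC_unitBall} follows by a second application of Proposition~\ref{proposition:matrix_elim}: writing $W := (Z-\mathscr{Z})\mathscr{A}^{1/2}$, the condition in~\eqref{setC_ZAQ} reads $WW^\top\preceq\mathscr{Q}=\mathscr{Q}^{1/2}I_p(\mathscr{Q}^{1/2})^\top$, which by the proposition (with $E=W$, $F=\mathscr{Q}^{1/2}$, $G=I_p$) is equivalent to $W=\mathscr{Q}^{1/2}\Upsilon$ for some $\Upsilon$ with $\Upsilon\Upsilon^\top\preceq I_p$; invertibility of $\mathscr{A}^{1/2}$ then rewrites this as $Z = \mathscr{Z}+\mathscr{Q}^{1/2}\Upsilon\mathscr{A}^{-1/2}$.

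Finally, boundedness is immediate from~\eqref{setC_unitBall}: since $\Upsilon\Upsilon^\top\preceq I_p$ forces $\|\Upsilon\|=\sigma_{\max}(\Upsilon)\le 1$, every $Z\in\mathcal{C}$ obeys $\|Z\|\le\|\mathscr{Z}\| + \|\mathscr{Q}^{1/2}\|\,\|\mathscr{A}^{-1/2}\|$, so $\mathcal{C}$ is bounded in the induced $2$-norm and hence, by equivalence of norms on the finite-dimensional space of $p\times(p\ells+m\ells)$ matrices, bounded in every matrix norm. I expect no genuine difficulty here: the computations are short. The one place that merits care is the step $\mathscr{Q}\succeq 0$, which is not a purely algebraic identity but relies on $\mathcal{C}$ being nonempty --- i.e., on the standing modeling premise that the noise generating the data is consistent with the assumed bound --- and the only other thing to watch is bookkeeping the symmetry/transposes of $\mathscr{A}$, $\mathscr{Q}$ and their square roots so that the hypotheses $E=W$, $F=\mathscr{Q}^{1/2}$, $G=I_p$ of Proposition~\ref{proposition:matrix_elim} are matched exactly.
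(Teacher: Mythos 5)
Your proposal is correct, and its skeleton matches the paper's: complete the square to pass from \eqref{setC_ABC} to \eqref{setC_ZAQ}, then use $Z_\ells\in\mathcal{C}$ (nonemptiness of $\mathcal{C}$) together with $\mathscr{A}\succ 0$ to get $\mathscr{Q}\succeq 0$. Where you differ is in how the last two claims are discharged. For the unit-ball parametrization \eqref{setC_unitBall} the paper simply invokes an external result (\cite[Prop.~1]{AndreaPetersen2022}), whereas you re-derive it by applying the paper's own matrix-elimination result, Proposition~\ref{proposition:matrix_elim}, with $E=(Z-\mathscr{Z})\mathscr{A}^{1/2}$, $F=\mathscr{Q}^{1/2}$, $G=I_p$; this instantiation is dimensionally consistent and, since $\mathscr{A}^{1/2}$ is invertible, gives exactly the two-way equivalence needed for the set equality, so your argument is valid and has the virtue of reusing only machinery already stated in the paper. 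Likewise, for boundedness the paper cites \cite[Lemma~2]{AndreaPetersen2022}, while you bound $\|Z\|\le\|\mathscr{Z}\|+\|\mathscr{Q}^{1/2}\|\,\|\mathscr{A}^{-1/2}\|$ directly from \eqref{setC_unitBall} using $\|\Upsilon\|\le 1$ and then appeal to equivalence of norms in finite dimension --- more elementary and fully self-contained. In short: the paper's proof is shorter by deferring to citations; yours trades a few extra lines for a proof that needs nothing outside the paper, and you correctly flag that $\mathscr{Q}\succeq 0$ is the one step that is not purely algebraic but rests on $\Delta_{10}\in\mathcal{D}$ guaranteeing $Z_\ells\in\mathcal{C}$.
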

\begin{proof}
See Appendix~\ref{app:proof_lemma_cons_asmpt_for_set_C}.
\end{proof}

\subsection{Stabilization of the auxiliary system}
\label{sec:stab_aux_sys}

After Problem~\ref{probl}, we motivated the relevance of the auxiliary system \eqref{sys_xi_star} in the context of our setting. 
Since the actual $Z_\ells$ is unknown, we would like to find $\mathbf{K}$ rendering $\mathbf{F}_\ells + \mathbf{L}_\ells Z + \mathbf{B}_\ells \mathbf{K}$ Schur for all $Z \in \mathcal{C}$, i.e., solve \eqref{rob_contr_probl_sys_xi}.
By the approach in~\cite{AndreaPetersen2022} and the use of Petersen's lemma \cite{petersen1987stabilization}, feasibility of~\eqref{rob_contr_probl_sys_xi} can be equivalently reformulated as in the next result.

\begin{lemma}
\label{lemma:Petersen}
Under Assumption~\ref{ass:Psi0}, feasibility of~\eqref{rob_contr_probl_sys_xi} is equivalent to feasibility of
\begin{subequations}
\label{rob_contr_probl_LMI}
\begin{align}
& \text{find} & & \mathbf{Y}, P = P^\top \succ 0 \label{rob_contr_probl_LMI:find}\\
& \text{s.t.} & & 
\bmat{
-P - \mathbf{L}_\ells\mathscr{C}\mathbf{L}_\ells^\top & \mathbf{F}_\ells P + \mathbf{B}_\ells \mathbf{Y} & \mathbf{L}_\ells \mathscr{B}\\
P \mathbf{F}_\ells^\top + \mathbf{Y}^\top \mathbf{B}_\ells^\top & - P & -P \\
\mathscr{B}^\top \mathbf{L}_\ells^\top & -P & -\mathscr{A}} \prec 0. \label{rob_contr_probl_LMI:ineq}
\end{align}
\end{subequations}
If \eqref{rob_contr_probl_LMI} is feasible, a $\mathbf{K}$ satisfying \eqref{rob_contr_probl_sys_xi} is $\mathbf{K} = \mathbf{Y}P^{-1}$.
\end{lemma}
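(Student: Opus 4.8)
\emph{Proof plan.} My plan is to remove the uncertain parameter from the robust inequality~\eqref{rob_contr_probl_sys_xi:ineq} using Petersen's lemma, in the spirit of~\cite{AndreaPetersen2022}, and then to reshape the outcome into~\eqref{rob_contr_probl_LMI:ineq} by Schur complements. In order: (i) replace $\mathcal{C}$ by the unit-ball parametrization of Lemma~\ref{lemma:cons_asmpt_for_set_C}; (ii) a Schur complement to kill the quadratic dependence of~\eqref{rob_contr_probl_sys_xi:ineq} on the closed-loop matrix, leaving a term affine in the uncertainty; (iii) Petersen's lemma~\cite{petersen1987stabilization,AndreaPetersen2022}; (iv) absorb the scalar multiplier it produces into the Lyapunov variable; (v) a final Schur complement, together with the identities $\mathscr{Z} = -\mathscr{B}\mathscr{A}^{-1}$ and $\mathscr{Q} = \mathscr{B}\mathscr{A}^{-1}\mathscr{B}^\top - \mathscr{C}$ of~\eqref{setC:ZQ}, to recover~\eqref{rob_contr_probl_LMI:ineq}.

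Under Assumption~\ref{ass:Psi0}, Lemma~\ref{lemma:cons_asmpt_for_set_C} yields $\mathscr{A}\succ 0$, $\mathscr{Q}\succeq 0$, and the fact that $Z\in\mathcal{C}$ iff $Z = \mathscr{Z} + \mathscr{Q}^{1/2}\Upsilon\mathscr{A}^{-1/2}$ for some $\Upsilon$ with $\Upsilon\Upsilon^\top\preceq I_p$. Hence~\eqref{rob_contr_probl_sys_xi:ineq} is equivalent to requiring, for all such $\Upsilon$, that $(\mathbf{A}_{\mathrm{cl}} + \mathbf{L}_\ells\mathscr{Q}^{1/2}\Upsilon\mathscr{A}^{-1/2})\,P\,(\mathbf{A}_{\mathrm{cl}} + \mathbf{L}_\ells\mathscr{Q}^{1/2}\Upsilon\mathscr{A}^{-1/2})^\top - P\prec 0$, where $\mathbf{A}_{\mathrm{cl}}:=\mathbf{F}_\ells + \mathbf{L}_\ells\mathscr{Z} + \mathbf{B}_\ells\mathbf{K}$. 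As $P\succ 0$, a Schur complement rewrites this as $\smat{-P & (\mathbf{A}_{\mathrm{cl}} + \mathbf{L}_\ells\mathscr{Q}^{1/2}\Upsilon\mathscr{A}^{-1/2})P \\ \star & -P}\prec 0$, and separating the $\Upsilon$-dependent part gives the standard form $\mathcal{M}_0 + \mathcal{U}\Upsilon\mathcal{V} + \mathcal{V}^\top\Upsilon^\top\mathcal{U}^\top\prec 0$ for all $\Upsilon\Upsilon^\top\preceq I_p$, with $\mathcal{M}_0:=\smat{-P & \mathbf{A}_{\mathrm{cl}}P \\ \star & -P}$, $\mathcal{U}:=\smat{\mathbf{L}_\ells\mathscr{Q}^{1/2} \\ 0}$, $\mathcal{V}:=\smat{0 & \mathscr{A}^{-1/2}P}$. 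Setting $\mathbf{Y}:=\mathbf{K}P$ makes $\mathcal{M}_0$ affine in $(P,\mathbf{Y})$; crucially $\mathcal{U}$ does not depend on $(P,\mathbf{Y})$, while $\mathcal{V}^\top\mathcal{V} = \smat{0 & 0 \\ 0 & P\mathscr{A}^{-1}P}$ is quadratic in $P$.

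Petersen's lemma --- applicable since $\mathcal{V}\neq 0$ (because $P\succ 0$) --- makes this robust inequality equivalent to the existence of $\mu>0$ with $\mathcal{M}_0 + \mu\,\mathcal{U}\mathcal{U}^\top + \mu^{-1}\mathcal{V}^\top\mathcal{V}\prec 0$. I then dispose of $\mu$ by homogeneity: substituting $(P,\mathbf{Y})\mapsto(P/\mu,\mathbf{Y}/\mu)$ and multiplying by $1/\mu>0$, and using that $\mathcal{M}_0$ is affine and $\mathcal{V}^\top\mathcal{V}$ quadratic in $P$ while $\mathcal{U}\mathcal{U}^\top$ is constant, the above becomes feasibility in $P\succ 0$, $\mathbf{Y}$ of $\mathcal{M}_0 + \mathcal{U}\mathcal{U}^\top + \mathcal{V}^\top\mathcal{V}\prec 0$, i.e. of $\smat{-P + \mathbf{L}_\ells\mathscr{Q}\mathbf{L}_\ells^\top & \mathbf{A}_{\mathrm{cl}}P \\ \star & -P + P\mathscr{A}^{-1}P}\prec 0$; the gain $\mathbf{K}=\mathbf{Y}P^{-1}$ is left invariant by the scaling. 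Finally, using~\eqref{setC:ZQ} I write $\mathbf{A}_{\mathrm{cl}}P = \mathbf{F}_\ells P + \mathbf{B}_\ells\mathbf{Y} - \mathbf{L}_\ells\mathscr{B}\mathscr{A}^{-1}P$ and $\mathbf{L}_\ells\mathscr{Q}\mathbf{L}_\ells^\top = \mathbf{L}_\ells\mathscr{B}\mathscr{A}^{-1}\mathscr{B}^\top\mathbf{L}_\ells^\top - \mathbf{L}_\ells\mathscr{C}\mathbf{L}_\ells^\top$, collect the three $\mathscr{A}^{-1}$-sandwiched terms into $\smat{\mathbf{L}_\ells\mathscr{B} \\ -P}\mathscr{A}^{-1}\smat{\mathscr{B}^\top\mathbf{L}_\ells^\top & -P}$, and apply one more Schur complement on the block $-\mathscr{A}\prec 0$; this yields exactly~\eqref{rob_contr_probl_LMI:ineq}, and $\mathbf{Y}=\mathbf{K}P$ gives $\mathbf{K}=\mathbf{Y}P^{-1}$. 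As every step is an equivalence, feasibility of~\eqref{rob_contr_probl_sys_xi} is equivalent to that of~\eqref{rob_contr_probl_LMI}.

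The step I expect to need the most care is (iv): proving that the Petersen multiplier can be eliminated, which hinges precisely on the left factor $\mathbf{L}_\ells\mathscr{Q}^{1/2}$ of $\Upsilon$ being independent of the Lyapunov variable, and on keeping track of the affine-versus-quadratic dependence on $P$ through the rescaling. The rest is bookkeeping --- verifying that the regrouping in step (v) produces exactly the $(1,3)$, $(2,3)$ and $(3,3)$ blocks of~\eqref{rob_contr_probl_LMI:ineq} with the right signs --- since the Schur-complement steps are routine given $P\succ 0$ and $\mathscr{A}\succ 0$.
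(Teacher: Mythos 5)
Your proposal is correct and follows essentially the same route as the paper's proof: Schur complement of~\eqref{rob_contr_probl_sys_xi:ineq}, the change of variable $\mathbf{Y}=\mathbf{K}P$, the unit-ball parametrization of $\mathcal{C}$ from Lemma~\ref{lemma:cons_asmpt_for_set_C}, Petersen's lemma, elimination of the scalar multiplier by the homogeneity rescaling of $(P,\mathbf{Y})$, and the final regrouping via~\eqref{setC:ZQ} plus a Schur complement on $-\mathscr{A}$. The only cosmetic difference is where you place the multiplier ($\mu$ versus $1/\lambda$ on the two rank-one terms), which is immaterial.
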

\begin{proof}
See Appendix~\ref{app:proof_lemma_Petersen}.
\end{proof}

The matrix inequality \eqref{rob_contr_probl_LMI:ineq} is equivalent, from~\eqref{LMIinZQA} and by Schur complement, to the linear matrix inequality (in $\mathbf{Y}$ and $P$)
\begin{align*}
& 0 \succ 
\bmat{
- P + \mathbf{L}_\ells \mathscr{Q} \mathbf{L}_\ells^\top & \mathbf{F}_\ells P + \mathbf{L}_\ells \mathscr{Z} P + \mathbf{B}_\ells \mathbf{Y} & 0\\
\star & -P & P\\
\star & \star & -\mathscr{A}
}.
\end{align*}

\subsection{From stabilization of the auxiliary system to stabilization of the actual system}
\label{sec:from_aux_sys_to_actual_sys}

To transfer the result in Lemma~\ref{lemma:Petersen} on the stabilization of~\eqref{sys_xi_star} to a result on the stabilization of~\eqref{sys_x_star}, as required by Problem~\ref{probl}, we need to investigate the relation between their respective solutions.

We consider the generic system in~\eqref{sys_x} with $(A,C)$ observable so that the matrices $\mathcal{O}_\ell$, $\mathcal{T}_\ell$, $\mathcal{R}_\ell$ can be defined as in~\eqref{O_ell_T_ell_R_ell}, with $\rank \mathcal{O}_\ell = n$.
We consider also the system in~\eqref{sys_xi} with matrices $\mathbf{F}_\ell$, $\mathbf{L}_\ell$, $\mathbf{B}_\ell$, $Z_\ell$ selected, as in~\eqref{sys_xi:A0_L_B_Z}, from the matrices $(A,B,C)$ of~\eqref{sys_x}.
System \eqref{sys_x} is the counterpart of~\eqref{sys_x_star}, and \eqref{sys_xi} is the counterpart of~\eqref{sys_xi_star}. We use \eqref{sys_x} and \eqref{sys_xi} because the considerations in this section are general.

Suppose that there exists a control law $v = \mathbf{K} \xi$ that asymptotically stabilizes \eqref{sys_xi}, i.e., that makes the matrix $\mathbf{A}_\ell+\mathbf{B}_\ell\mathbf{K} = \mathbf{F}_\ell+\mathbf{L}_\ell Z_\ell+\mathbf{B}_\ell\mathbf{K}$ Schur.
Our goal for this section is to show that the same $\mathbf{K}$ can asymptotically stabilize \eqref{sys_x} if the dynamic controller
\begin{subequations}
\label{dyn_contr}
\begin{align}
\chi^+ & = \mathbf{F}_\ell \chi + \mathbf{L}_\ell y + \mathbf{B}_\ell u \label{dyn_contr:chi}\\
u & = \mathbf{K} \chi \label{dyn_contr:u=Kchi}
\end{align}
\end{subequations}
is put in feedback with~\eqref{sys_x}, as depicted in Figure~\ref{fig:plant+dyn_contr}.
The scheme in Figure~\ref{fig:plant+dyn_contr} can be equivalently rearranged as on the left of Figure~\ref{fig:x_chi_CL_vs_xi_CL}, which leads us to consider the combination of~\eqref{sys_x} and \eqref{dyn_contr:chi}:
\begin{subequations}
\label{sys_x_chi}
\begin{align}
x^+ &  = A x + B u, \quad y = C x, \label{sys_x_chi:x}\\
\chi^+ &  = \mathbf{F}_\ell \chi + \mathbf{L}_\ell y + \mathbf{B}_\ell u. \label{sys_x_chi:chi}
\end{align}
\end{subequations}

We have the relation for the input-output evolution of~\eqref{sys_xi} and \eqref{sys_x_chi} as in the next result.

\begin{lemma}
\label{lemma:io_(x,chi)_vs_xi}
Let $(A,C)$ be observable.
For each $\hat{x}$, $\hat{\chi}$ and sequence $\{ u(k) \}_{k = 0}^\infty$, there exists $\hat{\xi}$ such that:%
\begin{itemize}[nosep,noitemsep,left=0pt]
\item the solution $\smat{x(\cdot)\\ \chi(\cdot)}$ to~\eqref{sys_x_chi} with initial condition $\smat{x(0)\\ \chi(0)} = \smat{\hat{x}\\ \hat{\chi}}$ and with input $\{ u(k) \}_{k = 0}^\infty$,
\item the corresponding output response $y(\cdot) = C x(\cdot)$, and 
\item the solution $\xi(\cdot)$ to \eqref{sys_xi} with initial condition $\xi(\ell) = \hat{\xi}$ and input $\{ v(k) \}_{k = \ell}^\infty = \{ u(k) \}_{k = \ell}^\infty$
\end{itemize}
satisfy
\begin{align*}
\xi(k) = 
\left[
\begin{array}{c}
\begin{smallmatrix}
y(k-\ell)\\
\vdotsS\\
y(k-1)
\end{smallmatrix}\\
\hline
\begin{smallmatrix}
u(k-\ell)\\
\vdotsS\\
u(k-1)
\end{smallmatrix}
\end{array}
\right]
= \chi(k), \quad \forall k \ge \ell.
\end{align*}
\end{lemma}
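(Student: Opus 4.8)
The key observation is that the controller dynamics \eqref{dyn_contr:chi} is nothing but a shift register that stacks the past $\ell$ outputs and inputs: this is exactly the content of \eqref{chi_k_stack_past_io_star} (derived for the $\star$-system but valid verbatim for the generic \eqref{sys_x_chi:chi} since $\mathbf{F}_\ell,\mathbf{L}_\ell,\mathbf{B}_\ell$ have the same structure). So the plan is: first establish, purely from the structure of $\mathbf{F}_\ell,\mathbf{L}_\ell,\mathbf{B}_\ell$ in \eqref{sys_xi:A0_L_B}, that for the solution $\chi(\cdot)$ to \eqref{sys_x_chi:chi} driven by $\{y(k)\}_{k\ge 0}$ and $\{u(k)\}_{k\ge 0}$ one has
\[
\chi(k) = \left[\begin{array}{c}\begin{smallmatrix} y(k-\ell)\\ \vdotsS\\ y(k-1)\end{smallmatrix}\\ \hline \begin{smallmatrix} u(k-\ell)\\ \vdotsS\\ u(k-1)\end{smallmatrix}\end{array}\right], \quad \forall k \ge \ell,
\]
independently of the initial condition $\hat\chi$ (the shift register flushes its initial memory after $\ell$ steps). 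This handles the second equality $\xi(k)=\chi(k)$ once we know $\xi(k)$ equals the same stack.

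Second, for the first equality I would simply invoke Lemma~\ref{lemma:io_sys_x_is_io_sys_xi} applied to the triple $(A,B,C)$ of \eqref{sys_x_chi:x}: given $\hat x$ and the input sequence $\{u(k)\}_{k=0}^\infty$, that lemma furnishes a $\hat\xi$ such that the solution $\xi(\cdot)$ to \eqref{sys_xi} with $\xi(\ell)=\hat\xi$ and input $\{v(k)\}_{k=\ell}^\infty=\{u(k)\}_{k=\ell}^\infty$ satisfies \eqref{io_sys_x_is_io_sys_xi:xi_k}, i.e. $\xi(k)$ is precisely the stack of past outputs and inputs for all $k\ge\ell$, where $y(\cdot)=Cx(\cdot)$ is the output of \eqref{sys_x_chi:x}. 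Chaining the two displays gives $\xi(k)=[\,\cdots\,]=\chi(k)$ for all $k\ge\ell$, which is the claim. The $\hat\xi$ produced by Lemma~\ref{lemma:io_sys_x_is_io_sys_xi} is the $\hat\xi$ we are asked to exhibit.

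Concretely the write-up would be: (i) fix $\hat x,\hat\chi$ and $\{u(k)\}_{k=0}^\infty$; let $x(\cdot),\chi(\cdot)$ solve \eqref{sys_x_chi} and set $y(\cdot)=Cx(\cdot)$; (ii) unroll \eqref{sys_x_chi:chi} componentwise using the block-shift structure of $\mathbf{F}_\ell$ (the superdiagonal identity blocks), $\mathbf{L}_\ell$ (feeding $y$ into the $\ell$-th output block), $\mathbf{B}_\ell$ (feeding $u$ into the last input block) to get the stack identity for $\chi(k)$, $k\ge\ell$ — this is the same computation that produced \eqref{chi_k_stack_past_io_star}; (iii) apply Lemma~\ref{lemma:io_sys_x_is_io_sys_xi} to $(A,B,C)$, $\hat x$, $\{u(k)\}$ to obtain $\hat\xi$ and the matching stack identity for $\xi(k)$, $k\ge\ell$; (iv) equate. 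The only mild subtlety — and the one place to be careful — is checking that the input fed to \eqref{sys_x_chi:chi} is genuinely $\{y(k)\}$ and $\{u(k)\}$ as exogenous signals (which it is, by definition of the cascade \eqref{sys_x_chi}), so that the shift-register computation applies without needing $u=\mathbf{K}\chi$; in other words, Lemma~\ref{lemma:io_(x,chi)_vs_xi} is an open-loop statement about \eqref{sys_x_chi}, and the feedback $u=\mathbf{K}\chi$ is only imposed afterwards. Since both the $\chi$-side and the $\xi$-side identities are proved open-loop for the common input $\{u(k)\}$, there is no circularity. I expect no real obstacle here: the proof is essentially a bookkeeping reduction to Lemma~\ref{lemma:io_sys_x_is_io_sys_xi} plus the elementary shift-register unrolling, and the appendix version would likely defer most of (ii) and (iii) to the auxiliary results mentioned as being collected in Appendix~\ref{app:results_lin_sys}.
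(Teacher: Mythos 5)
Your proposal matches the paper's proof: the paper also splits the claim into (i) the shift-register identity for $\chi(k)$, proved by unrolling \eqref{sys_x_chi:chi} via induction using the structure of $\mathbf{F}_\ell$, $\mathbf{L}_\ell$, $\mathbf{B}_\ell$, and (ii) the identity for $\xi(k)$, obtained by noting the triangular (cascade) structure of \eqref{sys_x_chi} so that $x(\cdot)$ solves \eqref{sys_x} and Lemma~\ref{lemma:io_sys_x_is_io_sys_xi} supplies the required $\hat{\xi}$. Your remark that the statement is open-loop (no feedback $u=\mathbf{K}\chi$ involved, hence no circularity) is exactly the right reading, so the proposal is correct and essentially identical in approach.
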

\begin{proof}
See Appendix~\ref{app:proof_lemma_io_(x,chi)_vs_xi}.
\end{proof}

\begin{figure}[t]
\centerline{\includegraphics[scale=0.75]{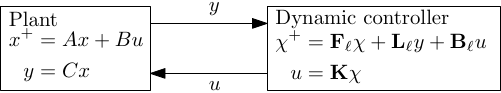}}
\caption{Plant and dynamic controller.}
\label{fig:plant+dyn_contr}
\end{figure}

\begin{figure}[t]
\centerline{\includegraphics[scale=0.7]{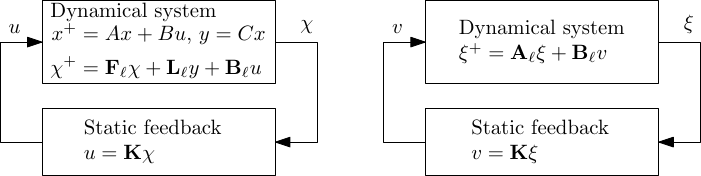}}
\caption{(Left) Plant in~\eqref{sys_x} and dynamic controller in~\eqref{dyn_contr}. (Right) Auxiliary system in~\eqref{sys_xi} and static controller.}
\label{fig:x_chi_CL_vs_xi_CL}
\end{figure}

With Lemma~\ref{lemma:io_(x,chi)_vs_xi} in place, we can turn to the main result of this section, which is the next one.

\begin{lemma}
\label{lemma:GAS_cl_aux_implies_GAS_cl}
Let $(A,C)$ be observable.
If $\mathbf{K}$ makes the matrix $\mathbf{A}_\ell+ \mathbf{B}_\ell \mathbf{K} = \mathbf{F}_\ell+\mathbf{L}_\ell Z_\ell+\mathbf{B}_\ell\mathbf{K}$ Schur, then $(x,\chi)=0$ is globally asymptotically stable for
\begin{subequations}
\label{sys_x_chi_CL}
\begin{align}
x^+ & = A x + B u, y = Cx \label{sys_x_chi_CL:x}\\
\chi^+ &  = \mathbf{F}_\ell \chi + \mathbf{L}_\ell y + \mathbf{B}_\ell u \label{sys_x_chi_CL:chi}\\
u & = \mathbf{K} \chi. \label{sys_x_chi_CL:u}
\end{align}
\end{subequations}
\end{lemma}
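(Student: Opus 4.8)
The plan is to leverage Lemma~\ref{lemma:io_(x,chi)_vs_xi} to transfer the assumed Schur property of $\mathbf{A}_\ell+\mathbf{B}_\ell\mathbf{K}$ to the closed loop \eqref{sys_x_chi_CL}. First note that, substituting $u=\mathbf{K}\chi$ and $y=Cx$, system \eqref{sys_x_chi_CL} is the linear time-invariant system $\smat{x^+\\ \chi^+}=\smat{A & B\mathbf{K}\\ \mathbf{L}_\ell C & \mathbf{F}_\ell+\mathbf{B}_\ell\mathbf{K}}\smat{x\\ \chi}$; for an LTI system, global asymptotic stability of the origin is equivalent to convergence of every solution to the origin, so it suffices to prove the latter. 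Fix $\hat x,\hat\chi$, let $(x(\cdot),\chi(\cdot))$ be the solution of \eqref{sys_x_chi_CL} with $x(0)=\hat x$, $\chi(0)=\hat\chi$, and set $u(k):=\mathbf{K}\chi(k)$ for all $k\ge0$. Then this same trajectory solves the open-loop pair \eqref{sys_x_chi} driven by the input $\{u(k)\}_{k=0}^\infty$, with output $y(\cdot)=Cx(\cdot)$, which is precisely the setting of Lemma~\ref{lemma:io_(x,chi)_vs_xi}.

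Applying Lemma~\ref{lemma:io_(x,chi)_vs_xi} to $\hat x$, $\hat\chi$ and $\{u(k)\}_{k=0}^\infty$ produces $\hat\xi$ such that the solution $\xi(\cdot)$ of \eqref{sys_xi} with $\xi(\ell)=\hat\xi$ and input $\{v(k)\}_{k=\ell}^\infty=\{u(k)\}_{k=\ell}^\infty$ satisfies $\xi(k)=\chi(k)$ for all $k\ge\ell$, and $\xi(k)$ equals the stack of the past $\ell$ outputs and inputs as in that lemma. Hence for $k\ge\ell$ we have $v(k)=u(k)=\mathbf{K}\chi(k)=\mathbf{K}\xi(k)$, so $\xi^+=\mathbf{A}_\ell\xi+\mathbf{B}_\ell v=(\mathbf{A}_\ell+\mathbf{B}_\ell\mathbf{K})\xi$ for all $k\ge\ell$. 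Since $\mathbf{A}_\ell+\mathbf{B}_\ell\mathbf{K}$ is Schur by hypothesis, $\xi(k)\to0$ as $k\to\infty$, and therefore $\chi(k)=\xi(k)\to0$.

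It remains to recover $x(k)\to0$. Since $(A,C)$ is observable, the response formula for \eqref{sys_x} (among the linear-system facts in Appendix~\ref{app:results_lin_sys}) gives, for all $k\ge0$, $\smat{y(k)\\ \vdots\\ y(k+\ell-1)}=\mathcal{O}_\ell x(k)+\mathcal{T}_\ell\smat{u(k)\\ \vdots\\ u(k+\ell-1)}$. Evaluating the stacked-I/O identity for $\xi$ at index $k+\ell$ (valid since that identity holds for every index $\ge\ell$) and combining, we obtain $\mathcal{O}_\ell x(k)=[\,I_{p\ell}\ \ {-\mathcal{T}_\ell}\,]\,\xi(k+\ell)$, hence $x(k)=\mathcal{O}_\ell^{\tu L}[\,I_{p\ell}\ \ {-\mathcal{T}_\ell}\,]\,\xi(k+\ell)$ for all $k\ge0$. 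Since $\xi(k+\ell)\to0$, also $x(k)\to0$. Thus every solution of \eqref{sys_x_chi_CL} converges to the origin, and, the system being LTI, $(x,\chi)=0$ is globally asymptotically stable.

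I do not expect a deep obstacle, since the heavy lifting is already done in Lemma~\ref{lemma:io_(x,chi)_vs_xi}; the points that need care are the index bookkeeping — Lemma~\ref{lemma:io_(x,chi)_vs_xi} supplies its identities only for $k\ge\ell$, so the reconstruction of $x(k)$ must be read off from $\xi(k+\ell)$ — and the observation that the closed-loop trajectory is itself an open-loop trajectory of \eqref{sys_x_chi} once $u=\mathbf{K}\chi$ is substituted, which is what makes Lemma~\ref{lemma:io_(x,chi)_vs_xi} applicable. Once $\xi(k)\to0$ is established, the equivalence between attractivity and asymptotic stability for LTI systems closes the argument.
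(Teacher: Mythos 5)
Your proof is correct and follows essentially the same route as the paper: invoke Lemma~\ref{lemma:io_(x,chi)_vs_xi} to identify $\chi(k)=\xi(k)$ for $k\ge\ell$, observe that $\xi$ is then a trajectory of the Schur closed-loop auxiliary system so $\xi\to0$, recover $x$ by observability, and conclude via the LTI equivalence of attractivity and global asymptotic stability. The only (immaterial) difference is the state-reconstruction step: you read $x(k)=\mathcal{O}_\ell^{\tu L}[\,I_{p\ell}\ \ -\mathcal{T}_\ell\,]\xi(k+\ell)$ from the forward window, whereas the paper uses \eqref{sol_sys_x:s_o_i} to express $x(k)$ from the backward window $\xi(k)$; both yield $x(k)\to0$.
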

\begin{proof}
See Appendix~\ref{app:proof_lemma_GAS_cl_aux_implies_GAS_cl}.
\end{proof}

By standard properties of linear systems, an immediate consequence of Lemma~\ref{lemma:GAS_cl_aux_implies_GAS_cl} is that the matrix $\smat{
A & B \mathbf{K} \\
\mathbf{L}_\ell C & \mathbf{F}_\ell + \mathbf{B}_\ell \mathbf{K}
}$, which corresponds to the closed-loop system in~\eqref{sys_x_chi_CL}, is Schur.
The next result is obtained immediately from Lemma~\ref{lemma:Petersen}, Lemma~\ref{lemma:io_(x,chi)_vs_xi}, Lemma~\ref{lemma:GAS_cl_aux_implies_GAS_cl}.

\begin{theorem}\label{thm:pl_eq_n}
With collected data $\{u^{\tu{m}}(k), y^{\tu{m}}(k) \}_{k=0}^T$ and under Assumptions~\ref{ass:observ}-\ref{ass:Psi0}, suppose \eqref{rob_contr_probl_LMI} is feasible and $\mathbf{K}$ is a controller returned from~\eqref{rob_contr_probl_LMI}.
Then, $(x,\chi) = 0$ is globally asymptotically stable for the feedback interconnection of the unknown plant
\begin{subequations}
\label{sys_xstar_chi_CL}
\begin{align}
x^+ & = A^\star x + B^\star u, y = C^\star x
\end{align}
and the controller
\begin{align}
\chi^+ &  =  ( \mathbf{F}_\ells + \mathbf{B}_\ells \mathbf{K} )  \chi + \mathbf{L}_\ells y \label{sys_xstar_chi_CL_1} \\
u & = \mathbf{K} \chi. \label{sys_xstar_chi_CL_2}
\end{align}
\end{subequations}
Moreover, for each $k \ge \ells$, 
\begin{align*}
\chi(k)= \big(y(k \! - \! \ells),  \dots  , y(k \! - \! 1), u(k \! - \! \ells),  \dots , u(k \! - \! 1)\big).
\end{align*}
\end{theorem}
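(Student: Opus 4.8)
The statement is essentially a corollary of the three lemmas cited just above it, so the plan is to chain them. \emph{First}, I would invoke Lemma~\ref{lemma:Petersen}: Assumption~\ref{ass:Psi0} is precisely $\mathscr{A}\succ0$ (see~\eqref{setC:AB}), so feasibility of the LMI~\eqref{rob_contr_probl_LMI} is equivalent to feasibility of the robust control problem~\eqref{rob_contr_probl_sys_xi}, and the returned gain $\mathbf{K}=\mathbf{Y}P^{-1}$ makes $\mathbf{F}_\ells+\mathbf{L}_\ells Z+\mathbf{B}_\ells\mathbf{K}$ Schur for \emph{every} $Z\in\mathcal{C}$.

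\emph{Second}, I would specialize to the true plant. By~\eqref{exp_data_mat} and the standing hypothesis $\Delta_{10}\in\mathcal{D}$, the unknown parameter $Z_\ells$ generated by $(\As,\Bs,\Cs)$ lies in $\mathcal{C}$; taking $Z=Z_\ells$ in the first step shows that $\mathbf{A}_\ells+\mathbf{B}_\ells\mathbf{K}=\mathbf{F}_\ells+\mathbf{L}_\ells Z_\ells+\mathbf{B}_\ells\mathbf{K}$ is Schur. Then I would apply Lemma~\ref{lemma:GAS_cl_aux_implies_GAS_cl}, instantiating the generic triple $(A,B,C)$ and index $\ell$ of Section~\ref{sec:from_aux_sys_to_actual_sys} as $(\As,\Bs,\Cs)$ and $\ells$; this is legitimate since $(\As,\Cs)$ is observable by Assumption~\ref{ass:observ}, which is exactly the hypothesis of that lemma, and under this identification $\mathbf{F}_\ell,\mathbf{L}_\ell,\mathbf{B}_\ell,Z_\ell$ become $\mathbf{F}_\ells,\mathbf{L}_\ells,\mathbf{B}_\ells,Z_\ells$. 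Lemma~\ref{lemma:GAS_cl_aux_implies_GAS_cl} then yields global asymptotic stability of $(x,\chi)=0$ for the closed loop~\eqref{sys_x_chi_CL} written with the starred matrices. Substituting $u=\mathbf{K}\chi$ from~\eqref{sys_x_chi_CL:chi} into the controller dynamics rewrites it exactly in the form~\eqref{sys_xstar_chi_CL_1}--\eqref{sys_xstar_chi_CL_2}, establishing the first assertion.

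\emph{Third}, the remaining claim on $\chi(k)$ is the identity~\eqref{chi_k_stack_past_io_star} (equivalently, the conclusion of Lemma~\ref{lemma:io_(x,chi)_vs_xi}): along any closed-loop solution, the controller dynamics~\eqref{dyn_contr_star:chi} driven by the realized signals $\{y(k)\}$ and $\{u(k)\}$ reconstruct the stack $\big(y(k-\ells),\dots,y(k-1),u(k-\ells),\dots,u(k-1)\big)$ for every $k\ge\ells$, which is precisely $\chi(k)$.

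I do not expect a genuine obstacle here: the heavy lifting --- Petersen's lemma for the robust LMI and the input-output equivalence between~\eqref{sys_x} and~\eqref{sys_xi} --- has already been pushed into Lemmas~\ref{lemma:Petersen}, \ref{lemma:io_(x,chi)_vs_xi} and~\ref{lemma:GAS_cl_aux_implies_GAS_cl}. The only points that need care are (i) noting $Z_\ells\in\mathcal{C}$, so that the ``for all $Z\in\mathcal{C}$'' guarantee actually covers the true system, (ii) the bookkeeping that matches the generic objects of Section~\ref{sec:from_aux_sys_to_actual_sys} with their starred versions, and (iii) the routine rearrangement turning~\eqref{sys_x_chi_CL} into~\eqref{sys_xstar_chi_CL}.
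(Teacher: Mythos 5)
Your proposal is correct and follows exactly the paper's own argument: Lemma~\ref{lemma:Petersen} gives a $\mathbf{K}$ rendering $\mathbf{F}_\ells+\mathbf{L}_\ells Z+\mathbf{B}_\ells\mathbf{K}$ Schur for all $Z\in\mathcal{C}$, specializing to $Z_\ells\in\mathcal{C}$ covers the true plant, Lemma~\ref{lemma:GAS_cl_aux_implies_GAS_cl} (with the starred matrices) gives global asymptotic stability of the closed loop in the form~\eqref{sys_xstar_chi_CL}, and Lemma~\ref{lemma:io_(x,chi)_vs_xi} gives the identity for $\chi(k)$. The extra bookkeeping you spell out (why $Z_\ells\in\mathcal{C}$, the instantiation of the generic triple, the rewriting with $u=\mathbf{K}\chi$) is implicit in the paper's proof and adds no divergence.
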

\begin{proof}
Lemma~\ref{lemma:Petersen} ensures that a $\mathbf{K}$ returned from~\eqref{rob_contr_probl_LMI} makes the matrix $\mathbf{F}_\ells + \mathbf{L}_\ells Z + \mathbf{B}_\ells \mathbf{K}$ Schur for all $Z \in \mathcal{C}$, and in particular for $Z = Z_\ells$, corresponding to $(A^\star, B^\star, C^\star)$.
Since the matrix $\mathbf{A}_\ells + \mathbf{B}_\ells \mathbf{K}$ is Schur, Lemma~\ref{lemma:GAS_cl_aux_implies_GAS_cl} ensures that $(x, \chi)=0$ is globally asymptotically stable for~\eqref{sys_xstar_chi_CL}.
Moreover, Lemma~\ref{lemma:io_(x,chi)_vs_xi} ensures that $\chi(k)= \big(y(k-\ells), \dots, y(k-1), u(k-\ells), \dots, u(k-1)\big)$ for each $k \ge \ells$.
\end{proof}

Theorem~\ref{thm:pl_eq_n} provides our solution to Problem~\ref{probl}, since \eqref{sys_xstar_chi_CL_1} and \eqref{sys_xstar_chi_CL_2} are equivalent to \eqref{dyn_contr_star:chi} and \eqref{dyn_contr_star:u=Kchi}.

\subsection{Interpretation and implications of Assumption~\ref{ass:Psi0}}
\label{sec:disc_asmpt_pers_exc}

To discuss Assumption~\ref{ass:Psi0}, we introduce
\begin{subequations}
\label{N0_S0}
\begin{align}
N_0 := &
\left[
\begin{array}{c}
\begin{smallmatrix}
d^y(0) & d^y(1) & \dots & d^y(T-\ells)\\
\vdotsS & \vdotsS &   & \vdotsS \\
d^y(\ells-1) & d^y(\ells) & \dots & d^y(T-1) \\
\end{smallmatrix}\\[8pt]
\hline
\begin{smallmatrix}
d^u(0) & d^u(1) & \dots & d^u(T-\ells) \rule{0pt}{8pt}\\
\vdotsS & \vdotsS &  & \vdotsS\\
d^u(\ells-1) & d^u(\ells) & \dots & d^u(T-1)
\end{smallmatrix}
\end{array}
\right], \label{N0}\\
S_0 :=& 
\left[
\begin{array}{c}
\begin{smallmatrix}
y(0) & y(1) & \dots & y(T-\ells)\\
\vdotsS & \vdotsS &   & \vdotsS \\
y(\ells-1) & y(\ells) & \dots & y(T-1) \\
\end{smallmatrix}\\[8pt]
\hline
\begin{smallmatrix}
u(0) & u(1) & \dots & u(T-\ells)  \rule{0pt}{8pt}\\
\vdotsS & \vdotsS &  & \vdotsS\\
u(\ells-1) & u(\ells) & \dots & u(T-1)
\end{smallmatrix}
\end{array}
\right] \label{S0}
\end{align}
\end{subequations}
where $S_0$ corresponds to the ``clean'' data we would collect from~\eqref{sys_x_star} if input and output noise $d^u$ and $d^y$ were absent.
From~\eqref{u_m,y_m} and \eqref{Psi1_Psi0_U0_Delta10}, we have that
\begin{equation}
\label{Psi0=S0+N0}
\Psi_0 = S_0 + N_0.
\end{equation}
The next result shows that observability of $(\As,\Cs)$ and Assumption~\ref{ass:Psi0} imply $p\ells = n$.
\begin{lemma}
\label{lemma:implicationSNRasmpt}
Let $(\As,\Cs)$ be observable. Then, 
\begin{align*}
& \text{Assumption~\ref{ass:Psi0}} \implies \rank S_0 = (p+m)\ells \\
& \implies \rank \mathcal{O}_{\ells} = p \ells \implies p\ells=n.
\end{align*}
\end{lemma}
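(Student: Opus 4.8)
The plan is to prove the three implications in order, the common engine being the decomposition $\Psi_0 = S_0 + N_0$ from~\eqref{Psi0=S0+N0} together with the standing noise bound $\Delta_{10}\in\mathcal{D}$.

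\emph{First implication.} I would start by noting that $N_0$ in~\eqref{N0} is exactly the block of the last $(p+m)\ells$ rows of $\Delta_{10}$ in~\eqref{Delta10}; that is, conformably with the partition of $\Theta$ in~\eqref{Theta_partitioned}, $N_0 = [\,0_{(p+m)\ells\times p}\ \ I_{(p+m)\ells}\,]\,\Delta_{10}$. Hence $\Delta_{10}\Delta_{10}^\top\preceq\Theta$ gives, by compressing both sides with $[\,0\ \ I\,]$, the bound $N_0 N_0^\top\preceq\Theta_{22}$. Combining this with Assumption~\ref{ass:Psi0} yields $\Psi_0\Psi_0^\top\succ\Theta_{22}\succeq N_0 N_0^\top$. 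Now suppose, for contradiction, that $\rank S_0 < (p+m)\ells$, and pick $z\neq 0$ with $z^\top S_0 = 0$. Then $z^\top\Psi_0 = z^\top S_0 + z^\top N_0 = z^\top N_0$, so $z^\top\Psi_0\Psi_0^\top z = z^\top N_0 N_0^\top z\le z^\top\Theta_{22}z < z^\top\Psi_0\Psi_0^\top z$, a contradiction. Therefore $\rank S_0 = (p+m)\ells$, i.e., $S_0$ has full row rank.

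\emph{Second implication.} Next I would use the standard input--output map of~\eqref{sys_x_star}: for every $j$,
\[
\smat{y(j)\\ \vdotsS\\ y(j+\ells-1)} = \mathcal{O}_{\ells}\,x(j) + \mathcal{T}_{\ells}\smat{u(j)\\ \vdotsS\\ u(j+\ells-1)},
\]
with $\mathcal{O}_{\ells}$, $\mathcal{T}_{\ells}$ built from $(\As,\Bs,\Cs)$ as in~\eqref{O_ell}, \eqref{T_ell}. Reading~\eqref{S0} column by column, this shows $S_0 = H_{\ells}\,\smat{X_0\\ U_0}$, where $X_0 := [\,x(0)\ \cdots\ x(T-\ells)\,]$, $U_0$ is the stacked input matrix forming the lower block of $S_0$, and $H_{\ells}$ is as in~\eqref{Hell}. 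Consequently $(p+m)\ells = \rank S_0 \le \rank H_{\ells}\le (p+m)\ells$, so $H_{\ells}$ has full row rank. Since elementary column operations (subtracting $\mathcal{T}_{\ells}$ times the bottom block rows from the top ones) transform $H_{\ells}$ into $\smat{\mathcal{O}_{\ells} & 0\\ 0 & I_{m\ells}}$, we get $\rank H_{\ells} = \rank\mathcal{O}_{\ells} + m\ells$, whence $\rank\mathcal{O}_{\ells} = p\ells$.

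\emph{Third implication.} Finally, since $(\As,\Cs)$ is observable with observability index $\ells$, by definition $\rank\mathcal{O}_{\ells} = n$; together with $\rank\mathcal{O}_{\ells} = p\ells$ this forces $p\ells = n$. The computations are all elementary; the one point needing care is the bookkeeping in the first implication, namely checking that $N_0$ is precisely the submatrix of $\Delta_{10}$ selected by $[\,0\ \ I_{(p+m)\ells}\,]$, so that the bound $\Delta_{10}\Delta_{10}^\top\preceq\Theta$ descends to $N_0 N_0^\top\preceq\Theta_{22}$; the remaining steps are the null-vector argument and the rank count for $H_{\ells}$.
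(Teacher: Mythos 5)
Your proof is correct and follows essentially the same route as the paper: the same contradiction argument via a left null vector of $S_0$ using $\Psi_0 = S_0 + N_0$ and $N_0 N_0^\top \preceq \Theta_{22}$ (extracted from $\Delta_{10}\Delta_{10}^\top \preceq \Theta$), the same factorization $S_0 = H_{\ells}\smat{X_0\\ U_0}$ with the rank count $\rank H_{\ells} = \rank\mathcal{O}_{\ells} + m\ells$, and the same final use of observability to get $\rank\mathcal{O}_{\ells}=n$. The only cosmetic difference is that you perform the rank count by column operations on $H_{\ells}$ while the paper uses an equivalent block-triangular factorization.
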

\begin{proof}
See Appendix~\ref{app:proof_lemma_implicationSNRasmpt}.
\end{proof}

Besides the implication of Assumption~\ref{ass:Psi0} in Lemma~\ref{lemma:implicationSNRasmpt}, the next result provides a sufficient condition for Assumption~\ref{ass:Psi0}, which is of signal-to-noise-ratio type.
\begin{lemma}
\label{lemma:SNR}
If
\begin{align}
\label{ratio_sing_values}
\frac{ \sigma_{\min}(S_0 S_0^\top) }{\sigma_{\max} (\Theta_{22})} > 4,
\end{align}
Assumption~\ref{ass:Psi0} holds.
\end{lemma}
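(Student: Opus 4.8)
The plan is to verify Assumption~\ref{ass:Psi0}, i.e.\ $\Psi_0\Psi_0^\top \succ \Theta_{22}$, by a direct perturbation argument built on the decomposition $\Psi_0 = S_0 + N_0$ of~\eqref{Psi0=S0+N0}. It suffices to show $z^\top\Psi_0\Psi_0^\top z > z^\top\Theta_{22}z$ for every unit vector $z$, since the inequality then extends to all nonzero vectors by homogeneity; equivalently, I want a lower bound on $|\Psi_0^\top z| = |S_0^\top z + N_0^\top z|$ in terms of the two singular-value quantities appearing in~\eqref{ratio_sing_values}.

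First I would control the noise block. Since $\Delta_{10}\in\mathcal{D}$, we have $\Delta_{10}\Delta_{10}^\top \preceq \Theta$; inspecting~\eqref{Delta10} shows that $N_0$ in~\eqref{N0} is exactly the last $p\ells+m\ells$ rows of $\Delta_{10}$, so extracting the bottom-right block of this matrix inequality and using the partition~\eqref{Theta_partitioned} yields $N_0 N_0^\top \preceq \Theta_{22}$. Hence, for any unit $z$, $|N_0^\top z|^2 = z^\top N_0 N_0^\top z \le z^\top \Theta_{22} z \le \lambda_{\max}(\Theta_{22}) = \sigma_{\max}(\Theta_{22})$, i.e.\ $|N_0^\top z| \le \sqrt{\sigma_{\max}(\Theta_{22})}$. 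On the other side, since $S_0 S_0^\top$ is symmetric positive semidefinite, $|S_0^\top z|^2 = z^\top S_0 S_0^\top z \ge \lambda_{\min}(S_0 S_0^\top) = \sigma_{\min}(S_0 S_0^\top)$.

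Then I would combine these through the reverse triangle inequality $|\Psi_0^\top z| \ge |S_0^\top z| - |N_0^\top z|$. The hypothesis~\eqref{ratio_sing_values} reads $\sigma_{\min}(S_0 S_0^\top) > 4\,\sigma_{\max}(\Theta_{22})$, so $|S_0^\top z| \ge \sqrt{\sigma_{\min}(S_0 S_0^\top)} > 2\sqrt{\sigma_{\max}(\Theta_{22})} \ge |N_0^\top z| + \sqrt{\sigma_{\max}(\Theta_{22})}$; therefore $|\Psi_0^\top z| > \sqrt{\sigma_{\max}(\Theta_{22})}$ and $z^\top\Psi_0\Psi_0^\top z = |\Psi_0^\top z|^2 > \sigma_{\max}(\Theta_{22}) \ge z^\top\Theta_{22}z$, which is the claim. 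The degenerate case $\sigma_{\max}(\Theta_{22})=0$ (so that~\eqref{ratio_sing_values} is read as $\sigma_{\min}(S_0 S_0^\top)>0$) is immediate, since then $N_0 N_0^\top \preceq 0$ forces $N_0 = 0$ and $\Psi_0\Psi_0^\top = S_0 S_0^\top \succ 0 = \Theta_{22}$.

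I do not expect a serious obstacle: the only points needing a little care are the bookkeeping that identifies $N_0$ with the right block rows of $\Delta_{10}$ so that $N_0 N_0^\top \preceq \Theta_{22}$, and the observation that the constant $4$ in~\eqref{ratio_sing_values} is precisely what makes the square-root ratio equal $2$, which is the margin the triangle-inequality step consumes. An equivalent, perhaps cleaner, route is to bound $\sigma_{\min}(\Psi_0)$ below via Weyl's perturbation inequality $\sigma_{\min}(S_0+N_0) \ge \sigma_{\min}(S_0) - \|N_0\|$, with $\|N_0\| \le \sqrt{\sigma_{\max}(\Theta_{22})}$ from $N_0 N_0^\top \preceq \Theta_{22}$, and then conclude from $\lambda_{\min}(\Psi_0\Psi_0^\top) = \sigma_{\min}(\Psi_0)^2 > \sigma_{\max}(\Theta_{22}) = \lambda_{\max}(\Theta_{22})$.
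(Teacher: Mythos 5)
Your proposal is correct and follows essentially the same route as the paper: the decomposition $\Psi_0 = S_0 + N_0$, the bound $N_0 N_0^\top \preceq \Theta_{22}$ extracted from $\Delta_{10}\Delta_{10}^\top \preceq \Theta$, and a triangle-inequality perturbation argument in which the constant $4$ supplies exactly the factor-$2$ margin. Your per-unit-vector rendering is just an unpacked form of the singular-value inequality $\sigma_{\min}(S_0+N_0) \ge \sigma_{\min}(S_0) - \sigma_{\max}(N_0)$ that the paper cites from Horn--Johnson, and indeed your closing remark via Weyl's inequality is precisely the paper's proof.
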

\begin{proof}
See Appendix~\ref{app:proof_lemma_SNR}.
\end{proof}

\subsection{Extension to the case $p \ells > n$}
\label{sec:results_pells_neq_n}

{Observability of $(\As, \Cs)$, as in Assumption~\ref{ass:observ}, implies that $p \ells \ge n$.
We have shown in Lemma~\ref{lemma:implicationSNRasmpt} that, when combined with Assumption~\ref{ass:Psi0}, it implies $p \ells = n$.
Then, when $p\ells > n$, Assumption~\ref{ass:Psi0} is violated and, for instance, Lemma~\ref{lemma:Petersen} cannot be applied to synthesize a controller.
We address this situation and extend the previous results in this subsection.
To this end, we propose to augment the system in~\eqref{sys_x_star} with additional dynamics as}
\begin{subequations}
\label{AugmentSYS}
\begin{align}
\bmat{x^{+} \\ x_a^{+}} & = \underbrace{\bmat{\As & 0 \\ 0 & A_a}}_{=: A_{\aug}}  \bmat{x \\ x_a} + \underbrace{\bmat{\Bs \\ B_a}}_{=: B_{\aug}} u,  \\ 
y_{\aug} & = \underbrace{\bmat{ \Cs & C_a}}_{=: C_{\aug}} \bmat{x \\ x_a}
\end{align}
\end{subequations}
for artificial state\slash input\slash output matrices
\begin{align*}
    A_a \! \in \! \real^{(p \ells \! - n) \! \times \! (p \ells \! - n)}, B_a \! \in \! \real^{(p \ells  \! -  n) \times m}, C_a \! \in \! \real^{p \times (p \ells \! -  n)}.
\end{align*}
The augmented system can be viewed as the parallel connection of the original system $(\As,\Bs,\Cs)$ and the introduced artificial system $(A_a,B_a,C_a)$, as in Figure~\ref{fig:parallel_connection}.
\begin{figure}[H]
\begin{center}
\includegraphics[scale=0.8]{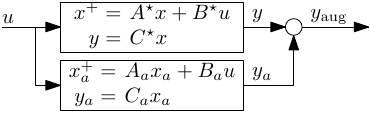}
\caption{Parallel connection of original and artificial systems.}\label{fig:parallel_connection}
\end{center}	
\end{figure}

The idea pursued here is to apply the controller synthesis procedure discussed before to the augmented system~\eqref{AugmentSYS}, which is of dimension $n_{\aug}:=p\ells$. 
So, we consider the matrix
\begin{equation}\label{eq:obserNonsinguAug}
\mathcal{O}_{\ells}^{\aug} := 
\bmat{
\Cs & C_a \\ 
\Cs \As & C_a A_a \\ 
\vdots & \vdots \\ 
\Cs \As^{\ells - 1} & C_a A_a^{\ells - 1}
} \in \real^{n_{\aug} \times n_{\aug}},
\end{equation}
i.e., the $\ells$-step observability matrix of the augmented system \eqref{AugmentSYS}, and make the next assumption on $\mathcal{O}_{\ells}^{\aug}$.

\begin{assumption}
\label{ass:obserNonsinguAug}
The selected artificial state and output matrices $A_a$ and $C_a$ are such that the matrix $\mathcal{O}_{\ells}^{\aug}$ is nonsingular.
\end{assumption}

Assumption~\ref{ass:obserNonsinguAug} implies that the observability index of the augmented system remains $\ells$.
{Moreover, if $\mathcal{O}_{\ells}^{\aug}$ is nonsingular, $\mathcal{O}_{\ells}$ (corresponding to the first $n$ columns of $\mathcal{O}_{\ells}^{\aug}$) must be full column rank and, thus, Assumption~\ref{ass:obserNonsinguAug} subsumes Assumption~\ref{ass:observ}}.
We note that in the sequel we impose a data-dependent condition that serves as a sufficient condition for Assumption \ref{ass:obserNonsinguAug}, see Lemma~\ref{lemma:assumptions_implication+obs_idx}.

As mentioned before, we introduce the artificial system to bring the setup back to the case where the state dimension is equal to $p \ells$. 
An immediate obstacle is that the data collection phase does not exactly match the previous case in Figure \ref{fig:data_collection_exp}. This obstacle is discussed and addressed next. 
Due to the presence of noise, we can collect input-output data from the augmented system \eqref{AugmentSYS} through the noisy augmented dynamics
\begin{subequations}\label{eq:aug-measured}
\begin{align}
\bmat{x^+ \\ {x_a^\tu{m}}^+} & = \bmat{\As & 0 \\ 0 & A_a} \bmat{x \\ x_a^\tu{m}}  + \bmat{\Bs \\ B_a} u^{\tu{m}} \! - \! \bmat{\Bs \\ 0} d^u \label{eq:aug-measured-1} \\ 
y^{\tu{m}}_{\aug} & = \bmat{\Cs & C_a} \bmat{x \\ x_a^\tu{m}} + d^y \overset{\eqref{sys_x_star_ym_um}}{=}  y^{\tu{m}} + C_a x_a^{\tu{m}} \label{eq:aug-measured-2}
\end{align}
\end{subequations}
where the superscript $``\tu{m}"$ denotes the quantities that can be measured.
The input noise $d^u$ affects only the dynamics of the actual system and not the artificial one, and \eqref{eq:aug-measured-1} can be rewritten as 
\begin{align*}
\bmat{x^+ \\ {x_a^\tu{m}}^+} = \bmat{\As & 0 \\ 0 & A_a} \bmat{x \\ x_a^\tu{m}} + \bmat{\Bs \\ B_a} (u^{\tu{m}} \! - \! d^u)+\bmat{0 \\ B_a} d^u.
\end{align*}
Note that neglecting the last term on the right hand side of the previous equation would result in the same setup as in Figure~\ref{fig:data_collection_exp}, where the actual system $(\As, \Bs, \Cs)$ is replaced by the augmented one $(A_{\aug}, B_{\aug}, C_{\aug})$.
While that term cannot be neglected, it can be pushed to the output channel as
\begin{subequations}\label{aug-measured-g}
\begin{align}
\bmat{x^+ \\ x_a^+} & = \bmat{\As & 0 \\ 0 & A_a} \bmat{x \\ x_a} + \bmat{\Bs \\ B_a} (u^{\tu{m}} - d^u) \label{aug-measured-g-1} \\
\hat y^{\tu{m}}_{\aug} & := \bmat{\Cs & C_a} \bmat{x \\ x_a} + d^y + d^{y_a} \label{aug-measured-g-2}
\end{align}  
where, for each $k \ge 0$,
\begin{align}
d^{y_a} (k):= \sum_{j=0}^{k-1} C_a A_a^{k-j-1} B_a d^{u}(j). \label{aug-measured-g-3}
\end{align}
\end{subequations}
The input-output behavior of the two systems \eqref{eq:aug-measured} and \eqref{aug-measured-g} are identical because, starting from the same initial condition $(x(0), x_a^{\tu{m}}(0))=(x(0), x_a(0))$ and applying the same input/noise sequences $\{ u^{\tu{m}}(k)\}_{k=0}^{\infty}$, $\{ d^{u}(k)\}_{k=0}^{\infty}$, $\{ d^{y}(k)\}_{k=0}^{\infty}$, we have 
\begin{equation}\label{eq:y_aug_hat}
\{y_{\aug}^{\tu{m}}(k) \}_{k=0}^{\infty} = \{ {\hat y}_{\aug}^{\tu{m}}(k) \}_{k=0}^{\infty},
\end{equation}
see also Figure \ref{fig:prlel_conect_data}.

\begin{figure}[t]
\begin{center}
\includegraphics[scale=0.85]{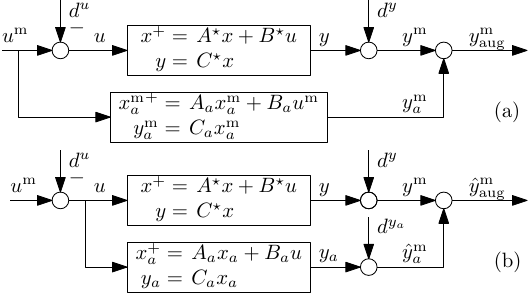}
\caption{Data collection on the augmented system. Subfigures~(a) and (b) correspond respectively to~\eqref{eq:aug-measured} and \eqref{aug-measured-g}.}
\label{fig:prlel_conect_data}
\end{center}	
\end{figure}

In summary, bearing in mind \eqref{aug-measured-g} and \eqref{eq:y_aug_hat}, we can proceed to work with the augmented system with state $x_{\aug} := (x,x_a)$ and $(A_{\aug},B_{\aug},C_{\aug})$ defined in~\eqref{AugmentSYS}:
\begin{subequations}\label{augmentSYS_data}
\begin{align}
x_{\aug}^+  & \! = \!  A_{\aug} x_{\aug} + B_{\aug} (u^{\tu{m}} - d^u), \\
y_{\aug}^{\tu{m}} & \! = \! C_{\aug} x_{\aug} + d^{y} \! + \! d^{y_a} =: C_{\aug} x_{\aug} + d^{y_{\aug}}, \label{augmentSYS_data:output}
\end{align}
\end{subequations}
which is the counterpart of~\eqref{sys_x_star_ym_um}.
Now, for the measured data $\{ u^{\tu{m}}(k), y_{\aug}^{\tu{m}}(k) \}_{k=0}^{T}$, we construct the data matrices
\begin{subequations}\label{Psi1_Psi0_U0_Delta10_aug}
\begin{align}
\Psi_1^{\aug} & :=
\left[
\begin{array}{c}
\begin{smallmatrix}
y^{\tu{m}}_{\aug}(1) & y^{\tu{m}}_{\aug}(2) & \dots & y^{\tu{m}}_{\aug}(T- \ells +1)\\
\vdotsS & \vdotsS &  & \vdotsS \\
y^{\tu{m}}_{\aug}(\ells) & y^{\tu{m}}_{\aug}(\ells +1) & \dots & y^{\tu{m}}_{\aug}(T) \\ 
\end{smallmatrix} \\[8pt]
\hline
\begin{smallmatrix}
u^{\tu{m}}(1) & u^{\tu{m}}(2) & \dots & u^{\tu{m}}(T-\ells +1)\rule{0pt}{8pt} \\
\vdotsS & \vdotsS &   & \vdotsS \\
u^{\tu{m}}(\ells) & u^{\tu{m}}(\ells +1) & \dots & u^{\tu{m}}(T)
\end{smallmatrix}
\end{array}
\right]\\
\Psi_0^{\aug} & :=
\left[
\begin{array}{c}
\begin{smallmatrix}
y^{\tu{m}}_{\aug}(0) & y^{\tu{m}}_{\aug}(1) & \dots & y^{\tu{m}}_{\aug}(T- \ells)\\
\vdotsS & \vdotsS &  & \vdotsS \\
y^{\tu{m}}_{\aug}(\ells-1) & y^{\tu{m}}_{\aug}(\ells) & \dots & y^{\tu{m}}_{\aug}(T-1) \\
\end{smallmatrix} \\[8pt]
\hline
\begin{smallmatrix}
u^{\tu{m}}(0) & u^{\tu{m}}(1) & \dots & u^{\tu{m}}(T- \ells) \rule{0pt}{8pt}\\
\vdotsS & \vdotsS &  & \vdotsS\\
u^{\tu{m}}(\ells-1) & u^{\tu{m}}(\ells) & \dots & u^{\tu{m}}(T-1)
\end{smallmatrix}
\end{array}
\right]\\
U_1^{\aug} & := 
\left[ \begin{smallmatrix} u^{\tu{m}}(\ells) & u^{\tu{m}}(\ells +1) & \dots & u^{\tu{m}}(T) \end{smallmatrix} \right] \\
\Delta_{10}^{\aug} & :=
\left[
\begin{array}{c}
\begin{smallmatrix}
d^{y_{\aug}}(\ells)  & d^{y_{\aug}}(\ells+1)  & \dots & d^{y_{\aug}}(T) \\
\end{smallmatrix} \\[0pt]
\hline
\begin{smallmatrix}
d^{y_{\aug}}(0)  & d^{y_{\aug}}(1)  & \dots & d^{y_{\aug}}(T- \ells) \rule{0pt}{8pt}\\
\vdotsS & \vdotsS &  & \vdotsS \\
d^{y_{\aug}}(\ells-1)  & d^{y_{\aug}}(\ells)  & \dots & d^{y_{\aug}}(T-1)  \\
\end{smallmatrix}\\[8pt]
\hline
\begin{smallmatrix}
d^u(0) & d^u(1) & \dots & d^u(T- \ells)\rule{0pt}{8pt}\\
\vdotsS & \vdotsS &  & \vdotsS \\
d^u(\ells-1) & d^u(\ells-1) & \dots & d^u(T-1)
\end{smallmatrix}
\end{array}
\right]. \label{Delta_10_aug}
\end{align}
\end{subequations}
These matrices are counterparts of those in \eqref{Psi1_Psi0_U0_Delta10} and satisfy the counterpart of~\eqref{exp_data_mat}, namely,
\begin{align}\label{exp_data_mat_aug}
\Psi_1^{\aug} = & (\mathbf{F}_{\ells} + \mathbf{L}_{\ells} Z_{\ells}^{\aug} ) \Psi_0^{\aug} + \mathbf{B}_{\ells} U_1^{\aug} \notag \\
& + \mathbf{L}_{\ells} \bmat{I_p & -Z_{\ells}^{\aug}} \Delta_{10}^{\aug}
\end{align}
where $\mathbf{F}_{\ells}$, $\mathbf{L}_{\ells}$ and $\mathbf{B}_{\ells}$ are as in~\eqref{sys_xi:A0_L_B}, with $\ell=\ells$, and 
\begin{align}
\mathcal{T}_{\ells}^{\aug} & \! := \! \! 
\smat{
0_{p\times m} & 0 & \dotsS & 0 & 0 \\
C_{\aug} B_{\aug} & 0 & \dotsS & 0 & 0\\
C_{\aug} A _{\aug} B_{\aug} & C_{\aug} B_{\aug} & \dotsS & 0 & 0\\
\vdotsS & \vdotsS & \ddotsS & \vdotsS & \vdotsS \\
C_{\aug} A_{\aug}^{\ells \! -2} B_{\aug} & C_{\aug} A_{\aug}^{\ells \! -3} B_{\aug} & \dotsS & C_{\aug} B_{\aug} & 0_{p\times m}
} \notag \\
\mathcal{R}_{\ells}^{\aug} & \! := \! \! \bmat{ A_{\aug}^{\ells \! -1} B_{\aug} & \ldots & A_{\aug} B_{\aug} & B_{\aug} } \notag \\
Z_{\ells}^{\aug} & := \label{eq:Z_ells_aug} \\
&  \smat{C_{\aug} A_{\aug}^{\ells} ({\mathcal{O}_{\ells}^{\aug}})^{-1} & & C_{\aug} \mathcal{R}_{\ells}^{\aug} - C_{\aug} A_{\aug}^{\ells} (\mathcal{O}_{\ells}^{\aug})^{-1} \mathcal{T}_{\ells}^{\aug} }, \notag
\end{align}
with ${\mathcal{O}_{\ells}^{\aug}}$ in \eqref{eq:obserNonsinguAug}, which is invertible by Assumption \ref{ass:obserNonsinguAug}.
We stress that the matrix $Z_{\ells}^{\aug}$ still contains the actual matrices $A^\star$, $B^\star$ and $C^\star$, so it is unknown to us.

As we did in Section~\ref{sec:problem}, we assume that the noise sequence $\Delta_{10}^{\aug}$ in~\eqref{Delta_10_aug} satisfies an energy bound, i.e., $\Delta_{10}^{\aug} \in \mathcal{D}_{\aug}$ with $\mathcal{D}_{\aug}$ defined, for some $\Theta^{\aug} = {\Theta^{\aug}}^\top \succeq 0$, as
\begin{align}
\mathcal{D}_{\aug} :=  \{ \Delta  \in  \real^{(p + p\ells + m\ells) \times (T- \ells + 1)} : \Delta \Delta^\top \preceq \Theta^{\aug} \}.  \label{eq:setD_aug}
\end{align}
We partition $\Theta^{\aug}$ as
\begin{equation}
\label{Theta_partitioned_aug}
\Theta^{\aug} =: \bmat{\Theta_{11}^{\aug} & \Theta_{12}^{\aug} \\ {\Theta_{12}^{\aug}}^\top & \Theta_{22}^{\aug}}
\end{equation}
with $\Theta_{11}^{\aug} = {\Theta_{11}^{\aug}}^\top \in \real^{p \times p}$ and $\Theta_{22}^{\aug} = {\Theta_{22}^{\aug}}^\top \in \real^{(p \ells + m \ells) \times(p \ells + m \ells)}$.
Following Remark~\ref{rem:conversion_to_energy_bound}, we present a way to obtain $\Theta^{\aug}$ in the next remark.

\begin{remark}\label{rem:conversion_to_energy_bound_aug}
Suppose we know that for some $\bar{d}^y \ge 0$ and $\bar{d}^u \ge 0$,
\begin{align}
\| d^y \|_{\mathcal{L}_\infty}
\le \bar{d}^y
\text{ and }
\| d^u \|_{\mathcal{L}_\infty}
\le \bar{d}^u. \label{Linf_norm_dy_du}
\end{align}
For $\Delta_{10}^{\aug}$ in~\eqref{Delta_10_aug}, we have $d^{y_{\aug}} = d^y +d^{y_a}$, see \eqref{augmentSYS_data:output}, with $d^{y_a}$ as in~\eqref{aug-measured-g-3}.
If we can find $\bar{d}^{y_a}$ such that
\begin{align}
\label{bound_dya}
\| d^{y_a} \|_{\mathcal{L}_\infty} \le \bar{d}^{y_a},
\end{align}
the very same steps in Remark~\ref{rem:conversion_to_energy_bound} would allow us to take $\Theta^{\aug}$ as $(T - \ells +1) \big( (\ells +1) (\bar{d}^y+ \bar{d}^{y_a} )^2 + \ells (\bar{d}^u)^2 \big) I$.
Then, we show how to obtain \eqref{bound_dya} if we select the artificial state matrix $A_a$, which is a design parameter of ours, with $\| A_a \| < 1$.
We have
\begin{align}
\| {d}^{y_a} \|_{\mathcal{L}_{\infty}} \overset{\eqref{aug-measured-g-3}}{\le} &
\sup_{k \geq 0} \Big\{ \sum_{j=0}^{k-1} \|C_a\| \| A_a \|^{k-1-j} \|B_a\| |d^{u}(j) | \Big\}  \notag \\
\overset{\eqref{Linf_norm_dy_du}}{\leq} &  \|C_a\| \sup_{k \geq 0} \Big\{ \sum_{j=0}^{k-1} \| A_a \|^{j} \Big\} \|B_a\| \bar{d}^{u} \notag \\  
= & {\|C_a\| \|B_a\| \bar{d}^{u}}/{(1-\| A_a \|)}, \label{eq:ArtificialOutputNoiseBound}
\end{align}
and we can thus take $\bar{d}^{y_a}$ as $\|C_a\| \|B_a\| \bar{d}^{u}/(1-\| A_a \|)$.
\end{remark}

We now impose the next condition, which is the counterpart of Assumption \ref{ass:Psi0}.
\begin{assumption}\label{ass:Psi0_aug}
$\Psi_0^{\aug} {\Psi_0^{\aug}}^\top \succ \Theta_{22}^{\aug}$.
\end{assumption}

The next lemma contains some relevant observations on the assumptions we have made so far and on the observability index of the augmented system.

\begin{lemma}
\label{lemma:assumptions_implication+obs_idx}
If $\Psi_0^{\aug} {\Psi_0^{\aug}}^\top \succ \Theta_{22}^{\aug}$ as per Assumption~\ref{ass:Psi0_aug}, then $\mathcal{O}_{\ells}^{\aug}$ is nonsingular as per Assumption~\ref{ass:obserNonsinguAug} and $(A_{\aug}, C_{\aug})$ is observable with observability index equal to $\ells$.
\end{lemma}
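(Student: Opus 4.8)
The plan is to replay the argument of Lemma~\ref{lemma:implicationSNRasmpt} for the augmented system, exploiting that the latter has state dimension $n_{\aug}=p\ells$, so that its $\ells$-step observability matrix $\mathcal{O}_{\ells}^{\aug}$ in~\eqref{eq:obserNonsinguAug} is \emph{square}. First I would decompose $\Psi_0^{\aug}$, in analogy with~\eqref{Psi0=S0+N0}, as $\Psi_0^{\aug}=S_0^{\aug}+N_0^{\aug}$, where $S_0^{\aug}$ stacks, over length-$\ells$ windows, the noise-free augmented output $y_{\aug}(k)=C_{\aug}x_{\aug}(k)$ generated by the \emph{actual} input $u=u^{\tu{m}}-d^u$ as in~\eqref{aug-measured-g-1} together with the corresponding actual inputs, and $N_0^{\aug}$ stacks the noise $d^{y_{\aug}}$ (from~\eqref{aug-measured-g-2}) and $d^u$ in the pattern of~\eqref{N0}. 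Since $N_0^{\aug}$ coincides with the block rows of $\Delta_{10}^{\aug}$ in~\eqref{Delta_10_aug} below the first one, the noise bound $\Delta_{10}^{\aug}\in\mathcal{D}_{\aug}$ together with the partition~\eqref{Theta_partitioned_aug} gives $N_0^{\aug}{N_0^{\aug}}^\top\preceq\Theta_{22}^{\aug}$.

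Next, using Assumption~\ref{ass:Psi0_aug} in the form $\Psi_0^{\aug}{\Psi_0^{\aug}}^\top\succ\Theta_{22}^{\aug}\succeq N_0^{\aug}{N_0^{\aug}}^\top$, I would show that $S_0^{\aug}$ has full row rank $(p+m)\ells$ by contradiction: if $z\neq 0$ satisfied $z^\top S_0^{\aug}=0$, then $z^\top\Psi_0^{\aug}=z^\top N_0^{\aug}$, whence $z^\top\Psi_0^{\aug}{\Psi_0^{\aug}}^\top z=z^\top N_0^{\aug}{N_0^{\aug}}^\top z\le z^\top\Theta_{22}^{\aug}z<z^\top\Psi_0^{\aug}{\Psi_0^{\aug}}^\top z$, which is impossible. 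Then, by the input-output relation of the augmented system (the analogue of the identities for~\eqref{sys_x} in Appendix~\ref{app:results_lin_sys}), each length-$\ells$ window obeys $(y_{\aug}(j),\dots,y_{\aug}(j+\ells-1))=\mathcal{O}_{\ells}^{\aug}x_{\aug}(j)+\mathcal{T}_{\ells}^{\aug}(u(j),\dots,u(j+\ells-1))$, so $S_0^{\aug}=H_{\ells}^{\aug}\smat{X_0^{\aug}\\ \mathcal{U}_0}$ with $H_{\ells}^{\aug}:=\smat{\mathcal{O}_{\ells}^{\aug} & \mathcal{T}_{\ells}^{\aug}\\ 0 & I_{m\ells}}$, $X_0^{\aug}$ the matrix of augmented states at times $0,\dots,T-\ells$, and $\mathcal{U}_0$ the corresponding stacked-input matrix. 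Because $H_{\ells}^{\aug}$ is square of size $(p+m)\ells$, full row rank of $S_0^{\aug}$ forces $H_{\ells}^{\aug}$ to be nonsingular, and since $\det H_{\ells}^{\aug}=\det\mathcal{O}_{\ells}^{\aug}$ by block-triangularity, $\mathcal{O}_{\ells}^{\aug}$ is nonsingular, which is exactly Assumption~\ref{ass:obserNonsinguAug}.

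Finally, nonsingularity of $\mathcal{O}_{\ells}^{\aug}$, the $\ells$-step observability matrix of the $p\ells$-dimensional pair $(A_{\aug},C_{\aug})$, gives observability of $(A_{\aug},C_{\aug})$ with observability index at most $\ells$; moreover the observability index is at least $\ells$ because the $(\ells-1)$-step observability matrix has only $p(\ells-1)<p\ells=n_{\aug}$ rows and hence cannot attain rank $n_{\aug}$, so the observability index equals $\ells$. I expect the only delicate point to be the bookkeeping behind $\Psi_0^{\aug}=S_0^{\aug}+N_0^{\aug}$ and $S_0^{\aug}=H_{\ells}^{\aug}\smat{X_0^{\aug}\\ \mathcal{U}_0}$: one must keep track that $S_0^{\aug}$ is built from the \emph{actual} input $u$ rather than $u^{\tu{m}}$ and from $y_{\aug}=C_{\aug}x_{\aug}$, so that the effective output noise $d^{y_{\aug}}=d^y+d^{y_a}$ of~\eqref{aug-measured-g-2} is precisely what is absorbed into $N_0^{\aug}$; beyond that, the argument is a transcription of the proof of Lemma~\ref{lemma:implicationSNRasmpt} with $n$ replaced by $n_{\aug}=p\ells$, which is why the chain of inequalities now collapses to nonsingularity of $\mathcal{O}_{\ells}^{\aug}$ rather than merely to $p\ells=n$.
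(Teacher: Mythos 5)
Your proposal is correct and follows essentially the same route as the paper: the paper's proof simply observes that the chain of implications in the proof of Lemma~\ref{lemma:implicationSNRasmpt} (decomposition into clean data plus noise, full row rank of the clean data matrix, hence $\rank \mathcal{O}_{\ells}^{\aug} = p\ells$) holds verbatim for the augmented system, and then uses squareness of $\mathcal{O}_{\ells}^{\aug}$ to conclude nonsingularity, observability, and observability index $\ells$. Your write-up is just an explicit transcription of that argument (with a slightly more detailed justification of why the index cannot be below $\ells$), and the bookkeeping you flag---$S_0^{\aug}$ built from the actual input $u=u^{\tu{m}}-d^u$ and $y_{\aug}=C_{\aug}x_{\aug}$, with $d^{y_{\aug}}=d^y+d^{y_a}$ absorbed into $N_0^{\aug}$ so that $N_0^{\aug}{N_0^{\aug}}^\top\preceq\Theta_{22}^{\aug}$---is exactly what makes ``analogous arguments'' go through.
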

\begin{proof}
See Appendix~\ref{app:proof_lemma_assumptions_implication+obs_idx}.
\end{proof}

We discuss the consequences of Lemma~\ref{lemma:assumptions_implication+obs_idx}.
First, the \emph{data-dependent} condition in Assumption~\ref{ass:Psi0_aug} is a sufficient condition for Assumption~\ref{ass:obserNonsinguAug}; so Assumption~\ref{ass:obserNonsinguAug} is not invoked in the subsequent results.
Second, the fact that the observability index of $(A_{\aug}, C_{\aug})$ is equal to $\ells$ allows us to fall back to the setting of Sections~\ref{sec:reform_setC}--\ref{sec:from_aux_sys_to_actual_sys} for the \emph{augmented} system.

Since the matrix $Z_{\ells}^{\aug}$ in \eqref{eq:Z_ells_aug} is unknown, we introduce the set of matrices consistent with data as in~\eqref{exp_data_mat_aug} and with the noise bound in~\eqref{eq:setD_aug}, namely,
\begin{align}\label{setC_aug}
& \mathcal{C}_{\aug}  :=  \Big\{  Z \in \real^{p \times (p \ells + m \ells)} \colon \Psi_1^{\aug} = (\mathbf{F}_\ells + \mathbf{L}_\ells Z ) \Psi_0^{\aug} \notag \\
& \hspace*{8mm} + \mathbf{B}_\ells U_1^{\aug} + \mathbf{L}_\ells [I_p \ -Z]  \Delta, \ \Delta \in \mathcal{D}_{\aug} \Big\}.
\end{align}
It is worth noting that 
$
    Z_{\ells}^{\aug} \in \mathcal{C}_{\aug}
$
because $\Delta_{10}^{\aug} \in \mathcal{D}_{\aug}$.
By the same steps as in~\eqref{eq:setC}, the set $\mathcal{C}_{\aug}$ can be equivalently rewritten as
\begin{subequations}\label{eq:setC_aug}
\begin{align}
\mathcal{C}_{\aug} \! = \! \big\{ Z & \colon Z \mathscr{A}_{\aug} Z^{\top} \! + \! Z {\mathscr{B}_{\aug}}^{\top} \! + \! \mathscr{B}_{\aug} Z^{\top} \! + \! \mathscr{C}_{\aug} \preceq 0 \big\}, \label{setC_ABC_aug}\\
& \mathscr{A}_{\aug} := \Psi_0^{\aug} {\Psi_0^{\aug}}^\top - \Theta_{22}^{\aug}, \label{setC:A_aug} \\
& \mathscr{B}_{\aug} :=  -  \mathbf{L}_{\ells}^\top \Psi_1^{\aug} {\Psi_0^{\aug}}^{\top} + \Theta_{12}^{\aug}, \label{setC:B_aug}\\
& \mathscr{C}_{\aug} := \mathbf{L}_{\ells}^\top \Psi_1^{\aug} {\Psi_1^{\aug}}^\top \mathbf{L}_{\ells} - \Theta_{11}^{\aug}. \label{setC:C_aug}
\end{align}
\end{subequations}

We would like to find $\mathbf{K}$ such that $\mathbf{F}_{\ells} + \mathbf{L}_{\ells} Z + \mathbf{B}_{\ells} \mathbf{K}$ is Schur for all $Z\in \mathcal{C}_{\aug}$, i.e., to solve the robust stabilization problem
\begin{subequations}
\label{rob_contr_probl_sys_xi_aug}
\begin{align}
& \text{find} & & \mathbf{K}, P = P^\top \succ 0 \\
& \text{s.t.} & & (\mathbf{F}_{\ells} + \mathbf{L}_{\ells} Z + \mathbf{B}_{\ells} \mathbf{K}) P (\mathbf{F}_{\ells} + \mathbf{L}_{\ells} Z + \mathbf{B}_{\ells} \mathbf{K})^\top \notag \\ 
& & & - P \prec 0 \qquad \forall Z \in \mathcal{C}_{\aug}. \label{rob_contr_probl_sys_xi:ineq:aug}
\end{align}
\end{subequations}
The problem \eqref{rob_contr_probl_sys_xi_aug} is the counterpart of \eqref{rob_contr_probl_sys_xi} and is solved by the next lemma.

\begin{lemma}
\label{lemma:Petersen_aug}
Under Assumption \ref{ass:Psi0_aug}, feasibility of~\eqref{rob_contr_probl_sys_xi_aug} is equivalent to feasibility of
\begin{subequations}
\label{rob_contr_probl_LMI_aug}
\begin{align}
& \text{find} & & \mathbf{Y}, P = P^\top \succ 0 \label{rob_contr_probl_LMI:find_aug}\\
& \text{s.t.} & & 
\left[\begin{matrix}
-P + \mathbf{L}_{\ells} \mathscr{C}_{\aug} \mathbf{L}_{\ells}^\top & \mathbf{F}_{\ells} P + \mathbf{B}_{\ells} \mathbf{Y} & \mathbf{L}_{\ells} \mathscr{B}_{\aug}\\
P \mathbf{F}_{\ells}^\top + \mathbf{Y}^\top \mathbf{B}_{\ells}^\top & - P & -P \\
\mathscr{B}_{\aug}^\top \mathbf{L}_{\ells}^\top & -P & -\mathscr{A}_{\aug}
\end{matrix}\right] \notag \\
& & & \prec 0.
\end{align}
\end{subequations}
If \eqref{rob_contr_probl_LMI_aug} is feasible, a $\mathbf{K}$ satisfying \eqref{rob_contr_probl_sys_xi_aug} is $\mathbf{K} = \mathbf{Y}P^{-1}$.
\end{lemma}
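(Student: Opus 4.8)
The plan is to recognize that Lemma~\ref{lemma:Petersen_aug} is the verbatim structural transcription of Lemma~\ref{lemma:Petersen}, with every object replaced by its ``augmented'' counterpart, so the proof of Lemma~\ref{lemma:Petersen} in Appendix~\ref{app:proof_lemma_Petersen} applies \emph{mutatis mutandis}. Concretely, I would first record the facts that make this transcription legitimate: (i) the data relation \eqref{exp_data_mat_aug} has exactly the form of \eqref{exp_data_mat}, with $\Psi_1,\Psi_0,U_1,\Delta_{10},Z_\ells$ replaced by $\Psi_1^{\aug},\Psi_0^{\aug},U_1^{\aug},\Delta_{10}^{\aug},Z_\ells^{\aug}$; (ii) $\Delta_{10}^{\aug}\in\mathcal{D}_{\aug}$ with $\mathcal{D}_{\aug}$ in \eqref{eq:setD_aug} of the same ellipsoidal type as $\mathcal{D}$ in \eqref{setD}; and (iii) $n_{\aug}=p\ells$ together with Assumption~\ref{ass:Psi0_aug}, which by Lemma~\ref{lemma:assumptions_implication+obs_idx} forces $\mathcal{O}_{\ells}^{\aug}$ to be invertible, so that $Z_\ells^{\aug}$ and \eqref{exp_data_mat_aug} are well defined. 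Given these, running the same manipulation as in \eqref{eq:setC} — using $\mathbf{L}_{\ells}^\top\mathbf{L}_{\ells}=I_p$ and the shift structure of $\mathbf{F}_{\ells}$, $\mathbf{B}_{\ells}$ to annihilate all block rows of $\Psi_1^{\aug}-\mathbf{F}_{\ells}\Psi_0^{\aug}-\mathbf{B}_{\ells}U_1^{\aug}$ except the $\ells$-th (which equals $\mathbf{L}_{\ells}^\top\Psi_1^{\aug}$), and then invoking Proposition~\ref{proposition:matrix_elim} — yields the quadratic-matrix-inequality description \eqref{eq:setC_aug} of $\mathcal{C}_{\aug}$.

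Next I would use that Assumption~\ref{ass:Psi0_aug} is exactly $\mathscr{A}_{\aug}\succ 0$, the augmented analog of the hypothesis $\mathscr{A}\succ 0$ of Lemma~\ref{lemma:cons_asmpt_for_set_C}. Hence the augmented version of Lemma~\ref{lemma:cons_asmpt_for_set_C} holds: with $\mathscr{Z}_{\aug}:=-\mathscr{B}_{\aug}\mathscr{A}_{\aug}^{-1}$ and $\mathscr{Q}_{\aug}:=\mathscr{B}_{\aug}\mathscr{A}_{\aug}^{-1}\mathscr{B}_{\aug}^\top-\mathscr{C}_{\aug}\succeq 0$, one gets the bounded, ball-parametrized set $\mathcal{C}_{\aug}=\{\mathscr{Z}_{\aug}+\mathscr{Q}_{\aug}^{1/2}\Upsilon\mathscr{A}_{\aug}^{-1/2}:\Upsilon\Upsilon^\top\preceq I_p\}$. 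Substituting this parametrization of $Z$ into the Lyapunov inequality \eqref{rob_contr_probl_sys_xi:ineq:aug} and setting $\mathbf{Y}=\mathbf{K}P$, feasibility of \eqref{rob_contr_probl_sys_xi_aug} becomes the existence of $\mathbf{Y}$, $P\succ 0$ making a matrix inequality of the form $\mathcal{M}(\mathbf{Y},P)+\mathcal{N}\,\Upsilon\,\mathcal{P}(P)+\mathcal{P}(P)^\top\Upsilon^\top\mathcal{N}^\top\prec 0$ hold for all $\Upsilon$ with $\Upsilon\Upsilon^\top\preceq I_p$. Applying Petersen's lemma \cite{petersen1987stabilization} in the data-driven form of \cite{AndreaPetersen2022}, exactly as in the proof of Lemma~\ref{lemma:Petersen}, eliminates $\Upsilon$; a Schur-complement rearrangement that reabsorbs the Petersen multiplier into the blocks carrying $\mathscr{A}_{\aug}$, $\mathscr{B}_{\aug}$, $\mathscr{C}_{\aug}$ then produces precisely \eqref{rob_contr_probl_LMI_aug}, and $\mathbf{K}=\mathbf{Y}P^{-1}$ recovers a solution of \eqref{rob_contr_probl_sys_xi_aug}.

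I do not expect a genuine obstacle here; the only point requiring care is to confirm that the original argument for Lemma~\ref{lemma:Petersen} never used any property of $(\As,\Bs,\Cs)$ or of the non-augmented data beyond the three facts (i)--(iii) above — in particular that it used squareness/invertibility of the observability matrix (here $\mathcal{O}_\ells^{\aug}$, of size $n_{\aug}\times n_{\aug}$ with $n_{\aug}=p\ells$) only through the definition \eqref{eq:Z_ells_aug} and the identity \eqref{exp_data_mat_aug}, both of which hold in the augmented setting by construction. Once this is granted, the Petersen-lemma machinery is indifferent to whether the underlying plant is the original system or the augmented one, and the claim follows. $\qed$
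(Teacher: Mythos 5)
Your proposal is correct and follows essentially the same route as the paper, whose own proof of Lemma~\ref{lemma:Petersen_aug} is simply the observation that the proof of Lemma~\ref{lemma:Petersen} carries over verbatim with $\mathscr{A}_{\aug}$, $\mathscr{B}_{\aug}$, $\mathscr{C}_{\aug}$ in place of $\mathscr{A}$, $\mathscr{B}$, $\mathscr{C}$. You merely spell out the facts (the form of \eqref{exp_data_mat_aug}, $\Delta_{10}^{\aug}\in\mathcal{D}_{\aug}$, and $\mathscr{A}_{\aug}\succ 0$ from Assumption~\ref{ass:Psi0_aug}) that justify this substitution, which is exactly the intended argument.
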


\begin{proof}
The proof is the same as the proof of Lemma~\ref{lemma:Petersen} with $\mathscr{A}_{\aug}$, $\mathscr{B}_{\aug}$, $\mathscr{C}_{\aug}$ substituting $\mathscr{A}$, $\mathscr{B}$, $\mathscr{C}$.
\end{proof}

This leads to the main result of this section.

\begin{theorem}\label{thm:pl_neq_n_}
With collected data $\{u^{\tu{m}}(k), y^{\tu{m}}_{\aug}(k)\}_{k=0}^T$ and under Assumption~\ref{ass:Psi0_aug}, suppose \eqref{rob_contr_probl_LMI_aug} is feasible and $\mathbf{K}$ is a controller returned from~\eqref{rob_contr_probl_LMI_aug}. 
Then, $(x, x_a, \chi) = 0$ is globally asymptotically stable for the feedback interconnection of the unknown plant
\begin{align}\label{eq:plant_aug}
x^+ = \As x + \Bs u, \ y = \Cs x 
\end{align}
and the controller
\begin{subequations}\label{eq:controller_aug}
\begin{align}
\bmat{x_a^+ \\ \chi^+} &= \bmat{ A_a & B_a \mathbf{K} \\ \mathbf{L}_{\ells} C_a & \mathbf{F}_{\ells} + \mathbf{B}_{\ells} \mathbf{K} } \bmat{x_a \\ \chi} + \bmat{ 0 \\ \mathbf{L}_{\ells} } y   \\
u &= \mathbf{K} \chi.
\end{align} 
\end{subequations}
\end{theorem}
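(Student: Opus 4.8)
The plan is to reduce Theorem~\ref{thm:pl_neq_n_} to the already-established machinery of Sections~\ref{sec:reform_setC}--\ref{sec:from_aux_sys_to_actual_sys}, but applied to the \emph{augmented} system rather than the original one. First, I would invoke Lemma~\ref{lemma:assumptions_implication+obs_idx}: Assumption~\ref{ass:Psi0_aug} guarantees that $(A_{\aug},C_{\aug})$ is observable with observability index exactly $\ells$, so that all the constructions of Section~\ref{sec:Preliminaries} (in particular the auxiliary system in~\eqref{sys_xi} with $\ell=\ells$, the matrices $\mathbf{F}_\ells,\mathbf{L}_\ells,\mathbf{B}_\ells$ from~\eqref{sys_xi:A0_L_B}, and $Z_\ells^{\aug}$ from~\eqref{eq:Z_ells_aug}) are legitimate for the augmented triple $(A_{\aug},B_{\aug},C_{\aug})$. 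Next I would apply Lemma~\ref{lemma:Petersen_aug}: since~\eqref{rob_contr_probl_LMI_aug} is assumed feasible and $\mathbf{K}=\mathbf{Y}P^{-1}$, the matrix $\mathbf{F}_{\ells}+\mathbf{L}_{\ells}Z+\mathbf{B}_{\ells}\mathbf{K}$ is Schur for every $Z\in\mathcal{C}_{\aug}$, and in particular — because $Z_{\ells}^{\aug}\in\mathcal{C}_{\aug}$ (noted right after~\eqref{setC_aug}, as $\Delta_{10}^{\aug}\in\mathcal{D}_{\aug}$) — the matrix $\mathbf{A}_\ells^{\aug}+\mathbf{B}_\ells\mathbf{K}=\mathbf{F}_{\ells}+\mathbf{L}_{\ells}Z_{\ells}^{\aug}+\mathbf{B}_{\ells}\mathbf{K}$ is Schur.

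The third step is to translate this into asymptotic stability of a state-space closed loop. Here I would apply Lemma~\ref{lemma:GAS_cl_aux_implies_GAS_cl} (and Lemma~\ref{lemma:io_(x,chi)_vs_xi}) with the roles of $(A,B,C)$ played by $(A_{\aug},B_{\aug},C_{\aug})$: since $\mathbf{K}$ makes $\mathbf{A}_\ell^{\aug}+\mathbf{B}_\ell\mathbf{K}$ Schur, the equilibrium $(x_{\aug},\chi)=(x,x_a,\chi)=0$ is globally asymptotically stable for the feedback interconnection of $x_{\aug}^+ = A_{\aug}x_{\aug}+B_{\aug}u$, $y_{\aug}=C_{\aug}x_{\aug}$ with the dynamic controller $\chi^+=\mathbf{F}_{\ells}\chi+\mathbf{L}_{\ells}y_{\aug}+\mathbf{B}_{\ells}u$, $u=\mathbf{K}\chi$, i.e. the closed-loop matrix $\smat{A_{\aug} & B_{\aug}\mathbf{K}\\ \mathbf{L}_{\ells}C_{\aug} & \mathbf{F}_{\ells}+\mathbf{B}_{\ells}\mathbf{K}}$ is Schur.

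The final step — and the place where the most care is needed — is to unpack this augmented closed loop into the statement's form, namely the \emph{original} plant~\eqref{eq:plant_aug} in feedback with the controller~\eqref{eq:controller_aug} whose internal state is $(x_a,\chi)$. I would expand the block matrix using $A_{\aug}=\smat{\As&0\\0&A_a}$, $B_{\aug}=\smat{\Bs\\B_a}$, $C_{\aug}=\bmat{\Cs & C_a}$: the $x$-dynamics become $x^+=\As x+\Bs\mathbf{K}\chi=\As x+\Bs u$ with $u=\mathbf{K}\chi$; the $x_a$-dynamics become $x_a^+=A_ax_a+B_a\mathbf{K}\chi=A_ax_a+B_au$; and the $\chi$-dynamics become $\chi^+=\mathbf{L}_{\ells}C_{\aug}x_{\aug}+(\mathbf{F}_{\ells}+\mathbf{B}_{\ells}\mathbf{K})\chi = \mathbf{L}_{\ells}\Cs x + \mathbf{L}_{\ells}C_ax_a+(\mathbf{F}_{\ells}+\mathbf{B}_{\ells}\mathbf{K})\chi = \mathbf{L}_{\ells}y+\mathbf{L}_{\ells}C_ax_a+(\mathbf{F}_{\ells}+\mathbf{B}_{\ells}\mathbf{K})\chi$, which is exactly~\eqref{eq:controller_aug} after stacking $(x_a,\chi)$ and reading off $y=\Cs x$ as the controller input. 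Since this is an invertible (block-triangular-in-spirit, in fact a mere permutation/regrouping of coordinates) rearrangement of the Schur augmented closed loop, global asymptotic stability of $(x,x_a,\chi)=0$ is preserved. The subtle point I would emphasize is the data-collection equivalence already recorded in~\eqref{eq:y_aug_hat}: the controller is synthesized from the measured data $\{u^{\tu{m}}(k),y_{\aug}^{\tu{m}}(k)\}$, which coincides with the data generated by the ``output-lifted'' model~\eqref{aug-measured-g}, so that Lemma~\ref{lemma:data_relation}'s counterpart~\eqref{exp_data_mat_aug} genuinely holds and $Z_{\ells}^{\aug}\in\mathcal{C}_{\aug}$ is justified; without this identification the chain would break. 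The remaining work — checking that the regrouping of $(x,x_a,\chi)$ is a bijective linear change of coordinates and hence stability-preserving — is routine.
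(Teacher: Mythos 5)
Your proposal is correct and follows essentially the same route as the paper: apply Lemma~\ref{lemma:Petersen_aug} to make $\mathbf{F}_{\ells}+\mathbf{L}_{\ells}Z_{\ells}^{\aug}+\mathbf{B}_{\ells}\mathbf{K}$ Schur, transfer stability via the Theorem~\ref{thm:pl_eq_n} machinery (Lemmas~\ref{lemma:io_(x,chi)_vs_xi} and \ref{lemma:GAS_cl_aux_implies_GAS_cl} applied to $(A_{\aug},B_{\aug},C_{\aug})$, legitimized by Lemma~\ref{lemma:assumptions_implication+obs_idx}), and then observe that the augmented closed loop is just a regrouping of coordinates into \eqref{eq:plant_aug}--\eqref{eq:controller_aug}. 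Your write-up merely makes explicit the observability justification and the coordinate regrouping that the paper's proof states in compressed form.
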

\begin{proof}
By Lemma \ref{lemma:Petersen_aug}, the same reasoning as in Theorem~\ref{thm:pl_eq_n} concludes that $(x, x_a, \chi) = 0$ is globally asymptotically stable for the feedback interconnection of the augmented system
\begin{align*}
\bmat{x^{+} \\ x_a^{+}} & = \bmat{\As & 0 \\ 0 & A_a} \bmat{x \\ x_a} + \bmat{\Bs \\ B_a} u  \\ 
y_{\aug} & = \Cs x + C_a x_a  
\end{align*}
and the controller
\begin{align*}
\chi^+ &  = \mathbf{F}_{\ells} \chi + \mathbf{L}_{\ells} y_{\aug} + \mathbf{B}_{\ells} u  \\
u & = \mathbf{K} \chi.
\end{align*}
This interconnection is equivalent to~\eqref{eq:plant_aug}, \eqref{eq:controller_aug}.
\end{proof}


\section{Numerical examples}
\label{sec:NumericalExamples}

\subsection{Example for the case $p \ells = n$}
Consider the unstable batch reactor process in \cite[\S 2.6]{Green1995}. Assuming that the control input is piecewise constant in the sample period of 0.2~s, we discretize the continuous-time system by the zero-order-hold method and use
\begin{equation*}
    \left[\begin{array}{c|c} A^{\star} & B^{\star} \\ \hline C^{\star} & 0 \end{array}\right] =
    \left[\begin{smallarray}{cccc|cc} 
    1.427 & 0.039 & 0.854 & -0.622 & 0.034 & -0.305 \\ 
    -0.096 & 0.455 & -0.034 & 0.109 & 0.787 & 0.008 \\ 
    0.115 & 0.538 & 0.384 & 0.529 & 0.571 & -0.380 \\ 
    -0.012 & 0.537 & 0.122 & 0.777 & 0.570 & -0.050 \\ 
    \hline 
    1 & 0 & 1 & -1 & 0 & 0 \\ 
    0 & 1 & 0 & 0 & 0 & 0 \end{smallarray}\right]
\end{equation*}
where $n=4$, $m = 2$, $p=2$ and $\ells = 2$. 
The values of $(A^{\star}, B^{\star}, C^{\star})$ are used only for data generation, but not for controller design. According to \eqref{Zstar},
\[ 
Z_\ells \! = \! \smat{ -0.374 & -0.714 & 1.870 & 1.870 & -1.311 & 0.317 & 0.035 & -0.634 \\
-0.016 & -0.289 & -0.037 & 1.173 & -0.524 & 0.007 & 0.787 & 0.008 }.
\]

Let $\omega \thicksim U[a, b]$ denote a random variable $\omega$ uniformly distributed in the interval $[a, b]$. 
We apply inputs $u^{\tu{m}}_1$, $u^{\tu{m}}_2 \thicksim U[-20,20]$, input noise $d^{u}_1$, $d^{u}_2 \thicksim U[-0.01, 0.01]$ and output noise $d^{y}_1$, $d^{y}_2 \thicksim U[-0.01, 0.01]$ to collect data.
Since the system is unstable, 10 independent experiments are performed and only the first 4 data samples are collected from each experiment. 
We store the data points into $\Psi_1 \in \real^{8 \times 20}$, $\Psi_0$, $U_1$ as in~\eqref{Psi1_Psi0_U0_Delta10}. 
Bearing in mind \eqref{setD}, the upper bounds on the noise (i.e., 0.01 for input noise, and 0.01 for output noise), and the size of the noise data matrix $\Delta_{10}$ (i.e., $10 \times 20$), we have $\Theta = 0.02I_{10}$, see Remark \ref{rem:conversion_to_energy_bound}. 
By Lemma~\ref{lemma:Petersen}, solving \eqref{rob_contr_probl_LMI} via CVX \cite{cvx,gb08} yields the gain matrix
\[ \mathbf{K}  = \tiny{\left[\begin{smallmatrix} 0.204 & 3.541 & -0.927 & -9.371 & 6.444 & -0.225 & -0.992 & 0.249 \\ -1.076 & -3.607 & 4.550 & 9.645 & -6.590 & 0.950 & 0.494 & -1.235 \end{smallmatrix}\right]}.\]
The eigenvalues of the closed-loop system \eqref{sys_xstar_chi_CL} are
\[\begin{aligned}
 & \eig \left( 
\smat{ \As & \Bs \mathbf{K} \\ \mathbf{L}_\ells \Cs & \mathbf{F}_\ells + \mathbf{B}_\ells \mathbf{K} } 
\right) 
\! = \! \{   -0.433, \ -0.401, \ 0.021,  \\ 
& \hspace*{20pt}  0.459, \ 0.102 \pm i0.411, \ 0.483 \pm i0.301 , \ 0, \ 0, \ 0, \ 0 \}.
\end{aligned}\]
These eigenvalues are all inside the unit disk and the closed-loop system \eqref{sys_xstar_chi_CL} is asymptotically stable.

\subsection{Example for the case $p \ells > n$}
Consider the unstable discrete-time system given by
\begin{equation*}
\left[\begin{array}{c|c} A^{\star} & B^{\star} \\ \hline C^{\star} & 0	\end{array}\right] = \left[\begin{smallarray}{ccc|cc} 0 & 1 & 0 & 1 & 0 \\ -1 & 0 & 0 & 0 & 1 \\ 0 & 0 & 1 & 1 & 0 \\ \hline 1 & 0 & 1 & 0 & 0 \\ 0 & 1 & 1 & 0 & 0 \end{smallarray}\right].
\end{equation*}
In this case, $n = 3$, $m = 2$, $p = 2$, $\ells = 2$, and $p \ells=4>3=n$, thus, we need to use the results in Section~\ref{sec:results_pells_neq_n}, namely, Lemma \ref{lemma:Petersen_aug} and Theorem \ref{thm:pl_neq_n_}. 
We choose an artificial system with $n_a=4-3=1$ and 
\begin{equation*}
    \left[\begin{array}{c|c} A_a & B_a \\ \hline C_a & 0 \end{array}\right] = \left[\begin{smallarray}{c|cc} 0 & 1 & 1 \\ \hline 1 & 0 & 0 \\ 1 & 0 & 0 \end{smallarray}\right].
\end{equation*}
The augmented system is then
\begin{equation*}
    \left[\begin{array}{c|c} A_{\aug} & B_{\aug} \\ \hline C_{\aug} & 0 \end{array}\right] = \left[\begin{smallarray}{cccc|cc} 0 & 1 & 0 & 0 & 1 & 0 \\ -1 & 0 & 0 & 0 & 0 & 1 \\ 0 & 0 & 1 & 0 & 1 & 0 \\ 0 & 0 & 0 & 0 & 1 & 1 \\ \hline 1 & 0 & 1 & 1 & 0 & 0 \\ 0 & 1 & 1 & 1 & 0 & 0 \end{smallarray}\right].
\end{equation*}
According to \eqref{eq:Z_ells_aug},
$ Z_{\ells}^{\aug} = \smat{0 &  0 & 0 & 1 & -1 & -1 & 3 & 1 \\ 1 & -1 & 0 & 1 & -2 & -2 & 2 & 2}. $

We apply inputs $u^{\tu{m}}_1$, $u^{\tu{m}}_2 \thicksim U[-2,2]$, input noise $d^{u}_1$, $d^{u}_2 \thicksim U[-0.01, 0.01]$ and output noise $d^{y}_1$, $d^{y}_2 \thicksim U[-0.01, 0.01]$ to collect 32 data samples. The data collection scheme is the one in Figure \ref{fig:prlel_conect_data}a. 
We store the data points into $\Psi_1^{\aug} \in \real^{8 \times 30}$, $\Psi_0^{\aug}$, $U_1^{\aug}$ as in~\eqref{Psi1_Psi0_U0_Delta10_aug}.
By \eqref{eq:ArtificialOutputNoiseBound}, $\| {d}^{y_a} \|_{\mathcal{L}_{\infty}} = 0.02$. 
Bearing in mind \eqref{eq:setD_aug}, the upper bounds on the noise (i.e., 0.01 for input noise, 0.01 for output noise, and 0.02 for the artificial output noise), and the size of the noise data matrix $\Delta_{10}^{\aug}$ (i.e., $10 \times 30$), we have $\Theta^{\aug} = 0.174I_{10}$, see Remark \ref{rem:conversion_to_energy_bound_aug}.
It can be verified that data satisfy Assumption \ref{ass:Psi0_aug}.
Solving \eqref{rob_contr_probl_LMI_aug} in Lemma \ref{lemma:Petersen_aug} yields the gain matrix
\[ \mathbf{K} = \left[\begin{smallmatrix}  -0.290  &  0.299  &  -0.153  &  -0.343  &  0.822  &  0.792  &  -0.557  &  -0.262 \\  -0.013  &  0.013  &  -0.114  &  0.119 & 0.018  &  0.017  &  0.163  &  -0.009 \end{smallmatrix}\right]. \]

The eigenvalues of the closed-loop system \eqref{eq:plant_aug}, \eqref{eq:controller_aug} are
\[\begin{aligned}
    \eig\left( \smat{ \As & 0 & \Bs \mathbf{K} \\ 0 & A_a & B_a \mathbf{K} \\  \mathbf{L}_\ell \Cs & \mathbf{L}_\ell C_a & \mathbf{F}_\ell + \mathbf{B}_\ell \mathbf{K} } \right) = \{  0.0017, \ -0.1743 & ,  \\ 
    -0.1418 \pm i0.1900, \ 0.3982 \pm i0.1183 & ,  \\
    0.0468 \pm i0.8469, \ 0, \ 0, \ 0, \ 0 & \}.
\end{aligned}\]
These eigenvalues are all inside the unit disk and the closed-loop system is asymptotically stable.

\begin{figure}[t]
\begin{center}
    \includegraphics[width=\columnwidth]{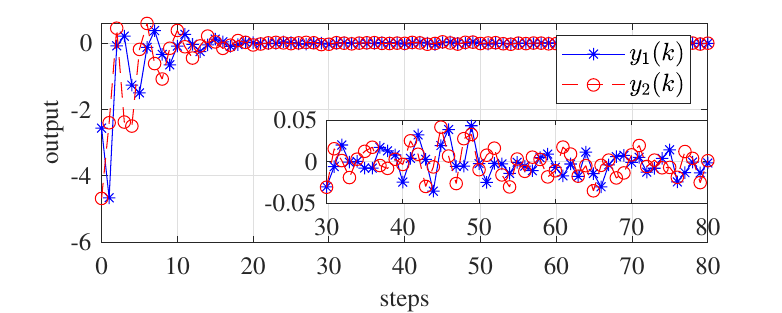}
    \vspace{-0.5cm}
    \caption{Controller execution in the presence of measurement noise.}\label{fig:case2sim}
\end{center}	
\end{figure}

Finally, we examine the effect of measurement noise \emph{during} the execution of the controller. 
In this case, the closed-loop system \eqref{eq:plant_aug}, \eqref{eq:controller_aug} modifies to
\begin{subequations}
\begin{align*}
x^+ = \As x + \Bs ( u^{\tu{m}} - d^{u} ), \quad y^{\tu{m}} = \Cs x + d^{y}
\end{align*}
and 
\begin{align*}
\bmat{x_a^+ \\ \chi^+} &= \bmat{ A_a & B_a \mathbf{K} \\ \mathbf{L}_{\ells} C_a & \mathbf{F}_{\ells} + \mathbf{B}_{\ells} \mathbf{K} } \bmat{x_a \\ \chi} + \bmat{ 0 \\ \mathbf{L}_{\ells} } y^{\tu{m}}   \\
u^{\tu{m}} &= \mathbf{K} \chi.
\end{align*} 
\end{subequations}
We select $x(0) = (-0.07, -2.19, -2.49)$, $x_a(0)=0$, $\chi(0)=0$ and the input noise as $d^{u}_1$, $d^{u}_2 \thicksim U[-0.01, 0.01]$ and the output noise as $d^{y}_1$, $d^{y}_2 \thicksim U[-0.01, 0.01]$.
As expected, by bounded-input-bounded-output stability properties \cite[\S 6.10, Thm.~10.17]{LinearSystems2005}, the outputs remain in a neighborhood of the origin as shown in Figure~\ref{fig:case2sim}.

\section{Conclusion}
\label{sec:Conclusion}
In this paper, we have synthesized a dynamic controller directly from input-output data in the presence of measurement noise.
To deal with input-output data, an auxiliary state-space representation is utilized.
We distinguish between two cases: $p \ells = n$ and $p \ells \neq n$.
In the former case, we capitalize on the structure of the auxiliary representation, and design a controller that stabilizes all systems consistent with data. 
In the latter case, we introduce an augmented system comprising the original system. The controller for the augmented system is designed using the same procedure as in the first case, and subsequently, the controller for the original system is derived from the controller for the augmented system.
The results are validated by comprehensive numerical examples. 
This work specifically focuses on the stabilization problem because it serves as a prototypical control problem. 
Future research includes the extension of our approach to address other control problems such as $\mathcal{H}_2$, $\mathcal{H}_{\infty}$ and tracking control using input-output data. 
Additionally, we anticipate potential applications and extensions of the proposed methodology to nonlinear systems.

\appendix

\section{Ancillary results for the linear systems \eqref{sys_x} and \eqref{sys_xi}}
\label{app:results_lin_sys}

In this section we present some straightforward results for the linear system in~\eqref{sys_x} and the auxiliary linear system in~\eqref{sys_xi}.
Relevant relations between input, state, output of solutions to~\eqref{sys_x} are characterized next.

\begin{claim}
\label{claim:sol_sys_x}
If $(A,C)$ is observable, for each $\hat{x}\in\real^n$ and sequence $\{u(k)\}_{k = 0}^\infty$, let $x(\cdot)$ be the solution to~\eqref{sys_x} with initial condition $x(0) = \hat{x}$ and input $\{u(k)\}_{k = 0}^\infty$, and $y(\cdot)= C x(\cdot)$ be the corresponding output response.
Then, for each $k \ge \ell$,
\begin{subequations}
\begin{align}
& \smat{
y(k-\ell)\\
\vdotsS \\
y(k-1)
}
=
\mathcal{O}_\ell x(k-\ell)
+
\mathcal{T}_\ell 
\smat{
u(k-\ell)\\
\vdotsS \\
u(k-1)
}, \label{sol_sys_x:o_s_i} \\
& x(k) =
A^\ell \mathcal{O}_\ell^{\tu{L}}
\smat{
y(k-\ell)\\
\vdotsS \\
y(k-1)
}
+
\big(
\mathcal{R}_\ell -A^\ell \mathcal{O}_\ell^{\tu{L}} \mathcal{T}_\ell
\big)
\smat{
u(k-\ell)\\
\vdotsS \\
u(k-1)
}, \label{sol_sys_x:s_o_i} \\
&  y(k) \! = \! C A^\ell \mathcal{O}_\ell^{\tu{L}}
\smat{
y(k-\ell)\\
\vdotsS \\
y(k-1)
}
\! + \!
\big(
C \mathcal{R}_\ell \! - \! CA^\ell \mathcal{O}_\ell^{\tu{L}} \mathcal{T}_\ell
\big)
\smat{
u(k-\ell)\\
\vdotsS \\
u(k-1)
}. \label{sol_sys_x:o_o_i}
\end{align}
\end{subequations}
\end{claim}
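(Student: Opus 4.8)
The plan is to establish the three displayed identities in order: \eqref{sol_sys_x:o_s_i} directly from the discrete variation-of-constants formula, \eqref{sol_sys_x:s_o_i} by inverting the observability map with $\mathcal{O}_\ell^{\tu{L}}$, and \eqref{sol_sys_x:o_o_i} simply by premultiplying \eqref{sol_sys_x:s_o_i} by $C$.

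For \eqref{sol_sys_x:o_s_i}, fix $k \ge \ell$ and note that $x^+ = Ax + Bu$ gives, for every $j \in \{0,1,\dots,\ell\}$,
\[
x(k-\ell+j) = A^j x(k-\ell) + \sum_{i=0}^{j-1} A^{j-1-i} B\, u(k-\ell+i);
\]
applying $C$ yields $y(k-\ell+j) = C A^j x(k-\ell) + \sum_{i=0}^{j-1} C A^{j-1-i} B\, u(k-\ell+i)$. Stacking these for $j = 0,\dots,\ell-1$, the state-dependent contributions assemble into $\mathcal{O}_\ell x(k-\ell)$ with $\mathcal{O}_\ell$ as in \eqref{O_ell}, and the input-dependent contributions form the block lower-triangular Toeplitz combination whose $(j+1)$-th block row equals $\sum_{i=0}^{j-1} C A^{j-1-i} B\, u(k-\ell+i)$; this is exactly $\mathcal{T}_\ell$ from \eqref{T_ell} acting on $(u(k-\ell),\dots,u(k-1))$, with zero first block row. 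Hence \eqref{sol_sys_x:o_s_i}.

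For \eqref{sol_sys_x:s_o_i}, premultiply \eqref{sol_sys_x:o_s_i} by $\mathcal{O}_\ell^{\tu{L}}$ and use $\mathcal{O}_\ell^{\tu{L}} \mathcal{O}_\ell = I_n$ to solve for the state:
\[
x(k-\ell) = \mathcal{O}_\ell^{\tu{L}} (y(k-\ell),\dots,y(k-1)) - \mathcal{O}_\ell^{\tu{L}} \mathcal{T}_\ell\, (u(k-\ell),\dots,u(k-1)).
\]
The variation-of-constants formula at $j = \ell$ reads $x(k) = A^\ell x(k-\ell) + \mathcal{R}_\ell (u(k-\ell),\dots,u(k-1))$, with $\mathcal{R}_\ell$ as in \eqref{R_ell} since $\sum_{i=0}^{\ell-1} A^{\ell-1-i} B\, u(k-\ell+i) = \mathcal{R}_\ell (u(k-\ell),\dots,u(k-1))$. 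Substituting the displayed expression for $x(k-\ell)$ and grouping the input terms gives \eqref{sol_sys_x:s_o_i}. Premultiplying \eqref{sol_sys_x:s_o_i} by $C$ and using $y(k) = C x(k)$ then gives \eqref{sol_sys_x:o_o_i}.

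The argument is entirely elementary; the only part demanding care is the index bookkeeping in \eqref{sol_sys_x:o_s_i}, i.e., verifying that the Toeplitz block pattern of $\mathcal{T}_\ell$ in \eqref{T_ell} matches the coefficients of the discrete convolution sum and that the time shift $k-\ell$ is applied consistently across the three stacked vectors.
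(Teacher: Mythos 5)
Your proof is correct and follows essentially the same route as the paper: derive \eqref{sol_sys_x:o_s_i} by stacking the output responses obtained from the discrete variation-of-constants formula, invert the observability map with $\mathcal{O}_\ell^{\tu{L}}$ and substitute into $x(k) = A^\ell x(k-\ell) + \mathcal{R}_\ell(u(k-\ell),\dots,u(k-1))$ to get \eqref{sol_sys_x:s_o_i}, and premultiply by $C$ for \eqref{sol_sys_x:o_o_i}. No gaps; the index bookkeeping you flag is handled correctly.
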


\begin{proof}
By definition, the solution $x(\cdot)$ to~\eqref{sys_x} with initial condition $x(0) = \hat{x}$ and input $\{u(k)\}_{k = 0}^\infty$ satisfies, for each $k \ge \ell$,
\begin{align*}
\smat{
y(k-\ell)\\
\vdotsS \\
y(k-1)
}
& =
\smat{
C\\
CA\\
\vdotsS\\
CA^{\ell-1}
} x(k-\ell) \\
& \qquad
+
\smat{
0 & 0 & \dots & 0 & 0\\
CB                     & 0 & \dots & 0 & 0\\
CAB                    & CB                     & \dots & 0 & 0\\
\vdotsS                & \vdotsS                 & \rotatebox{135}{\tiny\dots} &  \vdotsS & \vdotsS\\
CA^{\ell -2} B         & CA^{\ell -3} B         & \dots & CB &0
} \smat{
u(k-\ell)\\
\vdotsS \\
u(k-1)
}\\
& \overset{\eqref{O_ell_T_ell_R_ell}}{=}
\mathcal{O}_\ell x(k-\ell)
+
\mathcal{T}_\ell 
\smat{
u(k-\ell)\\
\vdotsS \\
u(k-1)
},
\end{align*}
i.e., \eqref{sol_sys_x:o_s_i} holds.
\eqref{sol_sys_x:o_s_i} implies that
\begin{align}
x(k-\ell) = 
\mathcal{O}_\ell^{\tu{L}} 
\Bigg(
\smat{
y(k-\ell)\\
\vdotsS \\
y(k-1)
}
-
\mathcal{T}_\ell
\smat{
u(k-\ell)\\
\vdotsS \\
u(k-1)
}
\Bigg)
\label{sol_sys_x:s_o_i_ell_times_before}
\end{align}
since $(A,C)$ is observable and $\mathcal{O}_\ell$ possesses a left inverse. 
By definition, the solution $x(\cdot)$ to~\eqref{sys_x} with initial condition $x(0) = \hat{x}$ and input $\{u(k)\}_{k = 0}^\infty$ satisfies, for each $k \ge \ell$,
\begin{align*}
x(k) & = A^\ell x(k-\ell) + A^{\ell -1} B u(k-\ell) + \dots + B u(k-1),
\end{align*}
as can be proven by mathematical induction; thus,
\begin{align*}
& x(k) \overset{\eqref{O_ell_T_ell_R_ell}}{=} A^\ell x(k-\ell) + \mathcal{R}_\ell 
\smat{
u(k-\ell)\\
\vdotsS \\
u(k-1)
} \\
& \overset{\eqref{sol_sys_x:s_o_i_ell_times_before}}{=}
A^\ell 
\mathcal{O}_\ell^{\tu{L}} 
\Bigg(
\smat{
y(k-\ell)\\
\vdotsS \\
y(k-1)
}
-
\mathcal{T}_\ell
\smat{
u(k-\ell)\\
\vdotsS \\
u(k-1)
}
\Bigg)
+ \mathcal{R}_\ell 
\smat{
u(k-\ell)\\
\vdotsS \\
u(k-1)
},
\end{align*}
i.e., \eqref{sol_sys_x:s_o_i} holds.
\eqref{sol_sys_x:o_o_i} follows from $y(\cdot) = C x(\cdot)$.
\end{proof}

In Lemma~\ref{lemma:imH=R(A,B)}, we need to determine the reachability subspace of $(\mathbf{A}_\ell,\mathbf{B}_\ell)$ in~\eqref{sys_xi}. 
Hence, we provide an analytic expression for the forced response of~\eqref{sys_xi} next.

\begin{claim}
\label{claim:forced_response_xi}
Let $\xi(\cdot)$ be the solution to~\eqref{sys_xi} with initial condition equal to $0$ and input sequence $\{ v(k) \}_{k = 0}^\infty$, and introduce the fictitious $v(-1-\ell)=  v(-\ell)= \dots = v(-1) = 0$. 
The solution $\xi(\cdot)$ satisfies, for each $k \ge  0$,
\begin{align}
\label{all_steps:xi_k}
\xi(k) = 
\left[
\begin{array}{c}
\begin{smallmatrix}
\xi_1(k) \\
\xi_2(k) \\
\vdotsS\\
\xi_{\ell-1}(k)\\
\xi_\ell(k) \\
\end{smallmatrix}\\
\hline
\begin{smallmatrix}
\xi_{\ell+1}(k)\\
\xi_{\ell+2}(k)\\
\vdotsS\\
\xi_{\ell+\ell-1}(k)\\
\xi_{\ell+\ell}(k)\\
\end{smallmatrix}
\end{array}
\right]
=
\left[
\begin{array}{c}
\begin{smallmatrix}
\sum_{j=0}^{k - 1 - \ell} C A^{k - 1 - \ell - j} B v(j) \\
\sum_{j=0}^{k - \ell} C A^{k - \ell - j} B v(j) \\
\vdotsS\\
\sum_{j=0}^{k - 3} C A^{k - 3 - j} B v(j) \\
\sum_{j=0}^{k - 2} C A^{k - 2 - j} B v(j) \\
\end{smallmatrix}\\
\hline
\begin{smallmatrix}
v(k-\ell)\\
v(k-\ell+1)\\
\vdotsS\\
v(k-2)\\
v(k-1)\\
\end{smallmatrix}
\end{array}
\right]
\end{align}
or, equivalently,
\begin{align}
\label{all_steps:xi_k_alt}
\xi(k) 
=
\left[
\begin{array}{c}
\mathcal{O}_\ell \sum_{j=0}^{k-1-\ell} A^{k-1-\ell-j} B v(j) + \mathcal{T}_\ell \smat{v(k-\ell)\\
\vdotsS\\
v(k-1)\\}
\\
\hline
\begin{smallmatrix}
v(k-\ell)\\
\vdotsS\\
v(k-1)\\
\end{smallmatrix}
\end{array}
\right].
\end{align}
\end{claim}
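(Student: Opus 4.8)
The plan is to prove the explicit componentwise formula \eqref{all_steps:xi_k} by induction on $k$, and then to observe that \eqref{all_steps:xi_k_alt} is merely its compact rewriting through the definitions of $\mathcal{O}_\ell$, $\mathcal{T}_\ell$, $\mathcal{R}_\ell$ in \eqref{O_ell_T_ell_R_ell}. It is convenient to write $\xi^{[1]}(k):=(\xi_1(k),\dots,\xi_\ell(k))$ and $\xi^{[2]}(k):=(\xi_{\ell+1}(k),\dots,\xi_{\ell+\ell}(k))$, to denote by $x_v(t):=\sum_{j=0}^{t-1}A^{t-1-j}Bv(j)$ the zero-state response of~\eqref{sys_x} to $\{v(k)\}_{k=0}^\infty$ (so $x_v(t)=0$ for $t\le 0$, consistently with the fictitious $v(-1-\ell)=\dots=v(-1)=0$ and with $x_v(t)=Ax_v(t-1)+Bv(t-1)$ for all $t$), and by $V(k):=(v(k-\ell),\dots,v(k-1))$ the stack of the last $\ell$ inputs. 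With this notation, \eqref{all_steps:xi_k} asserts $\xi_i(k)=Cx_v(k-\ell-1+i)$ for $i\in\{1,\dots,\ell\}$ and $\xi^{[2]}(k)=V(k)$.

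For the equivalence of the two forms, I would expand $\mathcal{O}_\ell\sum_{j=0}^{k-1-\ell}A^{k-1-\ell-j}Bv(j)=\mathcal{O}_\ell x_v(k-\ell)$ blockwise: using $\mathcal{O}_\ell=[C^\top\ (CA)^\top\ \cdots\ (CA^{\ell-1})^\top]^\top$, the lower-triangular Toeplitz structure of $\mathcal{T}_\ell$, and the recursion $x_v(t)=Ax_v(t-1)+Bv(t-1)$, the $i$-th block of $\mathcal{O}_\ell x_v(k-\ell)+\mathcal{T}_\ell V(k)$ collapses to $Cx_v(k-\ell-1+i)$, which is the $i$-th block of \eqref{all_steps:xi_k}; the bottom block of both right-hand sides is $V(k)$. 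This is nothing but \eqref{sol_sys_x:o_s_i} of Claim~\ref{claim:sol_sys_x} read for $x=x_v$ and $y=Cx_v$. Hence \eqref{all_steps:xi_k} and \eqref{all_steps:xi_k_alt} coincide, and below I switch freely between them.

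The induction on $k$ goes as follows. For $k=0$, $\xi(0)=0$ and the right-hand side of \eqref{all_steps:xi_k} vanishes (the sums are empty and $\xi^{[2]}(0)=(v(-\ell),\dots,v(-1))=0$), so the claim holds. For the inductive step, the block structure of $\mathbf{A}_\ell$ in \eqref{sys_xi:Aell} and of $\mathbf{B}_\ell$ in \eqref{sys_xi:A0_L_B} gives the componentwise dynamics $\xi_i^+=\xi_{i+1}$ and $\xi_{\ell+i}^+=\xi_{\ell+i+1}$ for $i\in\{1,\dots,\ell-1\}$, $\xi_{\ell+\ell}^+=v$, and $\xi_\ell^+=Z_{1\ell}\xi^{[1]}+Z_{2\ell}\xi^{[2]}$. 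Every component except $\xi_\ell$ reproduces its claimed value at $k+1$ by a single shift of the summation/time index in the induction hypothesis (the bottom $\ell$ blocks being a shift register carrying $V(\cdot)$). The one nontrivial identity is $\xi_\ell(k+1)=Cx_v(k)=\sum_{j=0}^{k-1}CA^{k-1-j}Bv(j)$: using the hypothesis in the form \eqref{all_steps:xi_k_alt}, $\xi^{[1]}(k)=\mathcal{O}_\ell x_v(k-\ell)+\mathcal{T}_\ell V(k)$ and $\xi^{[2]}(k)=V(k)$, whence by the definition of $Z_\ell=[Z_{1\ell}\ Z_{2\ell}]$ in \eqref{sys_xi:Z} and $\mathcal{O}_\ell^{\tu{L}}\mathcal{O}_\ell=I_n$,
\begin{align*}
Z_{1\ell}\xi^{[1]}(k)+Z_{2\ell}\xi^{[2]}(k)
&= CA^\ell\mathcal{O}_\ell^{\tu{L}}\big(\mathcal{O}_\ell x_v(k-\ell)+\mathcal{T}_\ell V(k)\big)\\
&\qquad+\big(C\mathcal{R}_\ell-CA^\ell\mathcal{O}_\ell^{\tu{L}}\mathcal{T}_\ell\big)V(k)\\
&= C\big(A^\ell x_v(k-\ell)+\mathcal{R}_\ell V(k)\big)=Cx_v(k),
\end{align*}
the final equality being the $\ell$-fold iterate of $x_v(t)=Ax_v(t-1)+Bv(t-1)$. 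Alternatively, since $\xi^{[1]}(k)=(Cx_v(k-\ell),\dots,Cx_v(k-1))$, this last identity is exactly \eqref{sol_sys_x:o_o_i} of Claim~\ref{claim:sol_sys_x} for $x=x_v$. This closes the induction.

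I do not expect a genuine obstacle here: the content is in bookkeeping the delays (the off-by-$\ell$ shifts, the empty sums at small $k$, the fictitious negative-time inputs, which is precisely why $v(-1-\ell)$ is introduced) consistently, together with the observation that the combination $Z_{1\ell}\xi^{[1]}+Z_{2\ell}\xi^{[2]}$ is tailored so that $\mathcal{O}_\ell^{\tu{L}}\mathcal{O}_\ell=I_n$ collapses it to the output update $Cx_v(k)$ of Claim~\ref{claim:sol_sys_x}.
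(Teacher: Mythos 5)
Your proof is correct, and it follows the same skeleton as the paper's: induction on $k$, with the shift components handled trivially and the only nontrivial identity being block row $\ell$, closed via $\mathcal{O}_\ell^{\tu{L}} \mathcal{O}_\ell = I_n$, plus a separate verification that \eqref{all_steps:xi_k} and \eqref{all_steps:xi_k_alt} agree. The organizational difference is worth noting: the paper first proves the componentwise form \eqref{all_steps:xi_k} by a fully explicit term-by-term expansion of $C A^\ell \mathcal{O}_\ell^{\tu{L}}$ acting on the stacked convolution sums (the long two-column display \eqref{forced_response_xi:long_eq}, with all cancellations against the $\mathcal{T}_\ell$ terms written out), and only afterwards establishes the equivalence with \eqref{all_steps:xi_k_alt} via \eqref{forced_response_xi:long_eq_2}; you instead establish the equivalence of the two right-hand sides first (as an algebraic identity in the zero-state response $x_v$), and then use the compact form \eqref{all_steps:xi_k_alt} as the inductive hypothesis, so the key step $Z_{1\ell}\xi^{[1]}(k)+Z_{2\ell}\xi^{[2]}(k)=C\big(A^\ell x_v(k-\ell)+\mathcal{R}_\ell V(k)\big)=Cx_v(k)$ collapses to two lines and transparently mirrors \eqref{sol_sys_x:o_s_i} and \eqref{sol_sys_x:o_o_i} of Claim~\ref{claim:sol_sys_x}. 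One small caution: Claim~\ref{claim:sol_sys_x} is stated only for $k \ge \ell$, whereas your induction needs the identities for all $k \ge 0$ with the zero-extended $x_v$ and fictitious inputs; since your primary derivation goes through the recursion $x_v(t)=Ax_v(t-1)+Bv(t-1)$ directly (which does hold with those conventions), the citations of Claim~\ref{claim:sol_sys_x} should be kept as the asides you intend them to be rather than as the load-bearing step.
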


\begin{proof}
We prove the first part of the statement, i.e., the validity of \eqref{all_steps:xi_k}, by mathematical induction.
The base case corresponds to showing \eqref{all_steps:xi_k} for $k = 0$; namely, we would like to show
\begin{align}
\label{xi_1}
\xi(0) =
\left[
\begin{array}{c}
\begin{smallmatrix}
\sum_{j=0}^{-1 - \ell} C A^{-1 - \ell - j} B v(j) \\
\vdotsS\\
\sum_{j=0}^{- 2} C A^{- 2 - j} B v(j) \\
\end{smallmatrix}\\
\hline
\begin{smallmatrix}
v(-\ell)\\
\vdotsS\\
v(-1)\\
\end{smallmatrix}
\end{array}
\right] =
\left[
\begin{array}{c}
\begin{smallmatrix}
0\\
\vdotsS\\
0 \\
\end{smallmatrix}\\
\hline
\begin{smallmatrix}
0\\
\vdotsS\\
0\\
\end{smallmatrix}
\end{array}
\right].
\end{align}
This is true since the sums are empty and the fictitious $v(-1-\ell)=v(-\ell)= \dots = v(-1)$ are zero.
The induction step corresponds to showing that if \eqref{all_steps:xi_k} holds for $k \ge 0$, then \eqref{all_steps:xi_k} holds also for $k+1$; namely, we would like to show
\begin{align}
\label{xi_k+1}
\xi(k+1) =
\left[
\begin{array}{c}
\begin{smallmatrix}
\sum_{j=0}^{k - \ell} C A^{k - \ell - j} B v(j) \\
\sum_{j=0}^{k + 1 - \ell} C A^{k +1 - \ell - j} B v(j) \\
\vdotsS\\
\sum_{j=0}^{k - 2} C A^{k - 2 - j} B v(j) \\
\sum_{j=0}^{k - 1} C A^{k - 1 - j} B v(j) \\
\end{smallmatrix}\\
\hline
\begin{smallmatrix}
v(k-\ell+1)\\
v(k-\ell+2)\\
\vdotsS\\
v(k-1)\\
v(k)\\
\end{smallmatrix}
\end{array}
\right].
\end{align}
To do so, we compute $\xi(k+1)$ as
\begin{align*}
& \xi(k+1) \overset{\eqref{sys_xi}}{=} \mathbf{A}_\ell \xi(k) + \mathbf{B}_\ell v(k)\\
& \overset{\eqref{sys_xi:A0_L_B_Z},\eqref{sys_xi:Aell},\eqref{all_steps:xi_k}}{=} 
\left[
\begin{array}{c}
\begin{smallmatrix}
\sum_{j=0}^{k - \ell} C A^{k - \ell - j} B v(j) \\
\vdotsS\\
\sum_{j=0}^{k - 2} C A^{k - 2 - j} B v(j) \\
\smat{
C A^\ell \mathcal{O}_\ell^{\tu{L}} &\big|& C \mathcal{R}_\ell - C A^\ell \mathcal{O}_\ell^{\tu{L}} \mathcal{T}_\ell
}
\xi(k) 
\begin{matrix}
\rule{0pt}{12pt}
\end{matrix}
\\
\end{smallmatrix}\\
\hline
\begin{smallmatrix}
v(k-\ell+1)\\
\vdotsS\\
v(k-1)\\
v(k)\\
\end{smallmatrix}
\end{array}
\right].
\end{align*}
We focus on block row $\ell$. Its subterms $\left[\begin{array}{c|c}0 & C \mathcal{R}_\ell \end{array}\right] \xi(k)$ and $\left[\begin{array}{c|c}0 & \mathcal{T}_\ell \end{array}\right] \xi(k)$ are
\begin{align*}
& C \mathcal{R}_\ell \smat{v(k-\ell) \\ \vdotsS \\ v(k-1)}  
\overset{\eqref{R_ell}}{=} C \bmat{ A^{\ell-1} B & \ldots & A B & B } \smat{v(k-\ell) \\ \vdotsS \\ v(k-1)} \\
& = C A^{\ell-1} B v(k - \ell) + \dots + C B v(k-1) , \\
& \mathcal{T}_\ell \smat{v(k-\ell) \\ \vdotsS \\ v(k-1)} 
\overset{\eqref{T_ell}}{=} 
\smat{
0\\
CB \\
\vdotsS\\
C A^{\ell-2} B
} v(k-\ell) +
\smat{
0\\
0\\
CB\\
\vdotsS\\
C A^{\ell - 3} B
} v(k - \ell + 1) \\
& + \dots +
\smat{0\\ \vdotsS \\ 0\\ CB} v(k-2) +
\smat{0\\ \vdotsS\\ 0} v(k-1)\\
& = \smat{
0\\
CB v(k-\ell)\\
C A B v(k-\ell) + C B v(k-\ell+1)\\
\vdotsS\\
C A^{\ell-2} B v(k-\ell) + C A^{\ell-3} B v(k-\ell+1) + \dots + C B v(k-2)}.
\end{align*}
In block row $\ell$, these two subterms lead to \eqref{forced_response_xi:long_eq}, displayed over two columns.
\begin{figure*}
\begin{equation}
\label{forced_response_xi:long_eq}
\begin{aligned}
&\xi_\ell(k+1) \overset{\eqref{all_steps:xi_k}}{=} C A^\ell \mathcal{O}_\ell^{\tu{L}} 
\smat{
\sum_{j=0}^{k - 1 - \ell} C A^{k - 1 - \ell - j} B v(j) \\
\vdotsS\\
\sum_{j=0}^{k - 2} C A^{k - 2 - j} B v(j) \\
} -
C A^\ell \mathcal{O}_\ell^{\tu{L}} 
\smat{
0\\
CB v(k-\ell)\\
C A B v(k-\ell) + C B v(k-\ell+1)\\
\vdotsS\\
C A^{\ell-2} B v(k-\ell) + C A^{\ell-3} B v(k-\ell+1) + \dots + C B v(k-2)}\\
& \quad + C A^{\ell-1} B v(k - \ell) + \dots + C B v(k-1) \\
& =
C A^\ell \mathcal{O}_\ell^{\tu{L}}
\text{$\smat{
C A^{k-1-\ell} B v(0) & + & \dots & + & C B v(k-1-\ell) &  & & & & & & & \hfill  \\
C A^{k-\ell} B v(0) & + & \dots & + & C A B v(k-1-\ell) & + & C B v(k-\ell)& & & & & && - &  C B v(k-\ell) \hfill \\
&  & \vdots  &  & & & & & & & & & & & \\
C A^{k-3} B v(0) & + & \dots & + & C A^{\ell - 2} B v(k-1-\ell) & + & C A^{\ell-3} B v(k-\ell) & + & \dots  & + & C B v(k-3) & & & - & C A^{\ell-3} B v(k-\ell)  - \dots  -  C B v(k-3) \hfill \\
C A^{k-2} B v(0) & + & \dots & + & C A^{\ell-1} B v(k-1-\ell) & + & C A^{\ell-2} B v(k-\ell) & + & \dots  & + & C A B v(k-3) & + & C B v(k-2)& - & C A^{\ell-2} B v(k-\ell)  -  \dots  - C B v(k-2) \hfill \\
}$
}
\\
& \quad + C A^{\ell-1} B v(k - \ell) + \dots + C B v(k-1) \\
& =
C A^\ell \mathcal{O}_\ell^{\tu{L}}
\smat{
C A^{k-1-\ell} B v(0) & + & C A^{k-1-\ell-1} B v(1) & + & \dots & + & C B v(k-1-\ell)\\
C A^{k-\ell} B v(0) & + & C A^{k-\ell-1} B v(1) & + & \dots & +  & C A B v(k-1-\ell)\\
\vdotsS\\
C A^{k-3} B v(0) & + & C A^{k-4} B v(1) & + & \dots & + & C A^{\ell-2} B v(k-1-\ell)\\
C A^{k-2} B v(0) & + & C A^{k-3} B v(1) & + & \dots & + & C A^{\ell-1} B v(k-1-\ell)\\
} + C A^{\ell-1} B v(k - \ell) + \dots + C B v(k-1) \\
& =
C A^\ell \mathcal{O}_\ell^{\tu{L}}
\smat{
C A^{k-1-\ell} B & \dots & C B\\
C A^{k-\ell} B & \dots & C A B\\
\vdotsS & & \vdotsS\\
C A^{k-3} B & \dots & C A^{\ell-2} B\\
C A^{k-2} B & \dots & C A^{\ell-1} B
}
\smat{v(0)\\ \vdotsS \\ v(k-1-\ell)}  + C A^{\ell-1} B v(k - \ell) + \dots + C B v(k-1) \\
& =
C A^\ell \mathcal{O}_\ell^{\tu{L}}
\smat{
C \cdot A^{k-1-\ell} B & \dots & C \cdot B\\
C A \cdot A^{k-1-\ell} B & \dots & C A \cdot B\\
\vdotsS & & \vdotsS\\
C A^{\ell-2} \cdot A^{k-1-\ell} B & \dots & C A^{\ell-2} \cdot B\\
C A^{\ell-1} \cdot A^{k-1-\ell} B & \dots & C A^{\ell-1} \cdot B
}
\smat{v(0)\\ \vdotsS \\ v(k-1-\ell)}  + C A^{\ell-1} B v(k - \ell) + \dots + C B v(k-1) \\
& =
C A^\ell \mathcal{O}_\ell^{\tu{L}}
\bmat{
\mathcal{O}_\ell \cdot A^{k-1-\ell} B & \mathcal{O}_\ell \cdot A^{k-2-\ell} B & \dots & \mathcal{O}_\ell \cdot A  B & \mathcal{O}_\ell \cdot B}
\smat{v(0)\\ \vdotsS \\ v(k-1-\ell)} + C A^{\ell-1} B v(k - \ell) + \dots + C B v(k-1) \\
& =
\bmat{
C A^\ell \cdot A^{k-1-\ell} B & C A^\ell \cdot A^{k-2-\ell} B & \dots & C A^\ell \cdot A  B & C A^\ell \cdot B}
\smat{v(0)\\ \vdotsS \\ v(k-1-\ell)}  + C A^{\ell-1} B v(k - \ell) + \dots + C B v(k-1) \\
& =
C A^{k-1} B v(0) + C  A^{k-2} B v(1) + \dots + C A^{\ell+1}  B v(k-2-\ell) + C A^\ell B v(k-1-\ell) + C A^{\ell-1} B v(k - \ell) + \dots + C B v(k-1) \\
& =  \sum_{j=0}^{k-1} C A^{k-1-j} B v(j) 
\end{aligned}
\end{equation}
\hrulefill
\end{figure*}
Hence, from~\eqref{forced_response_xi:long_eq},
\begin{align*}
\xi(k+1) =
\left[
\begin{array}{c}
\begin{smallmatrix}
\sum_{j=0}^{k - \ell} C A^{k - \ell - j} B v(j) \\
\vdotsS\\
\sum_{j=0}^{k - 2} C A^{k - 2 - j} B v(j) \\
\sum_{j=0}^{k-1} C A^{k-1-j} B v(j) \\
\end{smallmatrix}\\
\hline
\begin{smallmatrix}
v(k-\ell+1)\\
\vdotsS\\
v(k)\\
\end{smallmatrix}
\end{array}
\right].
\end{align*}
The so-obtained $\xi(k+1)$ is equal to \eqref{xi_k+1}, as we needed to show.

We prove the second part of the statement, namely, the equivalence of \eqref{all_steps:xi_k} and \eqref{all_steps:xi_k_alt} and, specifically, the equivalence for $\xi_1$, \dots, $\xi_\ell$.
From~\eqref{all_steps:xi_k}, we have, for each $k \ge 0$, \eqref{forced_response_xi:long_eq_2}, which is displayed over two columns.
\begin{figure*}
\begin{equation}
\label{forced_response_xi:long_eq_2}
\begin{aligned}
& \smat{
\xi_1(k) \\
\xi_2(k) \\
\vdotsS\\
\xi_{\ell-1}(k) \\
\xi_\ell(k) \\
}
=
\smat{
\sum_{j=0}^{k - 1 - \ell} C A^{k - 1 - \ell - j} B v(j) \\
\sum_{j=0}^{k - 1 - \ell} C A^{k - \ell - j} B v(j) \\
\vdotsS\\
\sum_{j=0}^{k - 1 - \ell} C A^{k - 3 - j} B v(j) \\
\sum_{j=0}^{k - 1 - \ell} C A^{k - 2 - j} B v(j) \\
}
+
\smat{
0 \\
\sum_{j=k - \ell}^{k - \ell} C A^{k - \ell - j} B v(j) \\
\vdotsS \\
\sum_{j=k - \ell}^{k - 3} C A^{k - 3 - j} B v(j) \\
\sum_{j=k - \ell}^{k - 2} C A^{k - 2 - j} B v(j) \\
} =
\smat{
\sum_{j=0}^{k - 1 - \ell} C \cdot A^{k - 1 - \ell - j} B v(j) \\
\sum_{j=0}^{k - 1 - \ell} CA \cdot A^{k - 1 - \ell - j} B v(j) \\
\vdots \\
\sum_{j=0}^{k - 1 - \ell} CA^{\ell-2} \cdot A^{k - 1 - \ell - j} B v(j) \\
\sum_{j=0}^{k - 1 - \ell} CA^{\ell-1} \cdot A^{k - 1 - \ell - j} B v(j) \\
}\\
&
+
\smat{
0 \\
C B v(k-\ell) \\
\vdotsS \\
C A^{\ell-3} B v(k-\ell)+C A^{\ell - 4} B v(k-\ell+1)+\dots+C A B v(k-4)+C B v(k-3) \\
C A^{\ell-2} B v(k-\ell)+C A^{\ell - 3} B v(k-\ell+1)+\dots+C A B v(k-3)+C B v(k-2) \\
}= 
\smat{
C\\
CA\\
\vdotsS\\
CA^{\ell-2}\\
CA^{\ell-1}\\
}
\sum_{j=0}^{k - 1 - \ell} A^{k - 1 - \ell - j} B v(j)\\
& 
+
\smat{
0 & 0 & \dots & 0 & 0\\
CB                     & 0 & \dots & 0 & 0\\
CAB                    & CB                     & \dots & 0 & 0\\
\vdotsS                 & \vdotsS                 & \ddotsS &  \vdotsS & \vdotsS\\
CA^{\ell -2} B         & CA^{\ell -3} B         & \dots & CB & 0
} 
\smat{
v(k-\ell) \\
v(k-\ell+1)\\
\vdotsS\\
v(k-2) \\
v(k-1)
} 
\overset{\eqref{O_ell_T_ell_R_ell}}{=}
\mathcal{O}_\ell \sum_{j=0}^{k-1-\ell} A^{k-1-\ell-j} B v(j) + \mathcal{T}_\ell \smat{v(k-\ell)\\
\rotatebox{90}{\tiny\dots}\\
v(k-1)\\}.
\end{aligned}
\end{equation}
\hrulefill
\end{figure*}
\eqref{forced_response_xi:long_eq_2} corresponds to~\eqref{all_steps:xi_k_alt} as we needed to show.
\end{proof}

As an immediate consequence of Claim~\ref{claim:forced_response_xi}, we have the next result.

\begin{claim}
\label{claim:sol_x_is_sol_xi}
For $(A,C)$ observable, let $x(\cdot)$ be the solution to~\eqref{sys_x} with initial condition $x(0) = 0$ and input $\{ u(k) \}_{k = 0}^\infty$, and $y(\cdot) = C x(\cdot)$ be the corresponding output response.
Then, the solution $\xi(\cdot)$ to~\eqref{sys_xi} with initial condition $\xi(0)=0$ and input $\{ v(k) \}_{k = 0}^\infty = \{ u(k) \}_{k = 0}^\infty$ satisfies
\begin{align}
\label{sol_x_is_sol_xi:xi_k=stack}
\xi(k) = 
\left[
\begin{array}{c}
\begin{smallmatrix}
y(k-\ell)\\
\vdotsS\\
y(k-1)\\
\end{smallmatrix}\\
\hline
\begin{smallmatrix}
u(k-\ell)\\
\vdotsS\\
u(k-1)\\
\end{smallmatrix}
\end{array}
\right] \quad \forall k \ge \ell.
\end{align}
\end{claim}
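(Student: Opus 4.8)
The plan is to obtain Claim~\ref{claim:sol_x_is_sol_xi} as an immediate corollary of Claim~\ref{claim:forced_response_xi}, combined with the input-output relation \eqref{sol_sys_x:o_s_i} of Claim~\ref{claim:sol_sys_x}.

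First I would apply Claim~\ref{claim:forced_response_xi} with $v=u$ and zero initial condition, using the equivalent form \eqref{all_steps:xi_k_alt}, which expresses $\xi(k)$ as the stack of $\mathcal{O}_\ell \sum_{j=0}^{k-1-\ell} A^{k-1-\ell-j} B u(j) + \mathcal{T}_\ell (u(k-\ell),\dots,u(k-1))$ over $(u(k-\ell),\dots,u(k-1))$, valid for all $k\ge 0$ under the convention that the $u(j)$ with negative index vanish. Restricting to $k\ge\ell$, all indices $k-\ell,\dots,k-1$ are nonnegative, so the bottom block of $\xi(k)$ equals $(u(k-\ell),\dots,u(k-1))$ outright, which already matches the bottom block of the claimed identity \eqref{sol_x_is_sol_xi:xi_k=stack}.

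For the top block, the key observation is that, since $x(0)=0$, the solution of \eqref{sys_x} is the pure forced response, $x(m)=\sum_{j=0}^{m-1} A^{m-1-j} B u(j)$ for every $m\ge 0$ (empty sum at $m=0$), exactly as established by the induction in the proof of Claim~\ref{claim:sol_sys_x}. Evaluating at $m=k-\ell\ge 0$ gives $\mathcal{O}_\ell \sum_{j=0}^{k-1-\ell} A^{k-1-\ell-j} B u(j)=\mathcal{O}_\ell\, x(k-\ell)$, so the top block of $\xi(k)$ becomes $\mathcal{O}_\ell\, x(k-\ell) + \mathcal{T}_\ell (u(k-\ell),\dots,u(k-1))$, which by \eqref{sol_sys_x:o_s_i} of Claim~\ref{claim:sol_sys_x} is precisely $(y(k-\ell),\dots,y(k-1))$. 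This proves \eqref{sol_x_is_sol_xi:xi_k=stack} for all $k\ge\ell$.

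There is essentially no obstacle here: the substantive computation has already been carried out in Claim~\ref{claim:forced_response_xi}, and what remains is only bookkeeping --- verifying that the index ranges become free of the fictitious negative-index values precisely when $k\ge\ell$, and recognizing the partial sum $\sum_{j=0}^{k-1-\ell} A^{k-1-\ell-j} B u(j)$ as the zero-initial-condition state $x(k-\ell)$. The only mild care needed is to keep the observability index $\ell$ consistent across \eqref{all_steps:xi_k_alt}, \eqref{sol_sys_x:o_s_i}, and the definitions of $\mathcal{O}_\ell$ and $\mathcal{T}_\ell$ in \eqref{O_ell_T_ell_R_ell}.
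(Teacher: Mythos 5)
Your proof is correct and takes essentially the same route as the paper: both obtain the claim as an immediate consequence of Claim~\ref{claim:forced_response_xi} applied with $v=u$ and zero initial condition, with only bookkeeping left for $k\ge\ell$. The sole (cosmetic) difference is that you use the equivalent form \eqref{all_steps:xi_k_alt} and close via \eqref{sol_sys_x:o_s_i} of Claim~\ref{claim:sol_sys_x} after recognizing the partial sum as the zero-initial-state $x(k-\ell)$, whereas the paper works with \eqref{all_steps:xi_k} and identifies each convolution sum directly with $y(k-\ell),\dots,y(k-1)$; your handling of the boundary case $k=\ell$ (empty sum, matching $y(0)=Cx(0)=0$) is consistent with the paper's.
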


\begin{proof}
The solution $x(\cdot)$ to~\eqref{sys_x} with initial condition $x(0) = 0$ and input $\{ u(k) \}_{k = 0}^\infty$ satisfies, for each $k \ge 1$,
\begin{align*}
x(k) = A^{k-1} B u(0) + \dots + A B u(k-2) + B u(k-1)
\end{align*}
and the corresponding $y(\cdot)$ satisfies, for each $k \ge 1$,
\begin{align}
y(k) & = C A^{k-1} B u(0) + \dots + C A B u(k-2) + C B u(k-1) \notag \\
&= \sum_{j=0}^{k-1} C A^{k-1 - j} B u(j). \label{sol_x_is_sol_xi:y_k}
\end{align}
The solution $\xi(\cdot)$ to~\eqref{sys_xi} with initial condition $\xi(0)=0$ and input $\{ v(k) \}_{k = 0}^\infty = \{ u(k) \}_{k = 0}^\infty$ satisfies, by Claim~\ref{claim:forced_response_xi},
\begin{align*}
\xi(k) =
\left[
\begin{array}{c}
\begin{smallmatrix}
\sum_{j=0}^{k - 1 - \ell} C A^{k - 1 - \ell - j} B u(j) \\
\sum_{j=0}^{k - \ell} C A^{k - \ell - j} B u(j) \\
\vdotsS\\
\sum_{j=0}^{k - 3} C A^{k - 3 - j} B u(j) \\
\sum_{j=0}^{k - 2} C A^{k - 2 - j} B u(j) \\
\end{smallmatrix}\\
\hline
\begin{smallmatrix}
u(k-\ell)\\
\vdotsS\\
u(k-1)\\
\end{smallmatrix}
\end{array}
\right]\quad \forall k \ge \ell.
\end{align*}
We note that for $k \ge \ell+1$ all sums are nonempty, and for $k=\ell$ only the sum $\sum_{j=0}^{k - 1 - \ell} C A^{k - 1 - \ell - j} B u(j)$ is empty and, conventionally, equal to $0 = y(0)$.
By~\eqref{sol_x_is_sol_xi:y_k}, for each $k \ge \ell$,
\begin{align*}
& \sum_{j=0}^{k - 1 - \ell} C A^{k - 1 - \ell - j} B u(j) = y(k-\ell) , \dots,  \\
& \sum_{j=0}^{k - 2} C A^{k - 2 - j} B u(j) = y(k-1). 
\end{align*}
Hence,
\begin{align*}
\xi(k) =
\left[
\begin{array}{c}
\begin{smallmatrix}
y(k-\ell) \\
\vdotsS\\
y(k-1)
\end{smallmatrix}\\
\hline
\begin{smallmatrix}
u(k-\ell)\\
\vdotsS\\
u(k-1)\\
\end{smallmatrix}
\end{array}
\right]\quad \forall k \ge \ell,
\end{align*}
as we needed to prove.
\end{proof}

\section{Proof of Lemma~\ref{lemma:imH=R(A,B)}}
\label{app:proof_lemma_imH=R(A,B)}

We need to prove that $\im H_\ell = R(\mathbf{A}_\ell, \mathbf{B}_\ell)$.\newline
($\supseteq$) Consider an arbitrary $z\in  R(\mathbf{A}_\ell, \mathbf{B}_\ell)$.
We would like to show that there exists a vector $w$ such that $z=H_\ell w$ or, equivalently by~\eqref{Hell}, that there exist vectors $w_1$ and $w_2$ such that
\begin{align}
\label{w_to_find}
\bmat{z_1\\ z_2} \! \! = \! \! 
\left[
\begin{array}{c|c}
\mathcal{O}_\ell & \mathcal{T}_\ell\\
\hline
0 & I
\end{array}
\right] \! \!
\bmat{w_1\\ w_2}
\! \!  \iff  \! \!
\begin{cases}
z_1 \!  = \! \mathcal{O}_\ell w_1 \! + \! \mathcal{T}_\ell w_2 \\ 
z_2 \! = \! w_2.
\end{cases}
\end{align}
For each  $z\in  R(\mathbf{A}_\ell, \mathbf{B}_\ell)$, the definition of reachable state stipulates that there is a time $\tau \ge 0$ and a sequence $\{v(k)\}_{k = 0}^{\tau-1}$ such that the solution $\xi(\cdot)$ to~\eqref{sys_xi} with initial condition $\xi(0) = 0$ and input $\{v(k)\}_{k = 0}^{\tau-1}$ satisfies $\xi(\tau)=z$.
The solution $\xi(\cdot)$ to~\eqref{sys_xi} with initial condition $\xi(0) = 0$ and input $\{v(k)\}_{k = 0}^{\tau-1}$ is characterized by~\eqref{all_steps:xi_k_alt} in Claim~\ref{claim:forced_response_xi}, which considers fictitious inputs ${v}(-\ell)$, \dots ${v}(-1)$. 
We then divide into different cases based on the value of $\tau$ corresponding to each $z\in  R(\mathbf{A}_\ell, \mathbf{B}_\ell)$.
For $\tau =0$, $z = \xi(\tau) = 0$ and \eqref{w_to_find} holds with $w_1= 0$ and $w_2 = 0$.
For $\tau \in \{ 1, \dots, \ell \}$, we rewrite \eqref{all_steps:xi_k_alt} as
\begin{align*}
z =
\xi(\tau) = 
\left[
\begin{array}{c}
\begin{smallmatrix}
\xi_1(\tau) \\
\vdotsS\\
\xi_\ell(\tau) \\
\end{smallmatrix}\\
\hline
\begin{smallmatrix}
\xi_{\ell+1}(\tau)\\
\vdotsS\\
\xi_{2\ell}(\tau)
\end{smallmatrix}
\end{array}
\right]
=
\left[
\begin{array}{c}
\mathcal{T}_\ell \smat{
v(\tau-\ell)\\
\vdotsS \\
v(\tau-1)\\}
\\
\hline
\begin{smallmatrix}
v(\tau-\ell)\\
\vdotsS\\
v(\tau-1)\\
\end{smallmatrix}
\end{array}
\right]
\end{align*}
where the sum in~\eqref{all_steps:xi_k_alt} is always empty for $\tau \in \{ 1, \dots, \ell \}$ and some components of $\smat{v(\tau-\ell)\\
\vdotsS \\
v(\tau-1)}$
are zero based on the specific value of $\tau \in \{ 1, \dots, \ell \}$;
hence, \eqref{w_to_find} holds with $w_1=0$ and $w_2 = \smat{v(\tau-\ell)\\
\vdotsS \\
v(\tau-1)}$.
For $\tau \in \{ \ell + 1, \ell + 2 \dots,  \}$, we rewrite \eqref{all_steps:xi_k_alt} as
\begin{align*}
z & =
\xi(\tau)
=
\left[
\begin{array}{c}
\begin{smallmatrix}
\xi_1(\tau) \\
\vdotsS\\
\xi_\ell(\tau) \\
\end{smallmatrix}\\
\hline
\begin{smallmatrix}
\xi_{\ell+1}(\tau)\\
\vdotsS\\
\xi_{2\ell}(\tau)
\end{smallmatrix}
\end{array}
\right] \\
& =
\left[
\begin{array}{c}
\mathcal{O}_\ell \sum_{j=0}^{\tau-1-\ell} A^{\tau-1-\ell-j} B v(j) + \mathcal{T}_\ell \smat{v(\tau-\ell)\\
\vdotsS \\
v(\tau-1)\\}
\\
\hline
\begin{smallmatrix}
v(\tau-\ell)\\
\vdotsS\\
v(\tau-1)\\
\end{smallmatrix}
\end{array}
\right]
\end{align*}
where the sum is never empty since $\tau \ge \ell +1$ and no component of $\smat{v(\tau-\ell)\\
\vdotsS \\
v(\tau-1)}$ is zero since $\tau \ge \ell$; hence, \eqref{w_to_find} holds with $w_1=\sum_{j=0}^{\tau-1-\ell} A^{\tau-1-\ell-j} B v(j)$ and $w_2=\smat{v(\tau-\ell)\\
\vdotsS \\
v(\tau-1)}$. 

($\subseteq$) Consider an arbitrary $z \in \im H_\ell$.
We would like to show that $z \in R(\mathbf{A}_\ell,\mathbf{B}_\ell)$.
To do so, we proceed in this way: for each $z \in \im H_\ell$, we first construct a solution $x_z(\cdot)$ to~\eqref{sys_x} with zero initial condition and a suitable input sequence; with $x_z(\cdot)$ we associate a solution $\xi_z(\cdot)$ to~\eqref{sys_xi} by invoking Claim~\ref{claim:sol_x_is_sol_xi} to conclude that $z \in R(\mathbf{A}_\ell, \mathbf{B}_\ell)$.

Since $z = (z_1, \dots, z_\ell, z_{\ell+1}, \dots, z_{2 \ell}) \in \im H_\ell$, there exist a vector $\smat{\chi\\ \psi}$ such that $z = H_\ell \smat{\chi\\ \psi} =
\smat{
\mathcal{O}_\ell & \mathcal{T}_\ell\\
0 & I
}
\smat{\chi\\ \psi} $ or, equivalently,
\begin{align}
\label{rel_z_chi_psi}
\smat{
z_1 \\
\vdotsS \\
z_\ell\\[2pt]
\hline
z_{\ell+1}\rule{0pt}{5pt}\\
\vdotsS \\
z_{2\ell}
}
=
\bmat{
\mathcal{O}_\ell \chi + \mathcal{T}_\ell \psi\\
\psi
}.
\end{align}
For each $z$, we construct a solution $x_z(\cdot)$ as follows.
By reachability of the pair $(A,B)$, there exist a time $\tau \ge 0$ and a sequence $u_z(0)$, \dots, $u_z(\tau - 1)$ such that the solution $x_z(\cdot)$ to~\eqref{sys_x} with $x_z(0)=0$ and input $u_z(0)$, \dots, $u_z(\tau - 1)$ satisfies $x_z(\tau) = \chi$.
Let $\rho$ be the smallest integer for which $[B \ A B \ \dots \ A^{\rho-1} B]$ has full row rank, which exists by reachability of the pair $(A,B)$.
This guarantees that there exists $\tau \le \rho$; however, in view of the following, we can always take $\tau = \max\{ \rho, \ell \}$ by setting some of the initial elements of the sequence $u_z(0)$, \dots, $u_z(\tau - 1)$ to be the zero vector.
For the so-selected $\tau$, which ensures $x_z(\tau) = \chi$, we continue the input sequence with $u_z(\tau)=\psi_1$, \dots, $u_z(\tau + \ell - 1) = \psi_\ell$.
The output $y_z(\cdot)$ corresponding to the so-constructed solution $x_z(\cdot)$ satisfies
\begin{align*}
& y_z(\tau) = C x_z(\tau) , \,\, y_z(\tau+1) = C A x_z(\tau) + C B u_z(\tau), \dots, \\
& y_z(\tau+\ell-1)  = C A^{\ell -1} x_z(\tau)+  C A^{\ell -2} B u_z(\tau)+ \dots +  \\
& \hspace*{25.2mm}  C A B u_z(\tau+ \ell -3) + C B u_z(\tau+ \ell -2)
\end{align*}
or equivalently, by stacking the previous equalities,
\begin{align*}
\smat{
y_z(\tau)\\
\vdotsS\\
y_z(\tau+\ell-1)
}
\! = \!
\mathcal{O}_\ell x_z(\tau)
\! + \!
\mathcal{T}_\ell 
\smat{
u_z(\tau)\\
\vdotsS \\
u_z(\tau+\ell-1)}
\! = \!
\mathcal{O}_\ell \chi + \mathcal{T}_\ell \psi.
\end{align*}
So far, we have then constructed a solution $x_z(\cdot)$ to~\eqref{sys_x} with initial condition $x_z(0)=0$ and input sequence $u_z(0)$, \dots, $u_z(\tau - 1)$, $u_z(\tau)=\psi_1$, \dots, $u_z(\tau + \ell - 1) = \psi_\ell$ such that $x_z(\tau) = \chi$ and $\smat{
y_z(\tau)\\
\vdotsS \\
y_z(\tau+\ell-1)
}
=
\mathcal{O}_\ell \chi + \mathcal{T}_\ell \psi$.
By Claim~\ref{claim:sol_x_is_sol_xi}, the solution $\xi_z(\cdot)$ to~\eqref{sys_xi} with initial condition $\xi(0) = 0$ and input $u_z(0)$, \dots, $u_z(\tau - 1)$, $u_z(\tau)=\psi_1$, \dots, $u_z(\tau + \ell - 1) = \psi_\ell$ satisfies
\begin{align*}
\xi_z(\tau + \ell)
=
\left[
\begin{array}{c}
\begin{smallmatrix}
y_z(\tau) \\
\vdotsS\\
y_z(\tau+\ell-1) \\
\end{smallmatrix}\\
\hline
\begin{smallmatrix}
u_z(\tau) \\
\vdotsS\\
u_z(\tau+\ell-1) \\
\end{smallmatrix}
\end{array}
\right]
=
\left[
\begin{array}{c}
\mathcal{O}_\ell \chi + \mathcal{T}_\ell \psi\\
\hline
\psi
\end{array}
\right]
\overset{\eqref{rel_z_chi_psi}}{=}
z.
\end{align*}
By construction, $\xi_z(\tau + \ell) \in R(\mathbf{A}_\ell,\mathbf{B}_\ell)$ and so does $z$.

\section{Proof of Lemma~\ref{lemma:AbfBbf_reach_pl=n}}
\label{app:proof_lemma_AbfBbf_reach_pl=n}

From~\eqref{Hell}, $H_\ell = \smat{I & \mathcal{T}_\ell \\ 0 & I } \smat{\mathcal{O}_\ell & 0 \\ 0 & I_{m\ell}}$, so $\rank H_\ell = \rank \mathcal{O}_\ell + m\ell = n+m\ell$ since $(A,C)$ is observable.

($\Longrightarrow$) If $(\mathbf{A}_\ell,\mathbf{B}_\ell)$ is reachable, $(p+m)\ell = \dim R(\mathbf{A}_\ell,\mathbf{B}_\ell)$ and, by Lemma~\ref{lemma:imH=R(A,B)},  $ \dim R(\mathbf{A}_\ell,\mathbf{B}_\ell) = \rank H_\ell$; hence $\rank H_\ell = (p+m)\ell$.
Since we established $\rank H_\ell = n+m\ell$, we conclude $p\ell =n$. 

($\Longleftarrow$) If $p\ell = n$, we have $n + m\ell = \rank H_\ell = \dim R(\mathbf{A}_\ell,\mathbf{B}_\ell)$ by Lemma~\ref{lemma:imH=R(A,B)} and, thus, $\dim R(\mathbf{A}_\ell,\mathbf{B}_\ell) = n + m\ell = (p+m)\ell$.
Since $(p+m)\ell$ is the dimension of the state associated with $(\mathbf{A}_\ell,\mathbf{B}_\ell)$, the pair $(\mathbf{A}_\ell,\mathbf{B}_\ell)$ is reachable.

\section{Proof of Lemma~\ref{lemma:io_sys_x_is_io_sys_xi}}
\label{app:proof_lemma_io_sys_x_is_io_sys_xi}

We prove \eqref{io_sys_x_is_io_sys_xi:xi_k} first.
For each $\hat{x}$ and sequence $\{ u(k)\}_{k = 0}^\infty$, the solution $x(\cdot)$ to~\eqref{sys_x} with initial condition $x(0) = \hat x$ and input $\{u(k) \}_{k = 0}^\infty$ satisfies, for each $k \ge \ell$,
\begin{align}
\smat{
y(k-\ell) \\
\vdotsS \\ 
y(k-1)\\
}
& = \mathcal{O}_\ell  x(k-\ell) + \mathcal{T}_\ell 
\smat{
u(k-\ell)\\
\vdotsS \\
u(k-1)
}. \label{sol_sys_x:o_s_i_again}
\end{align}
by \eqref{sol_sys_x:o_s_i} in Claim~\ref{claim:sol_sys_x}.
For each $\hat{x}$ and $\{ u(k) \}_{k = 0}^\infty$, select $\hat{\xi} = (\hat{\xi}_1, \dots, \hat{\xi}_\ell, \hat{\xi}_{\ell+1}, \dots, \hat{\xi}_{\ell+\ell})$ such that
\begin{align}
\smat{
\hat{\xi}_1\\
\vdotsS\\
\hat{\xi}_\ell
}
=
\mathcal{O}_\ell \hat{x} + \mathcal{T}_\ell 
\smat{
u(0)\\
\vdotsS \\
u(\ell-1)
}
\text{ and }
\smat{
\hat{\xi}_{\ell+1}\\
\vdotsS\\
\hat{\xi}_{\ell+\ell}
}
=
\smat{
u(0)\\
\vdotsS \\
u(\ell-1)
}.
\label{selection_init_cond_sys_xi}
\end{align}
We can now show \eqref{io_sys_x_is_io_sys_xi:xi_k} by mathematical induction.
As for the base case, we need to show that $(y(0), \dots, y(\ell-1), u(0), \dots, u(\ell-1)) = \xi(\ell)$.
Indeed,
\begin{align*}
\xi(\ell) = \hat{\xi} \overset{\eqref{selection_init_cond_sys_xi}}{=}
\left[
\begin{array}{c}
\mathcal{O}_\ell \hat{x} + \mathcal{T}_\ell 
\smat{
u(0)\\
\vdotsS \\
u(\ell-1)
}\\
\hline
\smat{
u(0)\\
\vdotsS \\
u(\ell-1)
}
\end{array}
\right]
\overset{\eqref{sol_sys_x:o_s_i_again}}{=}
\left[
\begin{array}{c}
\begin{smallmatrix}
y(0) \\
\vdotsS \\
y(\ell-1)
\end{smallmatrix}
\\
\hline
\begin{smallmatrix}
u(0) \\
\vdotsS \\
u(\ell-1)
\end{smallmatrix}
\end{array}
\right].
\end{align*}
As for the induction step, we suppose that, for $k \ge \ell$,
\begin{align}
\xi(k) =
\left[
\begin{array}{c}
\begin{smallmatrix}
\xi_1(k)\\
\vdotsS\\
\xi_\ell(k)
\end{smallmatrix}
\\
\hline
\begin{smallmatrix}
\xi_{\ell+1}(k)\\
\vdotsS\\
\xi_{\ell+\ell}(k)
\end{smallmatrix}
\end{array}
\right]
=
\left[
\begin{array}{c}
\begin{smallmatrix}
y(k-\ell) \\
\vdotsS \\ 
y(k-1)
\end{smallmatrix}
\\
\hline
\begin{smallmatrix}
u(k-\ell)\\
\vdotsS\\
u(k-1)
\end{smallmatrix}
\end{array}
\right]
\label{hp_ind_step}
\end{align}
and we need to show that $\xi(k+1) = \big( 
y(k-\ell+1), \dots, y(k), u(k-\ell+1), \dots, u(k)
\big)$.
We have
\begin{align*}
& \xi(k+1) = 
\left[
\begin{array}{c}
\begin{smallmatrix}
\xi_1(k+1) \\
\vdotsS\\
\xi_{\ell-1}(k+1) \\
\xi_{\ell}(k+1) \\
\end{smallmatrix}\\
\hline
\begin{smallmatrix}
\xi_{\ell+1}(k+1) \\
\vdotsS\\
\xi_{\ell+\ell-1}(k+1) \\
\xi_{\ell+\ell}(k+1) \\
\end{smallmatrix}
\end{array}
\right]
=
\mathbf{A}_\ell \xi(k) + \mathbf{B}_\ell v(k) \\
& \overset{\eqref{sys_xi:Aell},\eqref{sys_xi:A0_L_B}}{=}
\left[
\begin{array}{c}
\begin{smallmatrix}
\xi_2(k) \\
\vdotsS\\
\xi_{\ell}(k) \\
C A^\ell \mathcal{O}_\ell^{\tu{L}}
\smat{
\xi_1(k) \\
\vdotsS \\
\xi_{\ell}(k) \\
}
+
(
C \mathcal{R}_\ell -CA^\ell \mathcal{O}_\ell^{\tu{L}} \mathcal{T}_\ell
)
\smat{
\xi_{\ell+1}(k) \\
\vdotsS \\
\xi_{\ell+\ell}(k) \\
}
\\
\end{smallmatrix}\\
\hline
\begin{smallmatrix}
\xi_{\ell+2}(k) \\
\vdotsS\\
\xi_{\ell+\ell}(k) \\
v(k) \\
\end{smallmatrix}
\end{array}
\right] \notag \\
& \overset{\eqref{hp_ind_step}}{=}
\left[
\begin{array}{c}
\begin{smallmatrix}
y(k-\ell+1) \\
\vdotsS\\
y(k-1) \\
C A^\ell \mathcal{O}_\ell^{\tu{L}}
\smat{
y(k-\ell) \\
\vdotsS \\
y(k-1) \\
}
+
(
C \mathcal{R}_\ell -CA^\ell \mathcal{O}_\ell^{\tu{L}} \mathcal{T}_\ell
)
\smat{
u(k-\ell) \\
\vdotsS \\
u(k-1)\\
} 
\\
\end{smallmatrix}\\
\hline
\begin{smallmatrix}
u(k-\ell+1)\\
\vdotsS\\
u(k-1)\\
v(k) \\
\end{smallmatrix}
\end{array}
\right] \notag \\
& =
\left[
\begin{array}{c}
\begin{smallmatrix}
y(k-\ell+1) \\
\vdotsS\\
y(k-1) \\
C A^\ell \mathcal{O}_\ell^{\tu{L}}
\smat{
y(k-\ell) \\
\vdotsS \\
y(k-1) \\
}
+
(
C \mathcal{R}_\ell -CA^\ell \mathcal{O}_\ell^{\tu{L}} \mathcal{T}_\ell 
)
\smat{
u(k-\ell) \\
\vdotsS \\
u(k-1)\\
}
\\
\end{smallmatrix}\\
\hline
\begin{smallmatrix}
u(k-\ell+1)\\
\vdotsS\\
u(k-1)\\
u(k) \\
\end{smallmatrix}
\end{array}
\right]
\end{align*}
where the last equality follows from the statement since $v(j) = u(j)$ for all $j \ge \ell$.
From the last expression, the proof of the induction step is concluded if we show that
\begin{align*}
C A^\ell \mathcal{O}_\ell^{\tu{L}}
\smat{
\! y(k-\ell) \! \\
\! \vdotsS \! \\
\! y(k-1) \! \\
}
\! + \!
(
C \mathcal{R}_\ell - CA^\ell \mathcal{O}_\ell^{\tu{L}} \mathcal{T}_\ell 
)
\smat{
\! u(k-\ell) \! \\
\! \vdotsS \! \\
\! u(k-1) \! \\
} \! = \! y(k),
\end{align*}
which holds by \eqref{sol_sys_x:o_o_i} in Claim~\ref{claim:sol_sys_x}.
After proving~\eqref{io_sys_x_is_io_sys_xi:xi_k}, \eqref{io_sys_x_is_io_sys_xi:y_k} holds because, by Claim~\ref{claim:sol_sys_x} and \eqref{sol_sys_x:o_o_i}, $y$ satisfies, for each $k \ge \ell$,
$y(k)=Z_\ell \big(y(k-\ell),\dots,y(k-1),u(k-\ell),\dots,u(k-1) \big) \overset{\eqref{io_sys_x_is_io_sys_xi:xi_k}}{=} Z_\ell \xi(k)$.

\section{Proof of Lemma~\ref{lemma:data_relation}}
\label{app:proof_lemma_data_relation}

By \eqref{sol_sys_x:o_o_i} in Claim~\ref{claim:sol_sys_x}, the quantities of the experiment satisfy, for $k=\ells, \dots, T$,
\begin{align}
& y^{\tu{m}}(k) \! - \! d^y(k) \! = \! y(k) \! = \! 
Z_{1\ells}
\smat{
y(k-\ells)\\
\vdotsS \\
y(k-1)
}
+ Z_{2\ells}
\smat{
u(k-\ells)\\
\vdotsS \\
u(k-1)
} \notag\\
& \! = \! 
Z_{1\ells}
\smat{
y^{\tu{m}}(k-\ells) - d^y(k-\ells)\\
\vdotsS \\
y^{\tu{m}}(k-1) - d^y (k-1)
}
\! + \! Z_{2\ells}
\smat{
u^{\tu{m}}(k-\ell) - d^u(k-\ells)\\
\vdotsS \\
u^{\tu{m}}(k-1) - d^u(k-1) 
}\notag
\end{align}
and this is equivalent to~\eqref{exp_data_ymk}.
\eqref{exp_data_ymk} is equivalent, for $k=\ells, \dots, T$, to
\begin{align*}
& \left[
\begin{array}{c}
\begin{smallmatrix}
y^{\tu{m}}(k-\ells+1)\\
\vdotsS\\
y^{\tu{m}}(k-1)\\
y^{\tu{m}}(k)
\end{smallmatrix}\\
\hline
\begin{smallmatrix}
u^{\tu{m}}(k-\ells+1)\\
\vdotsS\\
u^{\tu{m}}(k-1)\\
u^{\tu{m}}(k)
\end{smallmatrix}
\end{array}
\right] \\
& =
\left[
\begin{array}{c}
\begin{smallmatrix}
y^{\tu{m}}(k-\ells+1)\\
\vdotsS\\
y^{\tu{m}}(k-1)\\
\left\{
\begin{smallmatrix}
Z_{1\ells}
\smat{
y^{\tu{m}}(k-\ells)\\
\vdotsS \\
y^{\tu{m}}(k-1)
}
+ Z_{2\ells}
\smat{
u^{\tu{m}}(k-\ells)\\
\vdotsS \\
u^{\tu{m}}(k-1)
}\\
-
Z_{1\ells}
\smat{
 d^y(k-\ells)\\
\vdotsS \\
d^y (k-1)
}
-Z_{2\ells}
\smat{
d^u(k-\ells)\\
\vdotsS \\
d^u(k-1) 
} +d^y(k)
\end{smallmatrix}
\right\}
\end{smallmatrix}\\
\hline
\begin{smallmatrix}
u^{\tu{m}}(k-\ells+1)\\
\vdotsS\\
u^{\tu{m}}(k-1)\\
u^{\tu{m}}(k)
\end{smallmatrix}
\end{array}
\right], \notag 
\end{align*}
which is also equivalent to \eqref{exp_data_veck} for $\mathbf{F}_\ells$, $\mathbf{L}_\ells$ and $\mathbf{B}_\ells$ as in~\eqref{sys_xi:A0_L_B}.
Finally, with the definitions in~\eqref{Astar}, \eqref{Bstar_d}, \eqref{Psi1_Psi0_U0_Delta10}, we stack \eqref{exp_data_veck} column after column from $k=\ells$ to $k=T$ and obtain \eqref{exp_data_mat}.

\section{Proof of Lemma~\ref{lemma:cons_asmpt_for_set_C}}
\label{app:proof_lemma_cons_asmpt_for_set_C}

If $\mathscr{A} \succ 0$, $\mathscr{Z}$ and $\mathscr{Q}$ in~\eqref{setC:ZQ} are well-defined and algebraic computations yield \eqref{setC_ZAQ} from~\eqref{setC_ABC}.
\eqref{exp_data_mat} in Lemma~\ref{lemma:data_relation} and $\Delta_{10} \in \mathcal{D}$ imply $Z_\ells \in \mathcal{C}$ and, by $Z_\ells \in \mathcal{C}$, \eqref{setC_ZAQ} yields that $\mathscr{Q} \succeq (Z_\ells - \mathscr{Z}) \mathscr{A} (Z_\ells - \mathscr{Z})^\top \succeq 0$ by $\mathscr{A} \succ 0$.
$\mathscr{A} \succ 0$ and $\mathscr{Q} \succeq 0$ allow applying \cite[Proposition~1]{AndreaPetersen2022} and obtaining \eqref{setC_unitBall} from~\eqref{setC_ZAQ}; they also allow applying \cite[Lemma 2]{AndreaPetersen2022} and showing with analogous arguments that the nonempty set $\mathcal{C}$ is bounded with respect to any matrix norm.

\section{Proof of Lemma~\ref{lemma:Petersen}}
\label{app:proof_lemma_Petersen}

\begingroup
\thinmuskip=.8mu plus 1mu minus 1mu
\medmuskip=1mu plus 1mu minus 1mu
\thickmuskip=1.2mu plus 1mu minus 1mu
By $P \succ 0$ and Schur complement, \eqref{rob_contr_probl_sys_xi:ineq} is equivalent to
\begin{align}\label{eq:Petersen1-1}
& \bmat{
- P & (\mathbf{F}_\ells + \mathbf{L}_\ells Z + \mathbf{B}_\ells \mathbf{K}) P\\
\star & -P
} \prec 0 \quad \forall Z \in \mathcal{C}. 
\end{align}
There exist $\mathbf{K}$ and $P = P^\top \succ 0$ satisfying \eqref{eq:Petersen1-1} if and only if there exist $\mathbf{Y}$ and $P = P^\top \succ 0$ satisfying
\begin{align*}
& \bmat{
- P & \mathbf{F}_\ells P + \mathbf{L}_\ells Z P + \mathbf{B}_\ells \mathbf{Y} \\
\star & -P
} \prec 0 \quad \forall Z \in \mathcal{C},
\end{align*}
with $\mathbf{Y} = \mathbf{K} P$.
By $\mathcal{C}$ in~\eqref{setC_unitBall}, which is obtained under Assumption~\ref{ass:Psi0} in Lemma~\ref{lemma:cons_asmpt_for_set_C}, the previous condition holds if and only if
\begin{align*}
0 \succ & 
\bmat{
- P & \mathbf{F}_\ells P + \mathbf{L}_\ells (\mathscr{Z} + \mathscr{Q}^{1/2} \Upsilon \mathscr{A}^{-1/2} ) P + \mathbf{B}_\ells \mathbf{Y} \\
\star & -P
} \\
= & \bmat{
- P & \mathbf{F}_\ells P + \mathbf{L}_\ells \mathscr{Z} P + \mathbf{B}_\ells \mathbf{Y} \\
\star & -P
} \\
& + \bmat{\mathbf{L}_\ells \mathscr{Q}^{1/2}\\
0}
\Upsilon
\bmat{
0\\
P \mathscr{A}^{-1/2}}^\top \\
& + \bmat{
0\\
P \mathscr{A}^{-1/2}}
\Upsilon^\top
\bmat{\mathbf{L}_\ells \mathscr{Q}^{1/2}\\
0}^\top \quad \forall \Upsilon \text{ with } \| \Upsilon \| \le 1.
\end{align*}
By Petersen's lemma \cite{petersen1987stabilization} in the version reported in~\cite[Fact~1]{AndreaPetersen2022}, this condition holds if and only if there exists $\lambda > 0$ such that
\begin{align*}
& 0 \succ 
\bmat{
- P & \mathbf{F}_\ells P + \mathbf{L}_\ells \mathscr{Z} P + \mathbf{B}_\ells \mathbf{Y} \\
\star & -P
} \\
& +
\frac{1}{\lambda}
\bmat{\mathbf{L}_\ells \mathscr{Q}^{1/2}\\
0}
\bmat{
\mathbf{L}_\ells \mathscr{Q}^{1/2}\\
0}^\top \! \! \!  +
\lambda
\bmat{
0\\
P \mathscr{A}^{-1/2}}
\bmat{
0\\
P \mathscr{A}^{-1/2}}^\top \\
& =
\bmat{
- P + \frac{1}{\lambda} \mathbf{L}_\ells \mathscr{Q} \mathbf{L}_\ells^\top & \mathbf{F}_\ells P + \mathbf{L}_\ells \mathscr{Z} P + \mathbf{B}_\ells \mathbf{Y} \\
\star & -P + \lambda P \mathscr{A}^{-1} P
}.
\end{align*}
The existence of $\mathbf{Y}$, $P = P^\top \succ 0$, $\lambda >0$ such that this condition holds is equivalent to the existence of $\mathbf{Y}$, $P = P^\top \succ 0$ such that
\begin{align}
\label{LMIinZQA}
& 0 \succ 
\bmat{
- P + \mathbf{L}_\ells \mathscr{Q} \mathbf{L}_\ells^\top & \mathbf{F}_\ells P + \mathbf{L}_\ells \mathscr{Z} P + \mathbf{B}_\ells \mathbf{Y} \\
\star & -P + P \mathscr{A}^{-1} P
}
\end{align}
as can be shown by multiplying or dividing by $\lambda >0$.
By the definitions of $\mathscr{Q}$ and $\mathscr{Z}$ in~\eqref{setC:ZQ}, this condition is equivalent to
\begin{align*}
& 0 \succ \\ 
& \!\! \bmat{
\!- P - \mathbf{L}_\ells  \mathscr{C} \mathbf{L}_\ells^\top & \mathbf{F}_\ells P + \mathbf{B}_\ells \mathbf{Y}\! \\
\star & -P
}\!\!
+\!\!
\bmat{
\mathbf{L}_\ells \mathscr{B}\\
-P
}
\mathscr{A}^{-1}\!\!\!\!
\bmat{
\mathbf{L}_\ells \mathscr{B}\\
-P
}^\top
\end{align*}
and, by $\mathscr{A} \succ 0$ from Assumption~\ref{ass:Psi0} and Schur complement, this inequality is equivalent to~\eqref{rob_contr_probl_LMI:ineq}.
\endgroup

\section{Proof of Lemma~\ref{lemma:io_(x,chi)_vs_xi}}
\label{app:proof_lemma_io_(x,chi)_vs_xi}

For $(A,C)$ observable, we need to show that for each $\hat{x}$, $\hat{\chi}$ and sequence $\{ u(k) \}_{k = 0}^\infty$,  (i)~the solution $\smat{x(\cdot)\\ \chi(\cdot)}$ to~\eqref{sys_x_chi} with initial condition $\smat{x(0)\\ \chi(0)} = \smat{\hat{x}\\ \hat{\chi}}$ and with input $\{ u(k) \}_{k = 0}^\infty$ and the corresponding output response $y(\cdot) = C x(\cdot)$ satisfy, for all $k \ge \ell$,
\begin{align*}
& \chi(k) = \big(y(k-\ell), \dots, y(k-1), u(k-\ell), \dots, u(k-1)\big) ;
\end{align*}
(ii)~there exists $\hat{\xi}$ such that such solution $\smat{x(\cdot)\\ \chi(\cdot)}$, the corresponding output response $y(\cdot) = C x(\cdot)$, and the solution $\xi(\cdot)$ to \eqref{sys_xi} with initial condition $\xi(\ell) = \hat{\xi}$ and input $\{ v(k) \}_{k = \ell}^\infty = \{ u(k) \}_{k = \ell}^\infty$ satisfy, for all $k \ge \ell$
\begin{align*}
& \xi(k) = \big(y(k-\ell), \dots, y(k-1), u(k-\ell), \dots, u(k-1)\big) .
\end{align*}
As for (ii), \eqref{sys_x_chi} has a ``triangular'' structure so that the component $x(\cdot)$ of the solution $\smat{x(\cdot)\\ \chi(\cdot)}$ to~\eqref{sys_x_chi} is also a solution to~\eqref{sys_x}.
Hence, Lemma~\ref{lemma:io_sys_x_is_io_sys_xi} applies and (ii) holds.
As for (i), we use mathematical induction. 
As for the base case we need to show that $\chi(\ell) = \big(y(0), \dots, y(\ell-1), u(0), \dots, u(\ell-1)\big)$.
Indeed, by~\eqref{sys_x_chi:chi},
\begin{align*}
\chi(1) \! = \! \left[
\begin{array}{c}
\begin{smallmatrix}
\!\! \hat{\chi}_2 \!\!\\
\vdotsS\\
\!\! \hat{\chi}_\ell \!\!\\
\!\!\!y(0)\!\!\!\\
\end{smallmatrix}\\
\hline
\begin{smallmatrix}
\!\!\!\hat{\chi}_{\ell+2}\!\!\! \\
\vdotsS\\
\!\!\!\hat{\chi}_{\ell+\ell}\!\!\!\\
\!\!u(0)\!\!\\
\end{smallmatrix}
\end{array}
\right]\!, ...,
\chi(\ell-1) \! =  \! \left[
\begin{array}{c}
\begin{smallmatrix}
\!\! \hat{\chi}_\ell \!\!\\
\!\!\!y(0)\!\!\!\\
\vdotsS\\
\!\!\!y(\ell-2)\!\!\!\\
\end{smallmatrix}\\
\hline
\begin{smallmatrix}
\!\!\!\hat{\chi}_{\ell+\ell}\!\!\!\\
\!\!u(0)\!\!\\
\vdotsS\\
\!\!\!u(\ell-2)\!\!\!\\
\end{smallmatrix}
\end{array}
\right]\!,
\chi(\ell) \! = \! \left[
\begin{array}{c}
\begin{smallmatrix}
\!\!\!y(0)\!\!\!\\
\vdotsS\\
\!\!\!y(\ell-1)\!\!\!\\
\end{smallmatrix}\\
\hline
\begin{smallmatrix}
\!\!u(0)\!\!\\
\vdotsS\\
\!\!\!u(\ell-1)\!\!\!\\
\end{smallmatrix}
\end{array}
\right]\!.
\end{align*}
As for the induction step, we suppose that, for $k\ge \ell$,
\begin{align}
\label{io_(x,chi)_vs_xi:ind_step_chik}
\chi(k)=
\left[
\begin{array}{c}
\begin{smallmatrix}
y(k-\ell)\\
\vdotsS\\
y(k-1)
\end{smallmatrix}\\
\hline
\begin{smallmatrix}
u(k-\ell)\\
\vdotsS\\
u(k-1)
\end{smallmatrix}
\end{array}
\right]
\end{align}
and we need to show that $\chi(k+1) = \big(y(k-\ell+1), \dots, y(k),u(k-\ell+1), \dots, u(k) \big)$.
We have
\begin{align*}
& \chi(k+1) \overset{\eqref{sys_x_chi:chi}}{=} \mathbf{F}_\ell \chi(k) + \mathbf{L}_\ell y(k) + \mathbf{B}_\ell u(k) \\
& = 
\left[
\begin{array}{c}
\begin{smallmatrix}
\chi_2(k)\\
\vdotsS\\
\chi_\ell(k)\\
y(k)
\end{smallmatrix}\\
\hline
\begin{smallmatrix}
\chi_{\ell+2}(k)\\
\vdotsS\\
\chi_{\ell+\ell}(k)\\
u(k)
\end{smallmatrix}
\end{array}
\right]
\overset{\eqref{io_(x,chi)_vs_xi:ind_step_chik}}{=}
\left[
\begin{array}{c}
\begin{smallmatrix}
y(k-\ell+1)\\
\vdotsS\\
y(k-1)\\
y(k)
\end{smallmatrix}\\
\hline
\begin{smallmatrix}
u(k-\ell+1)\\
\vdotsS\\
u(k-1)\\
u(k)
\end{smallmatrix}
\end{array}
\right]
\end{align*}
as we needed to show.

\section{Proof of Lemma~\ref{lemma:GAS_cl_aux_implies_GAS_cl}}
\label{app:proof_lemma_GAS_cl_aux_implies_GAS_cl}

Since $\mathbf{K}$ makes the matrix $\mathbf{A}_\ell+ \mathbf{B}_\ell \mathbf{K}$ Schur, $\xi = 0$ is globally asymptotically stable for
\begin{subequations}
\label{sys_xi_CL}
\begin{align}
\xi^+ & = \mathbf{A}_\ell \xi + \mathbf{B}_\ell v \label{sys_xi_CL:xi}\\
v & = \mathbf{K} \xi. \label{sys_xi_CL:u}
\end{align}
\end{subequations}
For each $\hat{x}$, $\hat{\chi}$, the solution $\smat{x(\cdot)\\ \chi(\cdot)}$ to the closed-loop system~\eqref{sys_x_chi_CL} with initial condition $\smat{x(0)\\ \chi(0)} = \smat{\hat{x}\\ \hat{\chi} }$ has $\{ u(k) \}_{k = 0}^\infty = \{ \mathbf{K} \chi(k) \}_{k = 0}^\infty$.
From Lemma~\ref{lemma:io_(x,chi)_vs_xi}, for each $\hat{x}$, $\hat{\chi}$ and the sequence $\{ u(k) \}_{k = 0}^\infty = \{ \mathbf{K} \chi(k) \}_{k = 0}^\infty$, there exists $\hat{\xi}$ such that: the solution $\smat{x(\cdot)\\ \chi(\cdot)}$ to~\eqref{sys_x_chi_CL:x}-\eqref{sys_x_chi_CL:chi} with initial condition $\smat{x(0)\\ \chi(0)} = \smat{\hat{x}\\ \hat{\chi} }$ and with input $\{ u(k) \}_{k = 0}^\infty = \{ \mathbf{K} \chi(k) \}_{k = 0}^\infty$, and the solution $\xi(\cdot)$ to~\eqref{sys_xi_CL:xi} with initial condition $\xi(\ell) = \hat{\xi}$ and input $\{ v(k) \}_{k = \ell}^\infty = \{ \mathbf{K} \chi(k) \}_{k = \ell}^\infty$ satisfy
\begin{align}
\label{GAS_cl_aux_implies_GAS_cl:xik=chik}
\xi(k) = 
\left[
\begin{array}{c}
\begin{smallmatrix}
y(k-\ell)\\
\vdotsS\\
y(k-1)
\end{smallmatrix}\\
\hline
\begin{smallmatrix}
u(k-\ell)\\
\vdotsS\\
u(k-1)
\end{smallmatrix}
\end{array}
\right]
= \chi(k), \quad \forall k \ge \ell.
\end{align}
In other words, we have considered an arbitrary initial condition $\smat{x(0)\\ \chi(0)} = \smat{\hat{x}\\ \hat{\chi} }$ and the input $\{ u(k) \}_{k = 0}^\infty = \{ \mathbf{K} \chi(k) \}_{k = 0}^\infty$ for the open-loop system \eqref{sys_x_chi_CL:x}-\eqref{sys_x_chi_CL:chi} and obtained a solution $\smat{x(\cdot)\\ \chi(\cdot)}$;
Lemma~\ref{lemma:io_(x,chi)_vs_xi} allows us to associate, with this solution, a solution $\xi(\cdot)$ to the open-loop system \eqref{sys_xi_CL:xi} with some initial condition $\xi(\ell) = \hat{\xi}$ and input $\{ v(k) \}_{k = \ell}^\infty = \{ \mathbf{K} \chi(k) \}_{k = \ell}^\infty$ such that $\xi(k) = \chi(k)$ for all $k \ge \ell$.
As a consequence, $\{ v(k) \}_{k = \ell}^\infty = \{ \mathbf{K} \chi(k) \}_{k = \ell}^\infty = \{ \mathbf{K} \xi(k) \}_{k = \ell}^\infty$.
Hence, for some initial condition $\xi(\ell) = \hat{\xi}$, $\xi(\cdot)$ is also a solution to the closed-loop system \eqref{sys_xi_CL}.
Thanks to the assumption of $\mathbf{A}_\ell+\mathbf{B}_\ell\mathbf{K}$ being Schur, $\xi(\cdot)$ converges asymptotically to $0$.
By~\eqref{GAS_cl_aux_implies_GAS_cl:xik=chik}, if $k\mapsto\xi(k)$ converges asymptotically to $0$, so do $k \mapsto \smat{y(k-\ell)\\ \vdotsS\\ y(k-1)}$, $k \mapsto \smat{u(k-\ell)\\ \vdotsS\\ u(k-1)}$ and $k \mapsto \chi(k)$.
By $(A,C)$ observable and \eqref{sol_sys_x:s_o_i} in Claim~\ref{claim:sol_sys_x}, for each $k \ge \ell$,
\begin{align*}
x(k) =
A^\ell
\mathcal{O}_\ell^{\tu{L}}
\smat{
y(k-\ell)\\
\vdotsS \\
y(k-1)
}
+
(\mathcal{R}_\ell -A^\ell \mathcal{O}_\ell^{\tu{L}} \mathcal{T}_\ell )
\smat{
u(k-\ell)\\
\vdotsS \\
u(k-1)
}
\end{align*} 
and so also $k \mapsto x(k)$ converges asymptotically to $0$.
In summary, we have shown that for each $\hat{x}$, $\hat{\chi}$ the solution $\smat{x(\cdot)\\ \chi(\cdot)}$ to~\eqref{sys_x_chi_CL} with initial condition $\smat{x(0)\\ \chi(0)} = \smat{\hat{x}\\ \hat{\chi} }$ converges asymptotically to $0$.
For linear time-invariant systems, attractivity implies global attractivity and \cite[Prop.~4]{lewis2017remarks} stability.
Then, we can conclude that $(x,\chi)=0$ is globally asymptotically stable for~\eqref{sys_x_chi_CL}.

\section{Proof of Lemma~\ref{lemma:implicationSNRasmpt}}
\label{app:proof_lemma_implicationSNRasmpt}

Let $\{ u(k) \}_{k=0}^{T-1}$ be the data-collection input sequence applied to~\eqref{sys_x_star}, which generates the state and output sequences $\{ x(k) \}_{k=0}^{T-\ells}$ and $\{ y(k) \}_{k=0}^{T-1}$.
Let
\begin{align*}
W_0 := 
\left[
\begin{array}{c}
\begin{smallmatrix}
~x(0)~ & ~x(1) & \dots & x(T-\ells)\\
\end{smallmatrix}\\
\hline
\begin{smallmatrix}
u(0) & u(1) & \dots & u(T-\ells)  \rule{0pt}{8pt}\\
\vdotsS & \vdotsS &  & \vdotsS\\
u(\ells-1) & u(\ells) & \dots & u(T-1)
\end{smallmatrix}
\end{array}
\right].
\end{align*}
By the definition of solution to the linear system~\eqref{sys_x_star} and the definitions of $\mathcal{O}_{\ells}$ and $\mathcal{T}_{\ells}$, data satisfy
\begin{align*}
\smat{
y(0)\\
\vdotsS \\
y(\ells-1)
}
& =
\mathcal{O}_\ells x(0)
+
\mathcal{T}_\ells 
\smat{
u(0)\\
\vdotsS \\
u(\ells-1)
}, \cdots \\
\smat{
y(T-\ells)\\
\vdotsS \\
y(T-1)
}
& =
\mathcal{O}_\ells x(T-\ells)
+
\mathcal{T}_\ells 
\smat{
u(T-\ells)\\
\vdotsS \\
u(T-1)
}, \\
\smat{
u(0)\\
\vdotsS \\
u(\ells-1)
} & = \smat{
u(0)\\
\vdotsS \\
u(\ells-1)
}, \dots, 
\smat{
u(T-\ells)\\
\vdotsS \\
u(T-1)
} = \smat{
u(T-\ells)\\
\vdotsS \\
u(T-1)
}.
\end{align*}
These relations can be assembled as
\begin{align}
\label{S0=[Ol,Tl;0,I]...}
S_0 & = \bmat{\mathcal{O}_\ells & \mathcal{T}_\ells \\ 0 & I_{m\ells}} W_0 \in \real^{(p \ells + m \ells) \times( T-\ells+1)}.
\end{align}
Based on this relation, the proof can be carried out in three steps.\newline
\textit{Step 1: $\Psi_0 \Psi_0^{\top} \succ \Theta_{22} \implies \rank S_0 = (p+m)\ells$.} 
Suppose by contradiction that $\Psi_0 \Psi_0^{\top} \succ \Theta_{22}$ but $S_0$ has not full row rank.
Then, there exists a nonzero vector $v\in \real^{p \ells + m \ells}$ such that $v^\top S_0=0$.
We have
\begin{align}
\label{eq:asspimply}
& v^\top \Psi_0 \Psi_0^{\top} v \overset{\eqref{Psi0=S0+N0}}{=} 
v^\top (S_0 + N_0) (S_0 + N_0)^{\top} v \notag\\
& = v^\top N_0 N_0^{\top} v > v^\top \Theta_{22} v.
\end{align} 
by $\Psi_0 \Psi_0^{\top} \succ \Theta_{22}$.
On the other hand, the definitions of $\Delta_{10}$ in~\eqref{Delta10} and $N_0$ in~\eqref{N0}, and $\Delta_{10} \in \mathcal{D}$ yield
\begin{align*}
\Delta_{10} \Delta_{10}^\top =
\smat{
d^y(\ells)\,\cdots\, d^y(T) \\
\hline
N_0
}
\smat{
d^y(\ells)\,\cdots\, d^y(T) \\
\hline
N_0
}^\top
\! \preceq
\smat{
\Theta_{11} & \Theta_{12}\\
\Theta_{12}^\top & \Theta_{22}
}\!\!,
\end{align*}
which implies $N_0 N_0^\top \preceq \Theta_{22}$.
This is contradicted by~\eqref{eq:asspimply}, so $S_0$ has full row rank.
\newline
\textit{Step 2: $\rank S_0 = (p+m)\ells \implies \rank \mathcal{O}_{\ells} = p \ells$.}
From~\eqref{S0=[Ol,Tl;0,I]...}, we have that $\smat{\mathcal{O}_\ells & \mathcal{T}_\ells \\ 0 & I_{m\ells}}$ must have full row rank, i.e., its rank is $(p+m)\ells$.
We have
\begin{align*}
& (p+m) \ells = \rank \smat{\mathcal{O}_\ells & \mathcal{T}_\ells \\ 0 & I_{m\ells}} \\
& = \rank \left( \smat{I_{p\ells} & \mathcal{T}_\ells \\ 0 & I_{m\ells}} \smat{\mathcal{O}_\ells & 0_{p\ells \times m\ells} \\ 0_{m\ells \times n} & I_{m\ells}} \right)  \\
& = \rank \mathcal{O}_\ells + m\ells.
\end{align*}
Hence, $p\ells =\rank \mathcal{O}_\ells$.
\newline
\textit{Step 3: $\rank \mathcal{O}_{\ells} = p \ells \implies p\ells=n$.} 
By observability of $(\As,\Cs)$, $\rank \mathcal{O}_\ells=n$ and $p\ells = n$.

\section{Proof of Lemma~\ref{lemma:SNR}}
\label{app:proof_lemma_SNR}

Since $S_0 = \Psi_0 + (-N_0)$ by~\eqref{Psi0=S0+N0}, \cite[(7.3.13)]{MatrixAnalysis2013} yields
\begin{align}
\sigma_{\min} (\Psi_0) & \geq \sigma_{\min}(S_0) - \sigma_{\max}(N_0) \notag \\
& \overset{\eqref{ratio_sing_values}}{>} 2 \sqrt{\sigma_{\max}(\Theta_{22})} - \sigma_{\max}(N_0). \label{part_res_SNR}
\end{align}
$\Delta_{10} \overset{\eqref{Delta10}}{=} \smat{
d^y(\ells)\,\cdots\, d^y(T) \\
\hline
N_0
} \in \mathcal{D}$ implies $N_0 N_0^\top \preceq \Theta_{22}$; 
since $\Theta_{22} = \Theta_{22}^\top$ and $N_0 N_0^\top \preceq \Theta_{22}$, \cite[Cor.~7.7.4(c)]{MatrixAnalysis2013} yields $\lambda_{\max}(N_0 N_0^\top) \le \lambda_{\max}(\Theta_{22})$ or, equivalently, $\sigma_{\max}(N_0) \le \sqrt{\sigma_{\max}(\Theta_{22})}$. 
By using this condition in~\eqref{part_res_SNR}, we obtain $\sigma_{\min} (\Psi_0) > \sqrt{\sigma_{\max}(\Theta_{22})}$ or, equivalently, $\sigma_{\min}(\Psi_0 \Psi_0^\top) > \sigma_{\max} (\Theta_{22})$.
Hence,
\begin{align*}
\Psi_0 \Psi_0^\top \succeq \sigma_{\min}(\Psi_0 \Psi_0^\top) I \succ \sigma_{\max}(\Theta_{22}) I \succeq \Theta_{22}.
\end{align*}
Since $\Psi_0 \Psi_0^\top \succ \Theta_{22}$, Assumption~\ref{ass:Psi0} holds.

\section{Proof of Lemma~\ref{lemma:assumptions_implication+obs_idx}}
\label{app:proof_lemma_assumptions_implication+obs_idx}

From the proof of Lemma~\ref{lemma:implicationSNRasmpt}, we note that
\begin{align*}
& \Psi_0 \Psi_0^\top\! \succ \! \Theta_{22} \!\!\implies\!\! \rank S_0 \!= \!(p+m)\ells  \!\!\implies\!\! \rank \mathcal{O}_{\ells} \!\!=\! p \ells
\end{align*}
hold independently of the observability of $(\As, \Cs)$ and the fact that $\ells$ is the observability index of $(\As, \Cs)$.
Hence, from analogous arguments to those in the proof of Lemma~\ref{lemma:implicationSNRasmpt}, $\Psi_0^{\aug} {\Psi_0^{\aug}}^\top \succ \Theta_{22}^{\aug}$ implies that $\rank \mathcal{O}_{\ells}^{\aug} = p \ells$.
This equality implies that $\mathcal{O}_{\ells}^{\aug}$ in~\eqref{eq:obserNonsinguAug} is nonsingular, since $\mathcal{O}_{\ells}^{\aug}$  is square by construction. 
Since $\mathcal{O}_{\ells}^{\aug}$ has full rank, the pair $(A_{\aug}, C_{\aug})$ is observable. 
Moreover, again since $\mathcal{O}_{\ells}^{\aug}$ is square and nonsingular, the observability index of $(A_{\aug}, C_{\aug})$ is, by its definition, equal to $\ells$.

\bibliographystyle{ieeetr}
\bibliography{mybib}

\end{document}